\documentclass{article} 
\usepackage{iclr2026_conference,times}

\usepackage{amsfonts,mathrsfs}
\usepackage{psfrag,epsf,url,wrapfig,subfig}
\usepackage{enumerate,color,centernot}
\usepackage{subcaption}
\usepackage{lipsum}
\usepackage{xcolor}
\usepackage{tikz}
\usepackage{bbm}
\usepackage[linesnumbered,ruled,vlined]{algorithm2e}

\usepackage{hyperref}
\usepackage{url,subfig}
\usepackage{graphicx}
\usepackage{amsmath,amsthm,amssymb,amsfonts}
\usepackage[italicdiff]{physics}
\usepackage[T1]{fontenc}
\usepackage{wrapfig}
\usepackage{lmodern,mathrsfs}
\usepackage[inline,shortlabels]{enumitem}
\usepackage[thinc]{esdiff}
\usepackage{mathtools}
\usepackage{microtype}
\usepackage{graphicx,wrapfig}
\usepackage{booktabs}
\usepackage{comment}
\usepackage{relsize}
\usepackage{adjustbox}
\usepackage{float}
\usepackage{multirow}
\usepackage{epsfig,wrapfig,graphicx} 
\usepackage{threeparttable,multirow}
\usepackage{color}
\usepackage{adjustbox,booktabs,arydshln,multirow,url}
  {\begin{Sbox}\begin{minipage}}%
  {\end{minipage}\end{Sbox}\fbox{\TheSbox}}

\usepackage{hyperref}
\usepackage{url}

\usepackage{tikz}
\newcommand{\bX}{{\boldsymbol X}}
\newcommand{\bY}{{\boldsymbol Y}}

\newcommand{\by}{{\boldsymbol y}}
\newcommand{\bx}{{\boldsymbol x}} 
\newcommand{\bz}{{\boldsymbol z}}
\newcommand{\bZ}{{\boldsymbol Z}}

\newcommand{\be}{{\boldsymbol e}}

\newcommand{\bA}{{\boldsymbol A}}

\newcommand{\ba}{{\boldsymbol a}}
\newcommand{\bD}{{\boldsymbol D}}

\newcommand{\bC}{{\boldsymbol C}}

\newcommand{\bM}{{\boldsymbol M}}

\newcommand{\E}{\mathbb{E}}

\newcommand{\mM}{{\cal M}}

\newcommand{\mL}{{\cal L}}

\newcommand{\mZ}{\frak{Z}}

\newcommand{\btheta}{{\boldsymbol \theta}}

\newcommand{\bsigma}{{\boldsymbol \sigma}}

\newcommand{\indep}{\rotatebox[origin=c]{90}{$\models$}}

\newtheorem{theorem}{Theorem}
\newtheorem{lemma}{Lemma}

\newtheorem{remark}{Remark}
\newtheorem{assumption}{Assumption}

\title{Stochastic Neural Networks for Causal Inference with
Missing Confounders}

\author{Yaxin Fang \thanks{The code of the experiments is available at: \url{https://github.com/nixay/Stochastic-Neural-Networks-for-Causal-Inference-with-Missing-Confounders}}\\
Department of Anesthesiology, Perioperative and Pain Medicine \\
Stanford University\\
Palo Alto, CA 94305, USA \\
\texttt{yxfang@stanford.edu} \\
\And
Faming Liang \\
Department of Statistics \\
Purdue University \\
West Lafayette, IN 47907, USA \\
\texttt{fmliang@purdue.edu} \\
}

\iclrfinalcopy 
\begin{document}

\maketitle

\begin{abstract}
Unmeasured confounding is a fundamental obstacle to causal inference from observational data. Latent-variable methods address this challenge by imputing unobserved confounders, yet many lack explicit model-based identification guarantees and are difficult to extend to richer causal structures. We propose Confounder Imputation with Stochastic Neural Networks (CI-StoNet), which parameterizes the conditional structure of a causal directed acyclic graph using a stochastic neural network and imputes latent confounders via adaptive stochastic-gradient Hamiltonian Monte Carlo. Under SUTVA and overlap, and assuming that the structural components of the data-generating process are well approximated by a capacity-controlled sparse deep neural network class, we establish model identification and consistent estimation of the mean potential outcome under a fixed intervention within this class. Although the latent confounder is identifiable only up to reparameterizations that preserve the joint treatment–outcome distribution, the causal estimand is invariant across this observationally equivalent class. We further characterize the effect of overlap on estimation accuracy. Empirical results on simulated and benchmark datasets demonstrate accurate performance, and the framework extends naturally to proxy-variable and multiple-cause settings with overlap diagnostics and bootstrap-based uncertainty quantification.
\end{abstract}

\section{Introduction}

Causal inference from observational studies is a topic of significant interest in fields such as genetics, economics, and social science. Under the potential outcome framework \citep{Rubin1974EstimatingCE}, a fundamental condition for identifying causal effects is the strong  ignorability condition \citep{Rosenbaum1983TheCR}:
\begin{equation} \label{ignorability}
\bA \ \indep \ \{\bY(\ba): \ba \in \mathcal{A} \} \ | \ \bZ,
\end{equation}
where $\bA$ denotes the treatment variable taking values in the 
space $\mathcal{A}$, $\bY(\cdot)$ denotes the outcome function, and $\bZ$ denotes confounders. A confounder refers to a variable that influences both the treatment and the outcome.
The strong ignorability assumption requires that all confounders be observed. In observational studies, this condition is seldom fully met, thereby introducing the risk of substantial bias in causal effect estimation.

One strategy to address the issue of missing confounders  
is to model them as  latent variables. 
\citet{Wang2018TheBO} proposed using a latent factor model to obtain a latent representation for multiple causes, enabling the capture of multiple-cause confounders under the assumption that no single-cause confounder exists. 
 \citet{Kallus2018} tackled this problem under a proxy variable setting 
by leveraging the low-rank components of the proxy variables, obtained 
through matrix factorization, as an approximation to the true confounders. 
\citet{Louizos2017CausalEI} also addressed the issue with proxy variables and 
introduced the causal effect variational autoencoder (CEVAE) to infer  missing confounders from the observational distribution of 
the proxy, treatment, and outcome.
These works represent significant advancements in causal inference using observational data; however, they have notable limitations.
For instance, 
\citet{ImaiJiang2019} pointed out that \citet{Wang2018TheBO} 
essentially models the substitute confounder as a deterministic function of 
 treatments, leading it to converge to a function of 
the observed treatments rather than the true confounder. 
\citet{Kallus2018} focuses primarily on the linear regression setting, 
limiting its applicability to nonlinear models 
unless many proxies are available for a small number of latent variables.
\citet{RissanenMarttinen2022} examined the consistency of the causal effect estimator in \citet{Louizos2017CausalEI} and showed that it
fails to correctly estimate cause effects when the latent variable
is misspecified or the data distribution is overly complex.

In this paper, we propose a latent variable imputation approach to address the issue of missing confounders. This new approach is built on the stochastic neural network (StoNet) \citep{SunLiang2022kernel,SDR_StoNet} and sparse deep learning theory \citep{SunSLiang2021}, effectively overcoming the limitations of existing approaches. 

The core idea is to encode the causal DAG’s Markov factorization in StoNet and to impute latent confounders through the resulting conditional distributions. Importantly, our theoretical guarantee is model-based: under SUTVA and overlap, and assuming the treatment and outcome mechanisms lie in a function class that can be well approximated by sparse DNNs, the causal functional 
$\mathbb{E}[Y(\ba)]$ is uniquely determined and is consistently estimable.

The StoNet is trained using an adaptive stochastic gradient MCMC algorithm \citep{SDR_StoNet,Deng2019adaptive}, which allows for the simultaneous imputation of missing confounders and estimation of sparse StoNet  parameters. We refer to the proposed approach as {\it Confounder Imputation with Stochastic Neural Networks} (CI-StoNet). In summary, it offers the following advantages in addressing the missing confounder issue:

(i) {\bf Accurate Causal Effect Estimation:} This property is supported by StoNet's inherent ability to handle missing data, the consistency of sparse deep learning, and the convergence guarantee offered by the adaptive stochastic gradient MCMC algorithm.

(ii) \textbf{Complex nonlinear modeling.} CI-StoNet inherits the universal\mbox{-}approximation property of deep neural networks (DNNs), enabling effective modeling of complex nonlinear relationships across diverse applications.

(iii) \textbf{Structural flexibility.} The Markovian architecture of CI-StoNet provides structural flexibility for representing diverse dependency patterns in causal  DAGs. It supports localized updates to each DNN module, promoting modular design and easy adaptation to varying causal relationships.

\vspace{-0.15in}
\section{CI-StoNet for Missing Confounders} 

\vspace{-0.1in}
\subsection{The CI-StoNet Approach}
\label{CI-StoNet-with-missing-confounders}
This section introduces the CI-StoNet approach or, more generally, a deep learning framework for performing causal inference in presence of missing confounders.

\begin{wrapfigure}{r}{0.35\textwidth}
\vspace{-0.3in}
         \centering
\includegraphics[width=0.25\textwidth]{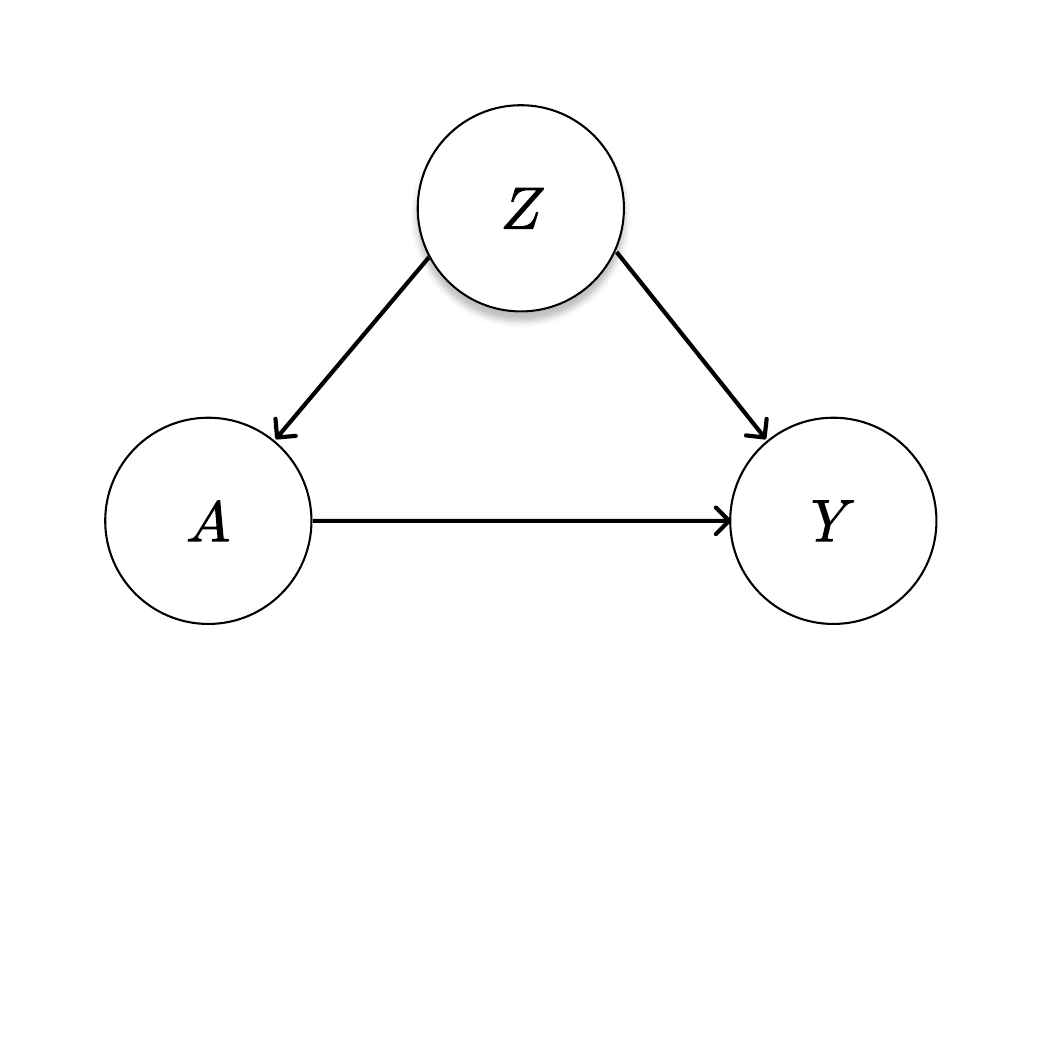}
         \vspace{-0.5in}
         \caption{simple confounding}
         \label{fig:simple_confoundingA}
         \vspace{-0.15in}
\end{wrapfigure}

Consider the scenario of simple confounding,  
as depicted by Figure \ref{fig:simple_confoundingA}, which 
involves treatment $\bA \in \{\ba_1,\ldots,\ba_m\}$, 
missing/latent confounders $\bZ$, and an outcome $\bY$. 
The corresponding model is given by
\begin{equation} \label{eq:modelI}
 \begin{split}
    \bA &= g_1(\bZ,\be_a), \\
    \bY &= g_2(\bZ, \bA) + \be_y,\\
    \end{split}
\end{equation}
where $g_1(\cdot)$ and $g_2(\cdot)$ are unknown functions that can be nonlinear and highly complex, and 
$\be_a$ and $\be_y$ are random errors.
In this paper, we assume   $\be_y \sim N(0,\sigma_y^2 I_{d_y})$, where $d_y$ denotes the dimension of $\bY$. 
 There is flexibility in specifying  the distribution of $\be_a$. 
 If $\bA$ is continuous, then $\be_a$ can be assumed to follow a Gaussian distribution or any other continuous distribution. If $\bA$ is mixed, the distribution of each component of $\be_a$ can be specified accordingly.
This scenario has included  multiple causes considered in \citet{Wang2018TheBO}, 
where $\ba_i$ is a multi-dimensional vector, as a special case.  
Since $\bZ$ is missing, we impute it from the conditional distribution:
\begin{equation} \label{missingCeq1} 
\begin{split}
\pi(\bZ|\bA,\bY) & \propto \pi(\bZ) \pi(\bA|\bZ) \pi(\bY|\bZ,\bA)
\propto \pi(\bZ|\bA) \pi(\bY|\bZ,\bA),
\end{split}
\end{equation} 
where $\pi(\cdot)$  denotes a distribution or conditional distribution in the appropriate context.

\begin{wrapfigure}{r}{0.6\textwidth}
\centering
\includegraphics[scale=0.375]{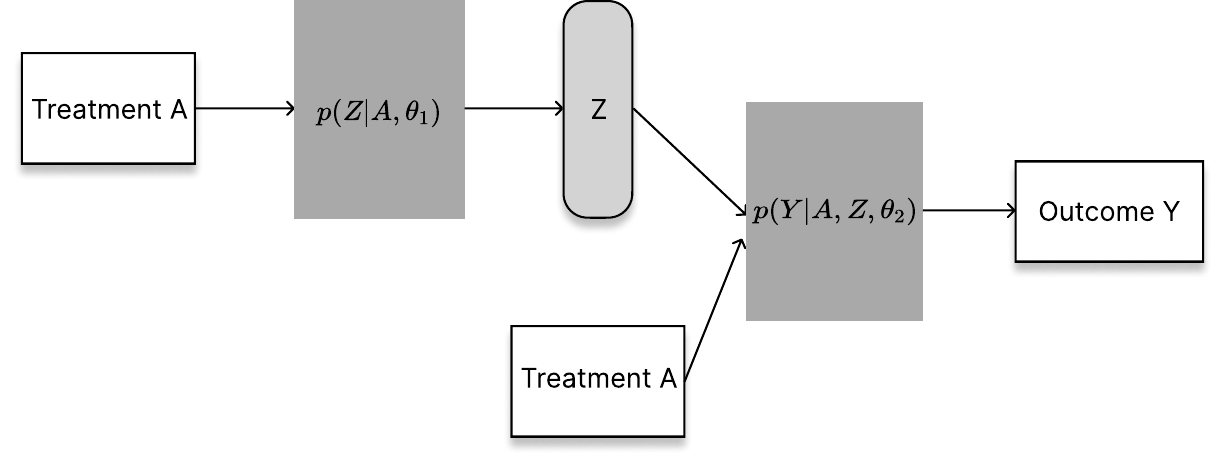}
 \caption{Diagram of CI-StoNet under simple confounding, where white rectangles  represent variables from observed data; light-grey rounded-rectangles  represent latent variable to impute; and dark-grey rectangles represent 
  neural network modules to learn  respective conditional distributions.}
\label{fig:CI-StoNet-multiple-treat}
\vspace{-0.15in}
\end{wrapfigure}

The latter part of Eq.~(\ref{missingCeq1}) suggests that, mathematically, $\bA$, $\bZ$, and $\bY$ 
can be interpreted as the exogenous input, latent state and output of a stochastic  model. Motivated by this view, 
 we propose to perform the imputation    using a CI-StoNet (see 
Figure \ref{fig:CI-StoNet-multiple-treat} for its structure), formulated as: 
\begin{equation}\label{eq:stonet-multiple-treatment}
    \begin{split}
    \bZ &= \mu_1(\bA, \btheta_1) + \be_z, \\
    \bY &= \mu_2(\bZ, \bA, \btheta_2) + \be_y,\\
    \end{split}
\end{equation}
where $\mu_1(\cdot)$ and $\mu_2(\cdot)$ are two neural network functions, parameterized by $\btheta_1$ and 
$\btheta_2$, respectively; $\be_z \sim N(\boldsymbol{0}, \sigma^2_z I_{d_z})$; and $\be_y$ is as defined in (\ref{eq:modelI}).
The two neural networks are interconnected through the latent variable $\bZ$.
Additionally, we impose the following assumptions on the models (\ref{eq:modelI}) and (\ref{eq:stonet-multiple-treatment}):  
\begin{assumption} \label{ass:0}
(i) $\be_a \indep \be_y$, $\be_a \indep \bZ$, 
 $\be_y \indep (\bZ,\bA)$;  
 (ii) there exist sparse DNNs   
$\mu_1(\cdot)$ and $\mu_2(\cdot)$ such that (\ref{eq:stonet-multiple-treatment}) 
holds, $\be_z \sim N({\bf 0}, \sigma_z^2 I_{d_z})$, 
$\be_y \sim N(0,\sigma_y^2 I_{d_y})$, 
$\be_z \indep \be_y$, and $\be_z \indep \bA$;
\end{assumption} 
Part (i) of Assumption \ref{ass:0} ensures that strong ignorability 
(\ref{ignorability}) holds. Part (ii) specifies the working model class by restricting the treatment and outcome mechanisms to a sparse DNN family (with $\bZ$ random). This structural restriction enables tractable theoretical analysis and defines the pseudo-true parameter $\btheta^*$ that minimizes the KL divergence from the true data-generating law. See further details in \ref{model_based_ID}.

Notably, the functional expression in (\ref{eq:stonet-multiple-treatment}) 
 does not imply a causal mechanism $\bA \to \bZ$. 
For example, rain ($\bZ$) causes a wetland ($\bA$), but a wetland does not cause rain.
Similarly, $\bZ$ cannot be interpreted as a mediator due to the nonexistence of a causal mechanism $\bA \to \bZ$, although (\ref{eq:stonet-multiple-treatment}) has a mathematical structure similar to mediation models.
For the time being, we assume that there 
is no mediator in the causal pathway between the treatment $\bA$ and the outcome $\bY$,  thereby ruling out any potential misinterpretation for the role of $\bZ$.
However, if a mediator does exist, issues related to the total causal effect estimation and the interpretation of $\bZ$ will be addressed at the end of  Section~\ref{sect:theory00}.

Under the missing data framework, the CI-StoNet can be trained by solving 
the following equation, which represents a Bayesian version of 
Fisher's identity \citep{SongLiang2020eSGLD}: 
\begin{equation} \label{Fishereq0}
\begin{split}
\nabla_{\btheta} \log \pi(\btheta|\bA,\bY) & =\int \nabla_{\btheta} \log \pi(\btheta|\bZ,\bA,\bY)  \pi(\bZ|\bA,\bY,\btheta) d\bZ, 
\end{split} 
\end{equation}
where $\btheta=\{\btheta_1,\btheta_2\}$, $\bZ$ is missing, $\pi(\btheta|\bZ,\bA,\bY) \propto \pi(\btheta_1) \pi(\btheta_2) \pi(\bZ|\bA,\btheta_1) \pi(\bY|\bZ,\bA,\btheta_2)$, 
$\pi(\bZ|\bA,\bY,\btheta) \propto \pi(\bZ|\bA,\btheta_1) \pi(\bY|\bZ,\bA,\btheta_2)$, 
and $\pi(\btheta_1)$ and $\pi(\btheta_2)$ denote the prior distributions 
imposed on $\btheta_1$ and $\btheta_2$, respectively. 
In this paper, we assume that the components of $\btheta$ are {\it a priori} 
independent and are subject to the following mixture Gaussian prior \citep{SunSLiang2021}: 
\begin{equation} \label{eq:mixtureprior}
\pi(\btheta)= \prod_{i=1}^{K_n} (1-\lambda_n) \phi(\btheta_i/\sigma_0)+ \lambda_n \phi(\btheta_i/\sigma_1), 
\end{equation}
where $\lambda_n$ is the mixture proportion, $K_n$ is the total number 
of connections in the StoNet (i.e., the dimension of $\btheta$), 
$\phi(\cdot)$ represents the 
density function of the standard normal distribution, and $\sigma_0$ and $\sigma_1$ are the standard deviations of the two Gaussian components, respectively. 

The identity (\ref{Fishereq0}) further suggests that the target equation
\begin{equation} \label{Fishereq}
\nabla_{\btheta} \log \pi(\btheta|\bA,\bY)=0,
\end{equation}
can be solved using an adaptive stochastic gradient MCMC algorithm, which iteratively alternates between latent variable imputation and parameter updates.
In this paper, we employ the adaptive stochastic gradient Hamiltonian Monte Carlo (SGHMC) \citep{SDR_StoNet},  as given in Algorithm \ref{algo:simple-confounding}, to solve equation (\ref{Fishereq}). 

\begin{algorithm} 
\caption{Adaptive SGHMC}  
\label{algo:simple-confounding}
\begin{itemize}
\item[0.] Set the prior hyperparameters:  $\lambda_n$, $\sigma_0$, and $\sigma_1$. 
\item[1.] ({\it Latent variable imputation}) Simulate $\bZ$ from   $\pi(\bZ|\bA,\bY,\btheta)$ via Hamiltonian Monte Carlo updates: 
\[
\small
\begin{split}
 \mathbf{v}^{(k+1)}  & = (1 - \epsilon_{k+1} \eta) \mathbf{v}^{(k)} 
                    + \epsilon_{k+1} \nabla_{\bZ} \log \pi(\bZ^{(k)}|\bA,\btheta_1^{(k)}) + \epsilon_{k+1} \nabla_{\bZ} \log \pi(\bY|\bZ^{(k)},\bA,\btheta_2^{(k)})) \\
               &\quad     + \sqrt{2 \epsilon_{k+1} \eta} \mathbf{e}^{(k+1)}, \\
 \mathbf{Z}^{(k+1)} & = \mathbf{Z}^{(k)} + \epsilon_t \mathbf{v}^{(k)},
\end{split}
\]
where $\be^{(k+1)} \sim N(0,I_{d_z})$, $d_z$ is the dimension of $\bZ$, and $\epsilon_{k+1}$ is the learning rate. 

\item[2.] ({\it Parameter update}) Given $\bZ^{(k+1)}$, update $\btheta_1$ and $\btheta_2$ separately: 
\[
 \label{SGHMCeq001}
 \begin{split} 
  \btheta_1^{(k+1)} &= \btheta_1^{(k)} + \gamma_{k+1} \nabla_{\btheta_1} \log \pi(\bZ^{(k+1)}|\bA, \btheta_1^{(k)})
  + \gamma_{k+1} \nabla_{\btheta_1} \log\pi(\btheta_1^{(k)}), \\ 
  \btheta_2^{(k+1)} &= \btheta_2^{(k)} + \gamma_{k+1} \nabla_{\btheta_2} \log \pi(\bY|\bZ^{(k+1)},\bA, \btheta_2^{(k)})  + \gamma_{k+1} \nabla_{\btheta_2}  \log\pi(\btheta_2^{(k)}).\\ 
 \end{split}
\]
\end{itemize} 
\end{algorithm}

 In  model (\ref{eq:stonet-multiple-treatment}), both 
$\sigma_z$ and $\sigma_y$ are scalar.
They can be treated as hyperparameters to specify in simulations, while having minimal impact on the downstream inference.  
Notably, $\sigma_z$ is essentially 
non-identifiable in model (\ref{eq:stonet-multiple-treatment}), due to the universal approximation property 
of neural networks.  
In the inference stage, see equation (\ref{Yest1}), 
we provide a Bayesian estimator for $\sigma_z^2$ to facilitate imputation of missing confounders. Specifically, we impose an inverse gamma prior 
$\sigma_z^2 \sim \text{InvGamma}(\alpha, \beta)$, leading to the Bayesian estimator: 
\begin{equation} \label{sigmaeq}
    \hat{\sigma}^2_z = 
    \frac{\beta+\frac{1}{2}\sum_{j=1}^n (z_j - \mu_{1}(\bA_j,\btheta_1))^2}{\frac{n}{2} + \alpha -1},
\end{equation}
where we set $\alpha=\beta=1$ for a flat prior, and $\bA_j$ denotes the value of $\bA$ in sample $j$. In simulations, its value can also be updated as in (\ref{sigmaeq}) along with iterations, while having minimal impact on the performance of the algorithm.

To enable causal inference, we introduce the following additional assumptions, which are standard conditions for causal effect identification: 
\begin{assumption} \label{ass:1} 
    \begin{enumerate}
        \item {\bf Stable unit treatment value assumption (SUTVA)}: the potential 
        outcome of one subject are independent of the assigned treatment of 
        another subject; that is, there is no interference between subjects and there is only a single version of each assigned treatment.
        
    \item {\bf Overlap}: The substitute confounder $\bZ$ satisfies the overlap condition: $p(A \in \mathcal{A}|\bZ)>0$ for all sets $\mathcal{A}$ 
    with positive measure, i.e., $p(\mathcal{A})>0$. 
    \end{enumerate}
\end{assumption}

Under Assumptions \ref{ass:0} and 
\ref{ass:1}, the causal effect can be 
estimated in the following   
procedure: 
\paragraph{Causal effect estimation.} 
After Algorithm~\ref{algo:simple-confounding} converges, with learned parameters $\hat{\btheta}^*=(\hat{\btheta}_1^{*},\hat{\btheta}_2^{*})$, for each observation $i$, draw \(\mM\) samples
\(\{\bz_i^{(l)}\}_{l=1}^{\mM}\) from \(\pi(\bz \mid \ba_i;\hat{\btheta}_1^{*})\).
The expected outcome
$\mathbb{E}\!\left\{Y(\ba)\mid \btheta^{*}\right\}
= \int \mu_2\!\left(\bz,\ba;\btheta_2^{*}\right)\,\pi\!\left(\bz\mid \btheta_1^{*}\right)\,d\bz$ 
is then approximated by the Monte Carlo average
     \begin{equation} \label{Yest1}
     \widehat{ \mathbb{E}(Y(\ba)|\hat{\btheta}^*)}=\frac{1}{n\mM} \sum_{i=1}^{n}\sum_{l=1}^{\mM} \mu_2(\bz_i^{(l)}, \ba,\hat{\btheta}_2^*).
     \end{equation}
\color{black}
\textcolor{black}{A justification for the estimator is given in Appendix \ref{sect:app:proof2}.} 
\textcolor{black}{The treatment effect estimator can then be derived accordingly.}

\subsection{Theoretical Guarantees} 
\label{sect:theory00}

\subsubsection{Model-based Identifiability}
\label{model_based_ID}
Following \cite{RissanenMarttinen2022}, we distinguish nonparametric identifiability from  model-based identifiability. 
Nonparametric identifiability concerns whether the causal effect can be recovered solely from the observational distribution, whereas model-based identifiability asks whether the causal effect is unique within a restricted model class. Our results pertain to the latter. Concretely, we restrict the latent-variable model class by:
\begin{assumption} \label{counfound_mechanism}
(Counfounding mechanism) The structural conditional mean functions $m_A(\bz) = \mathbb{E}[\bA|\bZ=\bz]$ and $m_Y(\ba, \bz) = \mathbb{E}[\bY|\bA=\ba, \bZ=\bz]$ belong to a function class $\mathcal{F}$ on a bounded domain such that there exists a sequence of sparse DNNs $\mu_{A, n}$ and $\mu_{Y, n}$ satisfying Assumption  \ref{SunSLiang_assump_A.2} with
\begin{equation*}
    \left\| \mu_{A, n} - m_A \right\|_{L^2} + \left\| \mu_{Y, n} - m_Y \right\|_{L^2} \leq \omega_n
\end{equation*}
where $\omega_n $ is some sequence converging
to 0 as $n \to \infty$
\end{assumption}

This condition holds for many function classes $\mathcal{F}$, see a detailed discussion in Appendix \ref{mis_speci_error}. Under such restriction, let $\{ P_{\btheta}: \btheta \in \Theta_n \}$ denote the family of observed-data laws induced by CI-StoNet under the working causal DAG and the sparse DNN specification (Assumption \ref{ass:0}-(ii)).
For the target functional $\psi_{\btheta}(\ba) = \mathbb{E}_{P_{\btheta}}[Y(\ba)]$, we say $\psi_{\btheta}(\ba)$ is model-identifiable with respect to $P_{\btheta}$ if, whenever two parameter values $\btheta$, $\btheta^{'}$, 
induce the same observed-data distribution (e.g. $p_{\btheta}(\bA, \bY)$ in simple confounding or $p_{\btheta}(\bA, \bY, \bX)$ in proxy settings), it follows that $\psi_{\btheta}(\ba) = \psi_{\btheta^{'}}(\ba)$ for all $\ba$.

In our setting, $\bZ$ and $\btheta$ may be non-unique due to loss-invariant transformations, but the causal functional is invariant within each observational equivalence class; thus, Theorem \ref{thm:lem2}(parameter recovery up to such transformations) and Theorem \ref{thm:thm1}(consistency of $\hat\psi(\ba)$)), presented below, together yield model-based identification and consistent estimation of $\mathbb{E}[Y(\ba)]$ within the specified family. When the true mechanisms fall outside the sparse DNN family, Theorem \ref{thm:misspec} quantifies the resulting misspecification bias.
\color{black}

\subsubsection{Theoretical Analysis}
Let $P_0$ denote the true joint law of $(\bA, \bZ,\bY)$,  
$P_{\btheta}$ be the distribution induced by the CI-StoNet with parameter $\btheta$. Define a pseudo‑true StoNet parameter
\begin{equation*}
    \btheta^* = \arg \min_{\btheta} \text{KL} (P_0, P_{\btheta}):=\arg\min_{\btheta} \int  \log \frac{dP_0}{d P_{\btheta}} dP_0.
\end{equation*}
Let $\psi(P_0) =   
\int m_Y(\ba,\bz)p_{P_0}(\bz)d\bz$ be the true potential outcome, $\psi(P_{\btheta^*}) = \mathbb{E}(Y(\ba)|\btheta^*)= \int \mu_2(\ba,\bz)p_{P_{\btheta^*}}(\bz)d\bz$ the pseudo-true potential outcome, and
$\psi(\hat{P}_{\btheta}) = \widehat{ \mathbb{E}(Y(\ba)|\hat{\btheta}_n^*)}
= \frac{1}{n \mM} \sum_{i=1}^{n} \sum_{l=1}^{\mM} \mu_{2}(\bz_i^{(l)}, a, \hat{\btheta}_2^*)$. The error decomposition shows:
\begin{equation*}
    \|\psi(\hat{P}_{\btheta}) - \psi(P_0)\| \leq \underbrace{\|\psi(\hat{P}_{\btheta}) - \psi(P_{\btheta^*})\|}_{\text{estimation error}}  \ + \ \underbrace{\|\psi(P_0) - \psi(P_{\btheta^*})\|}_{\text{misspecification error}}.
\end{equation*}

\begin{theorem} \label{thm:thm1} (Estimation error)Suppose Assumptions \ref{ass:0}-\ref{ass:1} and 
the conditions in Lemma \ref{thm:lem1} and Theorem \ref{thm:lem2} (stated in Supplement \ref{sect:proof}) hold. Then 
\[
\| \psi(\hat{P}_{\btheta}) - \psi(P_{\btheta^*}) \|
\overset{p}{\to} 0, \quad \mbox{as $\mM \to \infty$ and $n\to \infty$}. 
\]
\end{theorem}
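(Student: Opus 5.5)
The argument is a three-term triangle inequality. One term is Monte Carlo error, one is sampling error, and one is parameter-estimation error. Write $\hat\btheta^*=(\hat\btheta_1^*,\hat\btheta_2^*)$ for the output of Algorithm~\ref{algo:simple-confounding}. Introduce the intermediate quantities
\[
\bar\psi_n(\hat\btheta^*)=\frac1n\sum_{i=1}^n\int \mu_2(\bz,\ba,\hat\btheta_2^*)\,\pi(\bz\mid \ba_i;\hat\btheta_1^*)\,d\bz,
\qquad
\bar\psi(\hat\btheta^*)=\int\!\!\int \mu_2(\bz,\ba,\hat\btheta_2^*)\,\pi(\bz\mid \ba';\hat\btheta_1^*)\,dP_0(\ba')\,d\bz .
\]
Then bound
\[
\|\psi(\hat P_{\btheta})-\psi(P_{\btheta^*})\|\le \|\psi(\hat P_{\btheta})-\bar\psi_n(\hat\btheta^*)\| + \|\bar\psi_n(\hat\btheta^*)-\bar\psi(\hat\btheta^*)\| + \|\bar\psi(\hat\btheta^*)-\psi(P_{\btheta^*})\|.
\]
For the last term, note that $\psi(P_{\btheta^*})=\int\mu_2(\bz,\ba,\btheta_2^*)p_{\btheta^*}(\bz)\,d\bz$. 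Under Assumption~\ref{ass:0}(ii), the marginal of $\bZ$ is obtained by integrating $\pi(\bz\mid\ba';\btheta_1^*)$ against the law of $\bA$, and the working model leaves the $\bA$-marginal unrestricted, so it matches $P_0$. Hence $\psi(P_{\btheta^*})=\bar\psi(\btheta^*)$, and the last term equals $\|\bar\psi(\hat\btheta^*)-\bar\psi(\btheta^*)\|$.

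For the first term: conditionally on the data and $\hat\btheta^*$, the draws $\bz_i^{(l)}$ are independent (exact Gaussian draws from $N(\mu_1(\ba_i,\hat\btheta_1^*),\sigma_z^2I)$). The sparse DNN $\mu_2$ has bounded weights and Lipschitz activations, so it has at most linear growth and finite conditional variance. Chebyshev then makes this term $O_p((n\mM)^{-1/2})$, which tends to $0$ as $\mM\to\infty$.

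For the second term, view it as a uniform law of large numbers over the class $\{\ba'\mapsto\int\mu_2(\bz,\ba,\btheta_2)\pi(\bz\mid\ba';\btheta_1):\btheta\in\Theta_n\}$. Lemma~\ref{thm:lem1} (the sparse-DNN capacity/entropy conditions of Assumption~\ref{SunSLiang_assump_A.2}) bounds the covering number of this class by something of order $k_n\log(\cdot)$ with $k_n\log n/n\to0$. A standard bracketing or symmetrization argument then gives $\sup_{\btheta\in\Theta_n}|\bar\psi_n(\btheta)-\bar\psi(\btheta)|\overset{p}{\to}0$. Alternatively, if Theorem~\ref{thm:lem2} places $\hat\btheta^*$ in a shrinking neighborhood of the equivalence class of $\btheta^*$, one can restrict to that neighborhood and apply the ordinary LLN at $\btheta^*$ plus a continuity bound.

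The main obstacle is the third term, because $\btheta^*$ is identified only up to loss-invariant transformations. Theorem~\ref{thm:lem2} yields consistency of $\hat\btheta^*$ to the equivalence class $\Theta^*$, not to a point. I would first argue that $\btheta\mapsto\bar\psi(\btheta)$ is constant on $\Theta^*$. Elements of $\Theta^*$ induce the same $p_{\btheta}(\bA,\bY)$, and under overlap (Assumption~\ref{ass:1}) the function $\ba\mapsto\int\mu_2(\bz,\ba,\btheta_2)\pi(\bz\mid\btheta_1)\,d\bz$ is recoverable from the joint law. It is the same as $\E[\bY\mid\bA=\ba]$ reweighted by $\pi(\bz\mid\ba')$ over $\ba'$, and overlap guarantees the needed conditionals are defined on a set of positive measure; this is the model-identifiability statement of Section~\ref{model_based_ID}. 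Next, $\bar\psi$ is Lipschitz in $\btheta$ on bounded sets: differentiate under the integral using the Gaussian kernel and the bounded-depth, bounded-weight DNN. Then $\|\bar\psi(\hat\btheta^*)-\bar\psi(\btheta^*)\|\le L\,\mathrm{dist}(\hat\btheta^*,\Theta^*)\overset{p}{\to}0$ by Theorem~\ref{thm:lem2}. If the Lipschitz constant grows with $K_n$, I would instead work through the $L^2$ distance of $\mu_1,\mu_2$ to their targets, which is what the sparse-DNN posterior consistency in Lemma~\ref{thm:lem1} controls directly. Combining the three terms gives the claim as $n,\mM\to\infty$.
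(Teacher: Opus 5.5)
Your three-term split refines the paper's two-term argument. The paper bounds the Monte Carlo average against $\E(Y(\ba)\mid\hat\btheta_n^*)$ in one step, then uses Theorem~\ref{thm:lem2} plus continuity of $\mu_2$ in its parameters. Separating the sampling term and localizing around $\btheta^*$ is a genuine improvement: the paper's LLN step treats $\mu_2(\cdot,\ba,\hat\btheta_2^*)$ and $\pi(\cdot\mid\ba_i;\hat\btheta_1^*)$ as a fixed test function, although both depend on the same $\ba_i$.

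The step that fails is your justification that $\bar\psi$ is constant on $\Theta^*$. The map $\ba\mapsto\int\mu_2(\bz,\ba,\btheta_2)\pi(\bz\mid\btheta_1)\,d\bz$ is not recoverable from $p_{\btheta}(\bA,\bY)$ under overlap, and no reweighting of $\E[\bY\mid\bA=\ba]$ produces it. Take scalar $\bZ=\alpha\bA+\be_z$ and $\bY=\beta\bZ+\gamma\bA+\be_y$, with $\sigma_z,\sigma_y$ fixed. Then $\bY\mid\bA=\ba\sim N\bigl((\alpha\beta+\gamma)\ba,\,\beta^2\sigma_z^2+\sigma_y^2\bigr)$, so the observed law depends only on $(\alpha\beta+\gamma,\beta^2)$. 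Meanwhile $\bar\psi(\btheta)=\alpha\beta\,\E[\bA]+\gamma\ba=c\,\ba+\alpha\beta(\E[\bA]-\ba)$ varies along the line $\alpha\beta+\gamma=c$ whenever $\beta\neq0$ and $\ba\neq\E[\bA]$, even though overlap holds. So if $\Theta^*$ is read as the observational-equivalence class, the invariance you need is false.

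What actually makes the third term work is Assumption~\ref{asump:consistency-1} inside Theorem~\ref{thm:lem2}: non-uniqueness is assumed to arise only from loss-invariant transformations. These transformations do one of two things:
\begin{itemize}
\item they leave the functions $\mu_1(\cdot;\btheta_1)$ and $\mu_2(\cdot;\btheta_2)$ unchanged (permutations, sign flips and rescalings of hidden units); or
\item they act on the latent layer by a coordinate permutation or sign change $Q$ with $Q\be_z\stackrel{d}{=}\be_z$, in which case $\bar\psi$ and $\bar\psi_n$ are unchanged by the substitution $\bz\mapsto Q\bz$.
\end{itemize}
So replace $\hat\btheta^*$ by its representative nearest $\btheta^*$ and apply your continuity bound. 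This is what the paper relies on implicitly (Remark~\ref{Rem3}), and overlap plays no role.

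Separately, Lemma~\ref{thm:lem1} is the SGHMC convergence result $\E\|\btheta^{(k)}-\hat\btheta_n^*\|^2\le\Lambda_0\gamma_k$. Its role is to license identifying the algorithm's output with $\hat\btheta_n^*$. It supplies neither a covering-number bound nor $L^2$ consistency of $\mu_1,\mu_2$, and Assumption~\ref{SunSLiang_assump_A.2} is not a hypothesis of this theorem. With $K_n\succ n$ and sparsity imposed only on the true network, your uniform-LLN option and your $L^2$ fallback are unsupported. Use the localization route for both the second and third terms.
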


\begin{remark} \label{Rem3}
As shown in the proof of Theorem \ref{thm:thm1}, the consistency of 
the estimator (\ref{Yest1}) arises from the existence of the true sparse
StoNet as well as the consistency of $\hat{\btheta}_n^*$. 
It is important to note that, due to the non-uniqueness of  $\hat{\btheta}_n^*$ as discussed in Remark \ref{Rem1},  
the imputed latent confounders may differ 
from their true values. 
However, $\pi(\bz|\bA,\hat{\btheta}_1^*)$ still serves as a consistent estimator (in terms of the density function) of $\pi(\bz|\bA,\btheta_1^*)$, up to a loss-invariant transformation of 
$\hat{\btheta}_n^*$. Nevertheless, this does not affect the consistency of the estimator (\ref{Yest1}), which is a remarkable property.
\end{remark}

\begin{theorem}(Misspecification error)\label{thm:misspec}
Fix a treatment level $a$.
Let $P_0$ denote the true (complete-data) law of $(\bA,\bZ,\bY)$, and let
$\{P_\btheta:\btheta\in\Theta_n\}$ denote the family of (complete-data) laws induced by the
CI-StoNet working model with the sparse-DNN architecture at sample size $n$.
Assume that the structural conditional mean functions
$m_A(\bz)=\E_{P_0}[\bA\mid \bZ=\bz]$ and $m_Y(\ba,\bz)=\E_{P_0}[\bY\mid \bA=\ba, \bZ=\bz]$
satisfy Assumption \ref{counfound_mechanism}, and Assumption\ref{SunSLiang_assump_A.2} 
with approximation rate
$\omega_n$, and that the sparse-DNN architecture satisfies Assumption~A13.
Furthermore, assume the model outcome regression is uniformly bounded at any fixed treatment $a$:
$\sup_{\bz}|\mu_2(\ba,\bz;\btheta^*)|\le C_{\mu_2}$.

Define the causal functional $\psi(P):=\E_P[Y(\ba)]$. In particular,
\[
\psi(P_0)=\int m_Y(\ba,\bz)\,p_{P_0}(\bz)\,d\bz,
\qquad
\psi(P_\btheta)=\int \mu_2(\ba,\bz;\btheta)\,p_{P_\btheta}(\bz) d\bz.
\]
Then there exist constants $C_1,C_2<\infty$ (independent of $n$) such that
\[
  \text{KL}(P_0,P_{\btheta^*}) \le C_1\,\omega_n^2,
  \qquad
  \|\psi(P_0)-\psi(P_{\btheta^*})\| \le C_2\,\omega_n .
\]
\end{theorem}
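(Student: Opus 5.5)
The plan is to bound the KL divergence by comparing $P_0$ with one explicit member of the working family, and then to push that KL bound through to the causal functional.

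\textbf{Step 1: a comparison parameter.} Assumption~\ref{counfound_mechanism} supplies sparse DNNs $\mu_{A,n}$ and $\mu_{Y,n}$ in the architecture class (Assumption~A13) that approximate $m_A$ and $m_Y$ within $\omega_n$ in $L^2$. From these we build a parameter $\tilde\btheta_n\in\Theta_n$:

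\begin{itemize}
\item the outcome network is set to $\mu_2(\cdot;\tilde\btheta_2)=\mu_{Y,n}$;
\item the first network is chosen so that the complete-data law $P_{\tilde\btheta_n}$ factorizes like $P_0$, with the conditional law of $\bY$ given $(\bZ,\bA)$ Gaussian around $\mu_{Y,n}$ instead of around $m_Y$;
\item equivalently, through the reversed factorization~(\ref{missingCeq1}), the $(\bA,\bZ)$ marginal is matched up to the error in $\mu_{A,n}$.
\end{itemize}

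Since $\btheta^*$ minimizes $\mathrm{KL}(P_0,P_\btheta)$, we have $\mathrm{KL}(P_0,P_{\btheta^*})\le \mathrm{KL}(P_0,P_{\tilde\btheta_n})$.

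\textbf{Step 2: KL bound.} Apply the chain rule for KL:
\[
\mathrm{KL}(P_0,P_{\tilde\btheta_n})=\mathrm{KL}(P_0^{\bA,\bZ},P_{\tilde\btheta_n}^{\bA,\bZ})+\E_{P_0}\,\mathrm{KL}\big(P_0(\bY\mid\bA,\bZ),P_{\tilde\btheta_n}(\bY\mid\bA,\bZ)\big).
\]
Under Assumption~\ref{ass:0}, $\be_y$ is Gaussian with common variance $\sigma_y^2$. The second term is therefore exactly $\frac{1}{2\sigma_y^2}\|\mu_{Y,n}-m_Y\|_{L^2(P_0)}^2\le \omega_n^2/(2\sigma_y^2)$. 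For the first term we use Gaussian $\be_z$ and the fact that the Gaussian KL is quadratic in the mean shift. We bound it by $c\|\mu_{A,n}-m_A\|_{L^2}^2$, using boundedness of the domain and a bounded density ratio between $P_0$ and the reference measure to convert $L^2$ norms. This yields $C_1\omega_n^2$.

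\textbf{Step 3: the functional.} Split
\[
\psi(P_0)-\psi(P_{\btheta^*})=\int (m_Y-\mu_2^*)\,dP_0(\bz)+\int \mu_2^*\,(dP_0(\bz)-dP_{\btheta^*}(\bz)),
\]
where $\mu_2^*=\mu_2(\ba,\cdot;\btheta^*)$.

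The second term is at most $2C_{\mu_2}\,\mathrm{TV}(P_0^{\bZ},P_{\btheta^*}^{\bZ})$. By Pinsker and the data-processing inequality this is at most $C_{\mu_2}\sqrt{2\,\mathrm{KL}(P_0,P_{\btheta^*})}\le C\omega_n$.

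For the first term we need $\|m_Y(\ba,\cdot)-\mu_2^*\|$ in $L^2(P_0^{\bZ})$ at the fixed treatment $\ba$. The KL bound on the $\bY$-conditional component gives $\E_{P_0}(m_Y-\mu_2^*)^2(\bA,\bZ)\le 2\sigma_y^2\,C_1\omega_n^2$. The overlap condition in Assumption~\ref{ass:1}, taken in a quantitative form (a lower bound on the treatment density or probability at $\ba$ given $\bz$), transfers this to the fixed treatment. Cauchy--Schwarz then gives $O(\omega_n)$.

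\textbf{Main obstacle.} The delicate points are Step~2's first term, because $\bA$ may be discrete and the working model parameterizes $\bZ\mid\bA$ rather than $\bA\mid\bZ$, and the overlap transfer in Step~3. For the first, I would first try to take the working family rich enough that the $(\bA,\bZ)$ block can be matched up to the $\mu_{A,n}$ error, absorbing the remaining constants into $C_1$. For the second, I would make the positivity lower bound explicit and absorb its reciprocal into $C_2$.
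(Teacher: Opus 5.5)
Your skeleton matches the paper's. You compare $P_0$ with an embedded sparse-DNN member of the working family and invoke KL-minimality of $\btheta^*$, use that the Gaussian KL is quadratic in the mean shift, split $\psi(P_0)-\psi(P_{\btheta^*})$ into the same two terms, and control the $\bZ$-marginal term by Pinsker plus data processing.

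The real difference is the first term. The paper simply asserts $\|m_Y(\ba,\cdot)-\mu_2(\ba,\cdot;\btheta^*)\|_{L^2(P_0^{\bZ})}\lesssim\omega_n$ ``by sparse-DNN approximation results and StoNet/DNN equivalence''. That tacitly treats the outcome network of the KL-minimizer as if it were the approximant $\mu_{Y,n}$. You instead read off $\E_{P_0}\|m_Y-\mu_2^*\|^2\le 2\sigma_y^2\,\mathrm{KL}(P_0,P_{\btheta^*})$ from the $\bY$-block of the chain rule. You then pass to the slice $\bA=\ba$ with a quantitative positivity bound $p_0(\ba\mid\bz)\ge\kappa$, so $C_2$ picks up a factor $\kappa^{-1/2}$.

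Your route is the more careful one. $\mu_2^*$ is only controlled in $L^2(P_0^{\bA,\bZ})$, so reaching a fixed treatment genuinely needs overlap, which neither the theorem as stated nor the paper's proof invokes. The cost is that extra hypothesis. The argument is also tied to the discrete treatments $\bA\in\{\ba_1,\ldots,\ba_m\}$ of the main setup: for continuous $\bA$, a density lower bound at $\ba$ does not let an $L^2$ bound control a single slice without additional regularity in $a$.

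You can make this term independent of everything else. For a product parameter space, only $\btheta_2$ enters the $\bY$-block, so $\btheta_2^*$ is the $L^2(P_0^{\bA,\bZ})$-best outcome network. The bound then follows from $\|\mu_{Y,n}-m_Y\|$ alone.

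That decoupling is worth having because the $(\bA,\bZ)$ block, which you flag as your main obstacle, is not actually proved by you or by the paper. The working model is a Gaussian location family for $\bZ\mid\bA$ with fixed $\sigma_z^2$. Hence $\mathrm{KL}(P_0^{\bA,\bZ},P_{\btheta}^{\bA,\bZ})$ is not governed by $\|\mu_{A,n}-m_A\|$ and need not vanish as $n\to\infty$. The paper gets past this only by positing an embedding $\eta\mapsto\btheta(\eta)$ with $Q_\eta=P_{\btheta(\eta)}$, so your plan to take the working family rich enough is the same move.

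State it as an explicit hypothesis, since your bound on the $\bZ$-marginal term rests on it. One option is that $\bZ\mid\bA$ is Gaussian with variance $\sigma_z^2$ under $P_0$, as in Assumption~\ref{ass:0}(ii). In that case the relevant rate is the approximation error for $\E[\bZ\mid\bA]$ rather than for $m_A$.
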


A finite-sample error analysis for 
both the simple confounding case (Theorem \ref{thm:finite_sample_simple_Yest1_bridge}) and the basic proxy case (Theorem \ref{thm:finite_sample_revised}) is provided in Supplement \ref{finite_sample_overlap}.
  The analysis shows how the estimation error bound of potential outcome worsens when there is bad overlap $P(\bA|\bZ)$. 

Notably, by approximating the treatment and outcome mechanisms using sparse DNNs, 
CI-StoNet imposes relatively mild structural restrictions on the confounding process. 
In contrast to \cite{Wang2018TheBO}, it does not rely on a “no single-cause confounder” assumption and thus accommodates both multi-cause and single-cause confounding within a unified framework. 
Moreover, unlike the variational autoencoder approach of \cite{Louizos2017CausalEI}, 
which may yield latent representations without consistency guarantees when large neural networks are used, 
CI-StoNet leverages sparse deep learning theory and Bayesian regularization to ensure parameter estimation consistency within the specified model class, 
thereby providing theoretically grounded causal effect estimation relative to the pseudo-true parameter.

Additionally, we note that the latent variable imputed in step (i) of 
Algorithm \ref{algo:simple-confounding} cannot be used for causal effect estimation, as it may contain information related to colliders. 
Figure \ref{fig:causal-CM}(a)  illustrates this concept, where the collider variable 
$\bC$ is influenced by both $\bA$ and $\bY$.
In step (i), we impute $\bZ$ conditioned on both $\bA$ and $\bY$. 
If a collider variable exists, the imputed latent variable may introduce spurious associations between $\bA$ and $\bY$,
potentially biasing the causal effect estimation. To mitigate this issue, we specifically impute the latent variables from $\pi(\bZ|\bA,\hat{\btheta}_1^*)$, ensuring that any collider-related information is excluded from the analysis.

\begin{wrapfigure}{r}{0.45\textwidth}
     \centering
     \vspace{-0.25in}
     \begin{tabular}{cc}
         (a) Collider & (b) Mediator \\ 
         \includegraphics[width=0.2\textwidth]{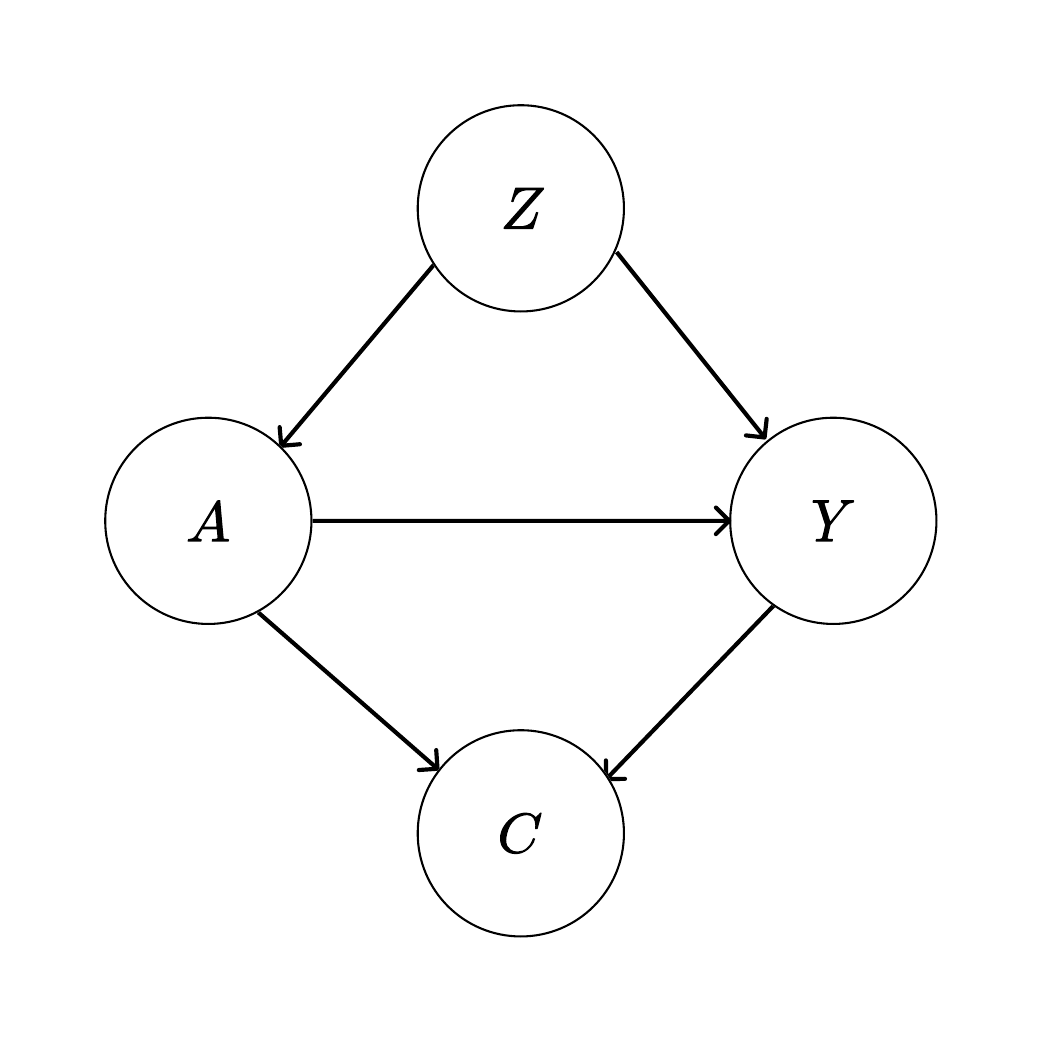}
           & 
         \includegraphics[width=0.2\textwidth]{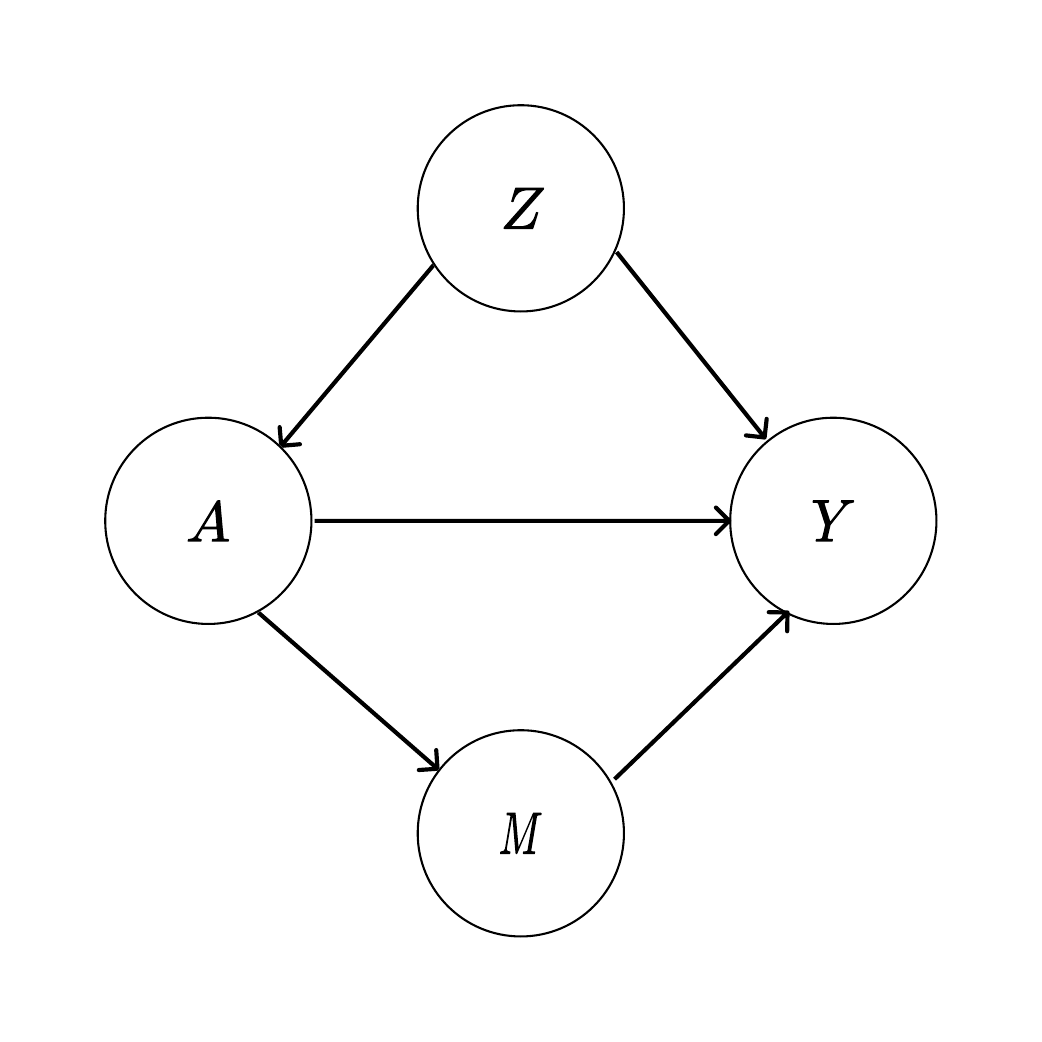}
     \end{tabular}
      \vspace{-0.15in}
        \caption{Other examples of causal structures:
        (a) existence of colliders, represented by $C$ ;
        (b) existence of mediators, 
         represented by $M$.}
        \label{fig:causal-CM}
        \vspace{-0.25in}
\end{wrapfigure}
 
In Section~\ref{CI-StoNet-with-missing-confounders}, we assumed the absence of mediators to enable a clear interpretation of \(\bZ\) as a latent confounder. However, if a mediator \(\bM\) does exist, as illustrated in Figure~\ref{fig:causal-CM}(b), the imputed latent variable \(\bZ\) may inadvertently encapsulate information related to \(\bM\). Mathematically, the conditional distribution can be expressed as:
\[
\small
\pi(\bY \mid \bA, \bZ) = \int \pi(\bY \mid \bA, \bZ, \bM)\, \pi(\bM \mid \bA)\, d\bM,
\]
which indicates that, without observing \(\bM\), its effect will be absorbed into \(\bZ\), making them statistically  indistinguishable within the CI-StoNet framework. In this case, 
the treatment effect can still be estimated based on 
(\ref{Yest1}), where 
 \(\bZ\) acts as a latent adjustment variable that facilitates estimation of the total causal effect. 
Although this precludes pathway-specific interpretations, it does not invalidate estimation of the total causal effect.
If, however, the mediator \(\bM\) is known from domain knowledge or experimental design and there is no unmeasured confounding between \(\bA\) and \(\bM\), or between \(\bM\) and \(\bY\), then the front-door criterion \citep{Pearl2009} can be applied.
In this case, \(\bM\) can be included as 
part of the latent confounder layer in the  CI-StoNet to enable identification of the direct causal effect via front-door adjustment.

\subsection{A Simulation Study}
\label{sect:sim2.3}

As a concept-proof example, we evaluated CI-StoNet using a simulation study with a nonlinear data-generating process for $\bA$ and $\bY$ under both separable and non-separable confounding scenarios. 
We generate  the latent confounders $Z_1, \dots, Z_6$ as independent standard Gaussian random variables, and then draw $A_1, \dots, A_9$ independently from the distribution, using inverse CDF:
\textcolor{black}{
\[
p(\ba_i|\bZ) = \frac{\text{expit}(\xi(\bZ) \ba_i)}{\int_{-1}^{1} \text{expit}(\xi(\bZ)\ba_i)} 1_{\{-1 \leq \ba_i \leq 1\}}, \quad i=1,\ldots,9,
\]
where $\text{expit}(\xi(\bZ) \ba_i) = \frac{\exp{\xi(\bZ) \ba_i}}{1 + \exp{\xi(\bZ) \ba_i}}$}, and $\xi(\bZ) = \sum_{i=1}^2 \beta_i \sin{z_i} + \sum_{j=3}^4 \beta_j \cos{z_j} + \sum_{k=5}^6\frac{1}{1+\exp{-\beta_k z_k + 0.5}}$.
We set
$f_1(\bA) = \btheta^{T} \bA^{\otimes 2}$ and 
$f_2(\bA)  = \sum_{i<j} \ba_i \ba_j$,
where $\bA^{\otimes 2}$ represent an element-wise square operation, and generate $Y$ in two settings: 
(i) {\it Separable confounding}. the treatment and confounder impact the outcome separately: 
        $Y = f_1(\bA) - \btheta_0 f_2(\bA) + \xi(\bZ) + \epsilon$,
    where $\btheta_0 \sim U (-1, 1)$ and $\epsilon \sim N(0, 1)$.
(ii) {\it Non-separable confounding}. there exists interaction between  the treatment and confounder: 
        $Y = f_1(\bA) - \xi(\bZ)f_2(\bA) + \xi(\bZ) + \epsilon$, 
    where $\epsilon \sim N(0, 1)$. 

For each setting, the experiment was conducted on 10 simulated datasets, each comprising 1000 training samples, 500 validation samples, and 500 test samples. The marginal treatment effects were calculated using the test set.  
Figure \ref{fig:multi-sep-non-sep} compares the true and estimated marginal treatment effects across the 10 datasets. The plots show that most of the estimated marginal effects lie within half a standard deviation of the true marginal effects, indicating that CI-StoNet is able to 
estimate the marginal effect of each treatment with small bias.

\section{Causal StoNet for Proxy Variables}
\label{proxy-variable}

In some applications, proxy variables for an unobserved confounder are available, though they may be noisy or only partially informative. Conditioning on such proxies can reduce, but not fully remove, confounding bias, making it natural to incorporate them as substitutes for the missing confounder. \citet{KurokiManabu2014} established identification results using matrix adjustment or spectral methods under specific assumptions on measurement error. The proximal causal inference framework of \citet{CausalProximal} and \citet{MiaoGeng2018} showed that causal effects can be identified when two types of proxies—treatment and outcome confounding proxies—are observed. In contrast, \citet{Louizos2017CausalEI} proposed a variational autoencoder approach that leverages a basic proxy to learn a latent confounder representation.

Consider the causal structure with a basic proxy, as depicted in Figure \ref{fig:CI-StoNet-proxy}(a). This causal structure suggests that $\bZ$ can be imputed based on the following conditional distribution: 
\begin{equation} \label{condeq}
\begin{split} 
\pi(\bZ | \bA,\bY, \bX) & \propto  \pi(\bZ) \pi(\bX|\bZ) \pi(\bA|\bZ) \pi(\bY|\bZ,\bA) 
 \propto \pi(\bZ|\bX) \pi(\bA|\bZ) \pi(\bY|\bZ,\bA). 
\end{split}
\end{equation}

\begin{figure}[!ht] 
\centering
\begin{tabular}{cc}
(a) & (b) \\
\includegraphics[scale=0.3]{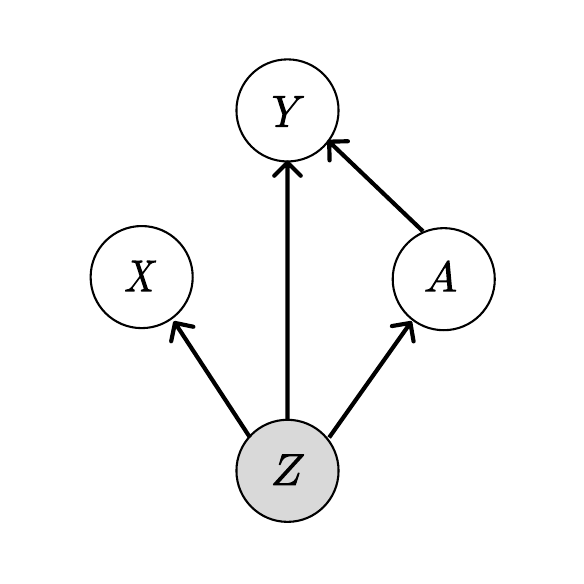} & 
\includegraphics[scale=0.375]{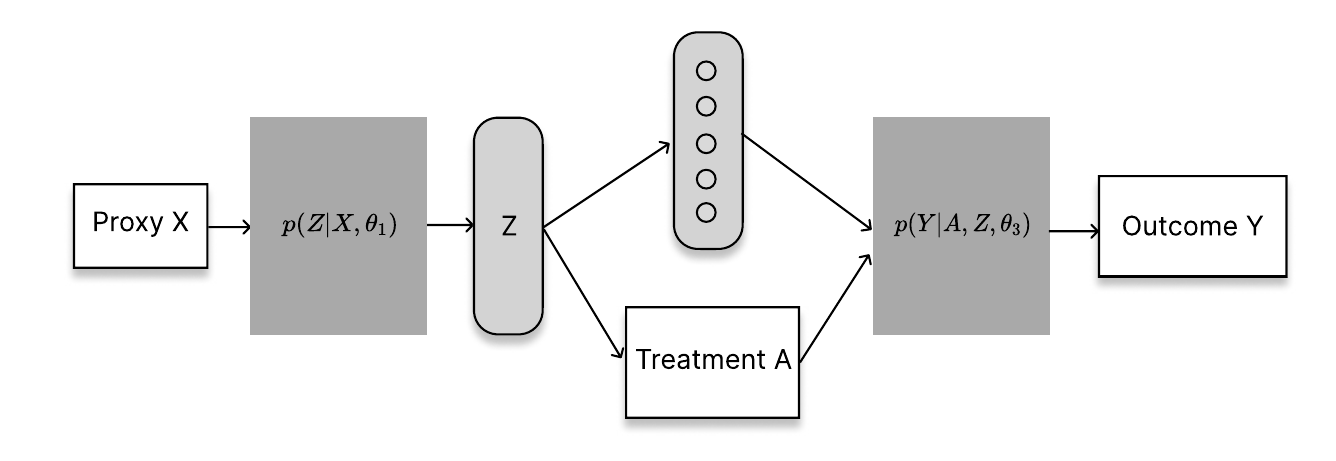}
\end{tabular} 
 \caption{
 (a) Causal DAG: without dependence on the proxy; (b) 
 Diagram of CI-StoNet under the proxy setting: white rectangles represent variables from observed data; light-grey rounded-rectangles represent hidden neurons; dark-grey rectangles represent network modules to learn respective conditional distributions.
  }
\label{fig:CI-StoNet-proxy}
\end{figure}

The decomposition of the conditional distribution (\ref{condeq}) 
further suggests the  StoNet structure: 
\begin{equation}\label{eq:stonet-proxy0}
    \begin{split}
    \bZ = \mu_1(\bX, \btheta_1) + \be_z, \quad 
    \bA = \mu_{2}(\bZ, \btheta_{2},\be_{a}), \quad 
    \bY  = \mu_3(\bZ, \bA, \btheta_3) + \be_{y}, 
    \end{split}
\end{equation}
where $\be_z$ and $\be_y$ 
are  Gaussian random errors, while the form of $\be_a$ can be determined according to the types of treatments.
These random errors are mutually independent and are also 
independent of $\bX$. 
Figure \ref{fig:CI-StoNet-proxy}(b)  illustrates the corresponding 
CI-StoNet structure. 
Alternatively, we can consider the following StoNet model: 
\begin{equation}\label{eq:stonet-proxy}
    \begin{split}
    \bZ = \mu_1(\bX, \btheta_1) + \be_z, \quad 
    \bA = \mu_{2}(\bZ, \btheta_{2})+\be_{a}, \quad 
    \bY = \mu_3(\bZ,\bA, \btheta_3) + \be_{y}, 
    \end{split}
\end{equation}
where $\be_z$, $\be_a$ and $\be_y$ are  Gaussian random errors.
Notably, the model \eqref{eq:stonet-proxy0} and 
the model \eqref{eq:stonet-proxy} are asymptotically equivalent, even when $\bA$ is a binary vector. In the binary case, their equivalence is supported by the result that, as shown in \cite{Liang2003AnEB} and \cite{Duda2000PatternC2}, $\mu_2(\bZ, \btheta_2)$ converges to the probability function $P(\bA = \mathbf{1}|\bZ, \btheta_2)$ as $n \to \infty$.

In this paper, we adopt the model \eqref{eq:stonet-proxy} for computational simplicity.
A gradient equation analogous to (\ref{Fishereq}) can be constructed for 
the model. An adaptive SGHMC algorithm, similar to Algorithm \ref{algo:simple-confounding}, 
can be employed for its solution. 
Let 
$\{\bz_i^{(l)}: l=1,2,\ldots,\mM\}$ denote the samples simulated from 
$\pi(\bz|\bx_i,\hat{\btheta}_1^*)$. Then the expected outcome function 
 $\mathbb{E}(Y(\ba))$ can be estimated by the Monte Carlo average as 
  \begin{equation} \label{Yest3}
    \widehat{ \mathbb{E}(Y(\ba)|\hat{\btheta}_3^*)}=\frac{1}{n\mM} \sum_{i=1}^{n} \sum_{l=1}^{\mM} \mu_3(\bz_i^{(l)}, \ba,\hat{\btheta}_3^*).
    \end{equation}
 \color{black}

\subsection{Numerical Experiments}
 
For simplicity, we consider a single binary treatment in our experiments. 
CI-StoNet is compared with the following baselines: 

(i) Designed for average treatment effect (ATE): double selection estimator (\textbf{DSE})\citep{Belloni2014InferenceOT},
approximate residual balancing estimator (\textbf{ARBE}) \citep{ARBE}, 
targeted maximum likelihood estimator (\textbf{TMLE}) \citep{TMLE}, 
and deep orthogonal networks for unconfounded treatments (\textbf{DONUT}) \citep{ate-orthogonal-regular}.

(ii)  Designed for heterogeneous treatment effect: \textbf{X-learner} \citep{Knzel2017MetalearnersFE},
\textbf{Dragonnet}\citep{Shi2019AdaptingNN}, 
causal multi-task deep ensemble (\textbf{CMDE}) \citep{causal-multi-task-ensemble}), 
causal multi-task gaussian processes (\textbf{CMGP} \citep{Alaa2017BayesianIO}),
causal effect variational autoencoder (\textbf{CEVAE}) \citep{Louizos2017CausalEI}, 
generative adversarial
networks (\textbf{GANITE}) \citep{Yoon2018GANITEEO},
and counterfactual regression net (\textbf{CFRNet} \citep{cfrnet}).
For the baselines in part (ii), we use the code of \cite{causal-multi-task-ensemble} at GitHub. 

For  performance evaluation, we consider two metrics: (i) estimation accuracy of ATE, which is measured by the mean absolute error (MAE) of the ATE estimates; and (ii) estimation accuracy of CATE, which is measured by precision in estimation of heterogeneous effect (PEHE). 

\subsubsection{Simulated Examples} 

\begin{wraptable}{r}{0.7\textwidth}
\caption{
Comparison of different methods for 
estimation of heterogeneous treatment effects with proxy variables, where PEHE was computed over 10 datasets,  `In-sample PEHE' was computed with training and validation samples, and `Out-of-sample PEHE' was computed with test samples.}
\label{sim_proxy}
\centering
\begin{tabular}{c c c } 
\toprule
 & In-Sample PEHE & Out-of-Sample PEHE \\ \midrule
CI-StoNet & \textbf{0.3614(0.0328)} & \textbf{0.3731(0.0350)}\\ 
CMDE & 0.9019(0.0746) &  0.9059 (0.0699) \\
CMGP & 1.8823(0.0836) & 2.2116 (0.1682)  \\
CEVAE & 0.6190(0.0350) &  0.6246 (0.0384) \\
Ganite & 1.2099(0.0558) & 1.1797 (0.0499)  \\
X-learner-RF & 0.8308(0.0200) &  1.4272 (0.0132) \\
X-learner-Bart & 0.6489(0.0168) &  0.6570 (0.0151) \\
CFRNet-Wass & 1.7127(0.1668) & 1.7258 (0.1667) \\
CFRNet-MMD & 2.0238(0.0537) & 2.0250 (0.0582)  \\
DragonNet & 0.4217(0.0356) & 0.4305 (0.0361) \\
\bottomrule
 \end{tabular}
\vspace{-0.25in}
\end{wraptable}

This example is designed to compare  
methods on problems with  
nonlinear treatment effect and nonlinear outcome function. 
We generated 10 datasets using the procedure as described in Section \ref{sect:proxsim}, with each dataset 
consisting of 2000 training samples, 500 validation samples, and 500 test samples.  
Table \ref{sim_proxy} shows that CI-StoNet provides accurate estimates for the heterogeneous treatment effect and outperforms the baselines.

\subsubsection{Benchmark Datasets}
 
  We evaluated CI-StoNet on some benchmark datasets, including the Twins dataset and 
  10 datasets from Atlantic Causal Inference Conference (ACIC) 2019
  Data Challenge. The results reported in Section \ref{sect:prox-Bench} 
  indicate that CI-StoNet outperforms 
  the baselines. 
   
\section{Conclusion}

By integrating StoNets with adaptive stochastic gradient MCMC, this paper develops a practical and theoretically grounded framework for causal inference with missing confounders. CI-StoNet encodes the dependence structure implied by the causal DAG and estimates parameters using sparse deep learning with Bayesian regularization. Although the latent confounder is identifiable only up to loss-invariant transformations, the induced causal effect is well-defined within the sparse StoNet class. The framework extends naturally to settings with multiple causes and proxy variables.

Within this framework, the causal functional is model-identified: observationally equivalent parameterizations yield the same causal effect. Moreover, the total error decomposes into statistical estimation error and misspecification error relative to the pseudo-true parameter, allowing explicit characterization of convergence and model approximation.

Despite its advantages, this study has some limitations. First, the structure and parameter estimation of CI-StoNet rely on correct specification of the underlying causal DAG. For example, in settings with multiple treatments, an unrecognized mediator may be absorbed into the learned substitute confounder, potentially biasing causal estimates, since the model is defined through the joint distribution of $(\bA,\bY,\bZ)$. If the mediator is properly identified, however, the CI-StoNet structure can be modified accordingly to avoid this issue. Second, the current formulation of CI-StoNet does not provide principled, model-based uncertainty quantification for the causal effect estimator. Although the method does not inherently yield valid posterior intervals for the causal functional, we provide a practical post-processing procedure in Appendix~\ref{diagnostics_and_boostrap} that constructs bootstrapped confidence intervals for the estimated causal effects. More formal uncertainty quantification could be achieved by adopting the original StoNet framework \citep{SDR_StoNet}, which enables inference through its fully Bayesian formulation.

Finally, the Markovian structure of CI-StoNet affords substantial flexibility for modeling a wide range of causal structures. Section~\ref{sect:extension} extends CI-StoNet to two proxy-variable settings: (i) outcome depending on the proxy and (ii) treatment depending on the proxy. See that section for details.

\bibliography{reference}
\bibliographystyle{iclr2026_conference}

\newpage

\appendix
\section{Appendix}

\setcounter{section}{0}
\renewcommand{\thesection}{A\arabic{section}}
\renewcommand{\theassumption}{A\arabic{assumption}}
\setcounter{table}{0}
\renewcommand{\thetable}{S\arabic{table}}
\setcounter{equation}{0}
\renewcommand{\theequation}{A\arabic{equation}}
\setcounter{figure}{0}
\renewcommand{\thefigure}{S\arabic{figure}}
\setcounter{lemma}{0}
\renewcommand{\thelemma}{A\arabic{lemma}}
\setcounter{theorem}{0}
\renewcommand{\thetheorem}{A\arabic{theorem}}
\setcounter{remark}{0}
\renewcommand{\theremark}{A\arabic{remark}}

\section{Supplementary Examples}

\subsection{Figures for the Simulation Study in Section \ref{sect:sim2.3}}

\begin{figure}[H]
    \centering
    \includegraphics[width=0.6\textwidth]{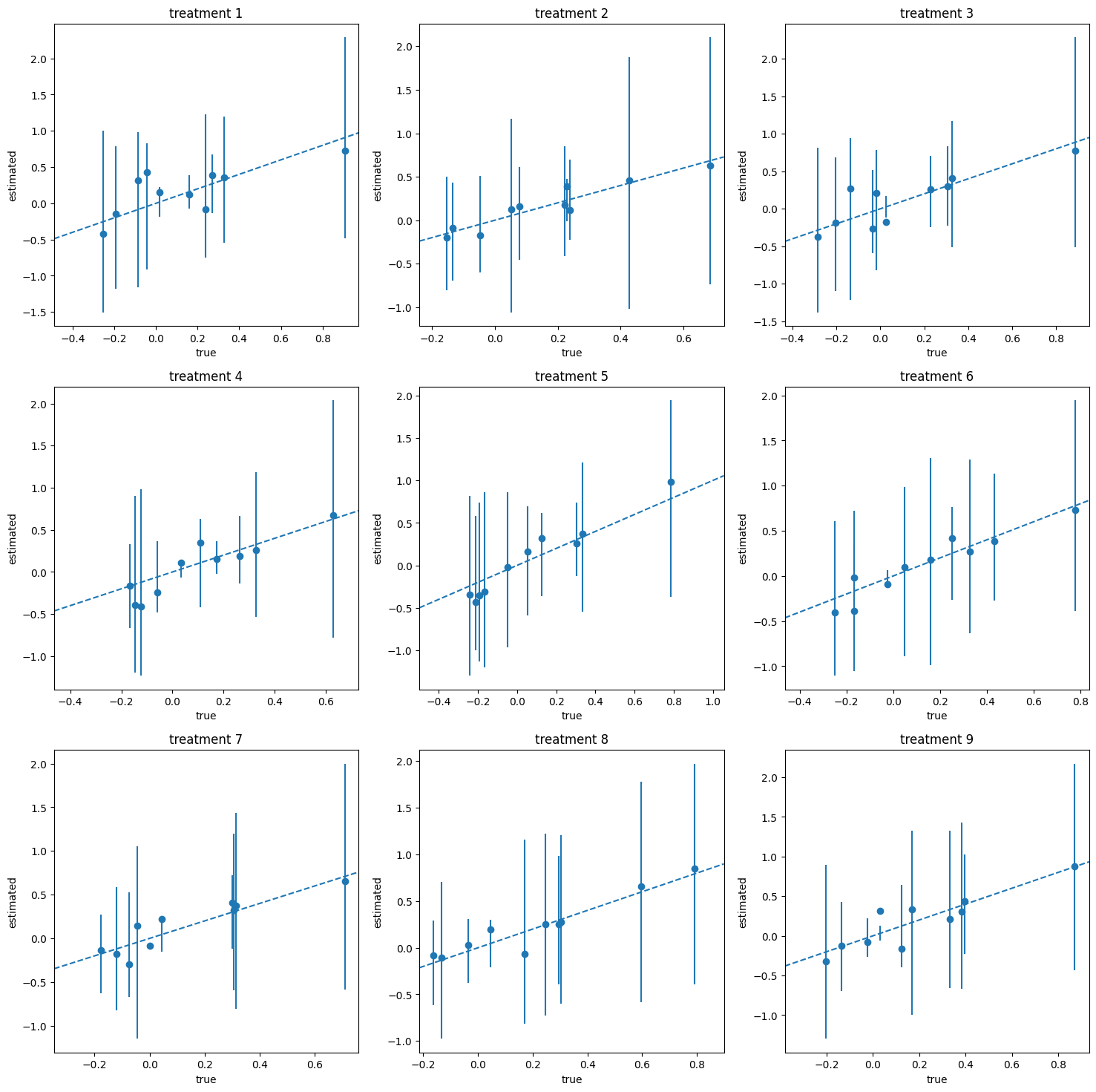}

    \vspace{0.5em}
    {\small (a) Separable confounding}

    \vspace{1em}

    \includegraphics[width=0.6\textwidth]{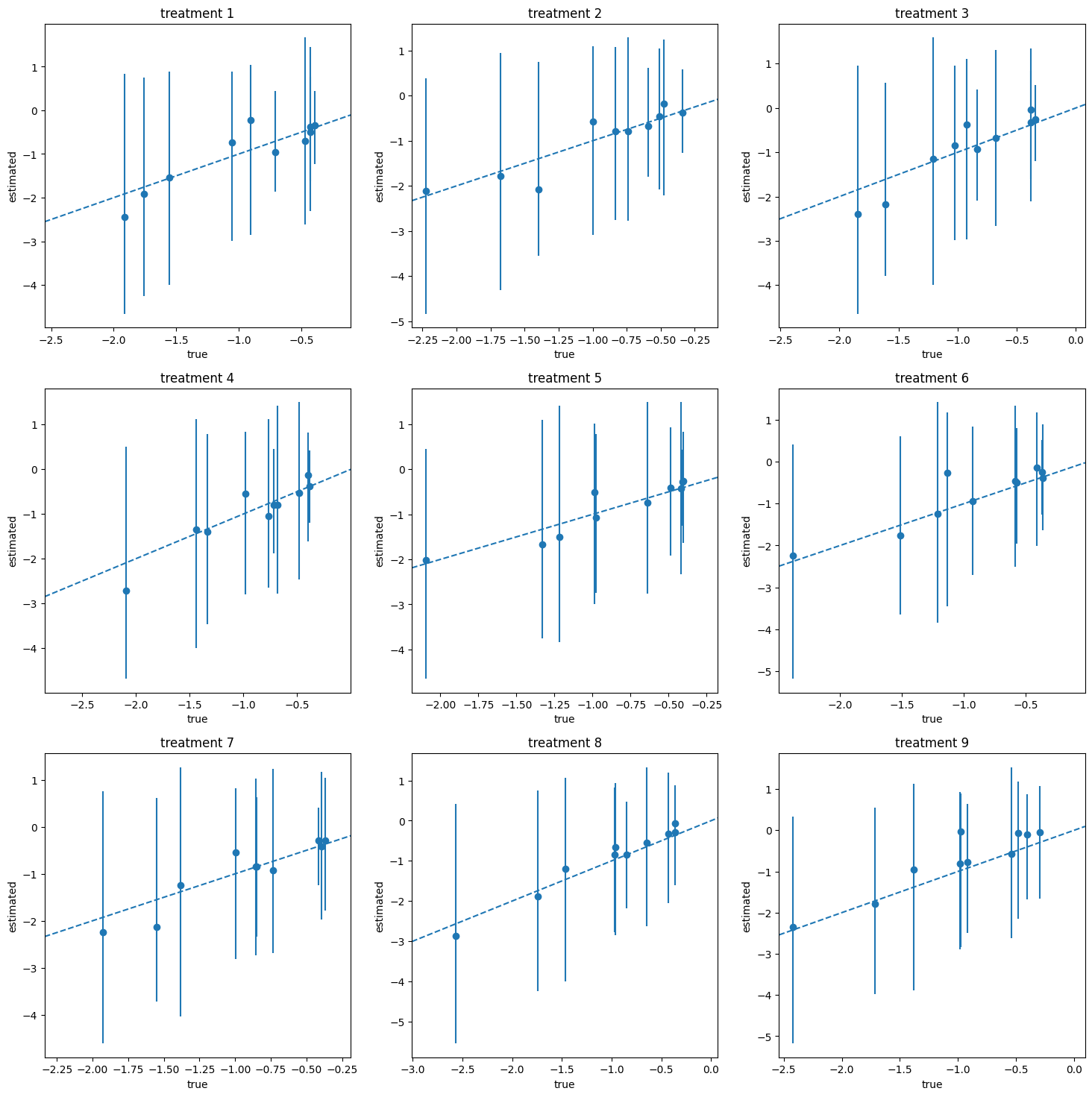}

    \vspace{0.5em}
    {\small (b) Non-separable confounding}

    \caption{CI-StoNet results for the simulation study. Points denote the true marginal effects and error bars represent one standard error of the marginal effect estimator.}
    \label{fig:multi-sep-non-sep}
\end{figure}

\subsection{Simulated Examples} 
\label{sect:proxsim}

 The simulated examples are used to compare the performance of 
different methods for problems with  
 nonlinear treatment effect and nonlinear outcome function. 
We generated ten datasets using the following procedure, with each dataset 
comprising 2000 training samples, 500 validation samples, and 500 test samples.  
\begin{enumerate}
    \item Generate the confounder $\bz_i = (z_{i,1}, \cdots, z_{i,5})$ independently from $N(0, 1)$.
    \item Generate $\gamma_i, r_{i,1}, \cdots, r_{i,100}$ independently from $N(\mu_i, 1)$ truncated in the interval $[-10, 10]$, with $\mu_i = \frac{1}{5}\sum_{k=1}^5 z_{i,k}$. Set the proxy variable $\bx_i = (x_{i,1}, \dots, x_{i,100})$, with $x_{i,j} = \frac{\gamma_i + r_{i,j}}{\sqrt{2}}$, where $\bx$ and $\bz$ are dependent through $\mu$.
    
    \item The propensity score $p(\bz_i) = \frac{1}{4}(1+\beta_{2,4}(\frac{1}{3}(\Phi(z_{i,1}) + \Phi(z_{i,3}) + \Phi(z_{i,5}))))$, where $\beta_{2,4}$ is the CDF of the beta distribution with shape parameters (2, 4), and $\Phi$ denotes the CDF of the standard normal distribution.
    This ensures that $p(\bz_i) \in [0.25, 0.5]$, thereby providing sufficient overlap. Treatment $A_i$ is hence generated from a Bernoulli distribution with the success probability $p(\bz_i)$. 
    Resampling from the treatment and control groups has been performed for ensuring that the dataset contains balanced samples for treatment group and control group. 
    \item To simulate the outcome,  we set 
        \[
        \begin{split}
        \by_i & = c(\bz_i)+ (\tau+\eta(\bz_i)) A_i + \sigma_y e_i, \\
        c(\bz_i) & = \frac{5 z_{i3}}{1+z_{i4}^2} + 2z_{i5}, \\
        \end{split}
        \]
    where $\eta(\bz_i)=f(z_{i1})f(z_{i2}) - E(f(z_{i1})f(z_{i2}))$ and $f(w)=\frac{2}{1+\exp(-w+0.5))}$. 
    That is,  we set the treatment effect $\tau(\bz_i)=\tau +\eta(\bz_i)$, which is homogeneous for different individuals. We generated the samples under the setting $\tau=3$, $\sigma_y=0.25$, and $e_i \sim N(0, 1)$.
\end{enumerate}

\subsection{Benchmark Datasets} \label{sect:prox-Bench}

  We compare the performance of the proposed method on some benchmark datasets, including the Twins dataset and 
  10 datasets from Atlantic Causal Inference Conference (ACIC) 2019
  Data Challenge.

\paragraph{ACIC 2019 Datasets} We first worked on 10 ACIC 2019 datasets. This experiment focuses on comparing CI-StoNet with the baselines designed for ATE estimation. The results are summarized in Table \ref{AICC2019ATE}, which indicates that CI-StoNet outperforms the baselines. 

\begin{table}[h]
\caption{ATE estimation across 10 ACIC 2019 datasets, where the number in the parentheses represents the standard deviation of the MAE, with additional benchmarks} 
\vspace{0.02in}
\label{AICC2019ATE}
\centering
 \begin{tabular}{ccc}
 \toprule
 Method & In-Sample & Out-of-Sample\\ \midrule
 CI-StoNet & \textbf{0.0669 (0.0166)} & \textbf{0.0709 (0.0133)} \\
 CMDE & 0.0802 (0.0166) & 0.0877 (0.0246)\\
 CMGP & 0.1252 (0.0156) & 0.1349 (0.0170)\\
 CEVAE  & 0.0773 (0.0152) & 0.0875 (0.0154)\\
 GANITE & 0.1622 (0.0390)& 0.1747 (0.0425)\\
 X-Learner-RF  & 0.1720 (0.0257) & 0.1903(0.0253)\\
 X-Learner-BART  & 0.0738 (0.0251) & 0.0817(0.0248)\\
 CFRNet-Wass  & 0.1024 (0.0241) & 0.1099(0.0256)\\
 CFRNet-MMD   & 0.1105 (0.0258) & 0.1208(0.0246)\\
 \bottomrule
 \end{tabular}
\end{table}

\paragraph{Twins Data.}
We analyzed a real-world dataset of twin births from 1989 to 1991 in the United States. The treatment variable is binary, with `1' denoting the heavier twin at birth. The dataset contains 46 variables that include clinical information and socioeconomic status of parents, and we regard them as proxy variables for latent confounders. 
The outcome variable is binary, with `1' indicating twin mortality within the first year. 
We regard each twin-pair's records as potential outcomes, allowing us to find the true ATE.   
After data pre-processing, we obtained a dataset with 4,821 samples. In this final dataset, mortality rates for lighter and heavier twins are $16.9\%$ and $14.42\%$, respectively, resulting in a true ATE of $-2.48\%$.

We conducted the experiment in three-fold cross validation, where we partitioned the dataset into three subsets, trained the model using two subsets and estimated the ATE using the remaining one. Table \ref{sim_twins} (left panel) reports the averaged ATE over three folds and the standard deviation of the average. CI-StoNet yields a more stable ATE estimate (in RMSE) compared to the baseline methods.

\begin{table}[!ht]
\caption{Comparison of different methods in average treatment effect (ATE) estimation for Twins data, where the number in the parentheses represents the standard deviation of the absolute error of ATE, 
and RMSE denotes the root mean squared error.}
\label{sim_twins} 
\centering
\begin{adjustbox}{width=1.0\textwidth}
\begin{tabular}{cccccc} \toprule
        & \multicolumn{2}{c}{With confounder {\it gestat10}} & &  
        \multicolumn{2}{c}{Missing  confounder {\it gestat10}} \\ \cline{2-3} \cline{5-6}
Methods & Absolute Error of ATE & RMSE & & Absolute Error of ATE & RMSE 
\\ \midrule
CI-StoNet & 0.0099(0.0089) & {\bf 0.0133} & & {\bf 0.0135(0.0071)} & {\bf 0.0153} \\ 
DSE & 0.0157(0.0176) & 0.0236 & & 0.0211(0.0193) & 0.0286 \\
ARBE & 0.0152 (0.0201) & 0.0252 & & 0.0168(0.0257) & 0.0307 \\
TMLE(Lasso) & 0.0855 (0.0599)  &  0.1044 & & 0.0932(0.0791) & 0.1222 \\
TMLE(ensemble) & 0.1042 (0.0779) &  0.1301 & & 0.1238(0.0607) & 0.1379 \\
DONUT & 0.0490 (0.0128)  & 0.0506 & & 0.0490(0.0124) & 0.0505 \\
CMDE & 0.0108(0.0905) & 0.0911 & & 0.0635(0.0905) & 0.1106 \\
CEVAE & 0.0249({\bf 0.0002})  & 0.0249 & & 0.0327(0.0633) & 0.0712\\
Ganite & 0.3519 (0.1533) &  0.3838 & & 0.4198(0.2278) & 0.4776 \\
X-learner-RF & {\bf 0.0056} (0.0257)  & 0.0252 & & 0.0157(0.0257) & 0.0301 \\
X-learner-Bart & 0.0194 (0.0192) & 0.0273 & & 0.0251(0.0312) & 0.0400  \\
CFRNet-Wass & 0.0189 (0.0425) & 0.0465 & & 0.0211(0.0254) & 0.0330 \\
CFRNet-MMD & 0.0439 (0.0146) & 0.0463 & & 0.0619(0.0158) & 0.0639 \\
\bottomrule
 \end{tabular}
 \end{adjustbox}
\end{table}

Finally, to provide more convincing evidence that the proposed method performs well when confounders are missing, we conducted an experiment 
where a significant confounder, {\it gestat10} (gestational age), is intentionally omitted. 
In preprocessing the dataset, we followed \citet{Louizos2017CausalEI} to
focus on the same-sex twin pairs with birth weights less than 2 kg, and used the variable \textit{gestat10} to generate ``pseudo treatment assignments''. Since \textit{gestat10} is also an important factor for newborn mortality, it serves as a significant confounder. We removed \textit{gestat10} from the dataset. 
The results in Table \ref{sim_twins} (right panel) show that CI-StoNet exhibits robust performance in presence of missing confounders. In this scenario, it 
outperforms all baselines in both the absolute error of ATE and RMSE, indicating the superiority of CI-StoNet gained from latent confounder imputation.

\section{Extension to other Causal Structures} \label{sect:extension}

The Markovian structure embedded in CI-StoNet provides it with 
great flexibility to model a wide range of causal structures. 
In this section, we extend CI-StoNet to handle other causal structures 
involving proxy variables. Specifically, we consider two scenarios:   the outcome  depending on the proxy, and the treatment depending on the proxy.

\subsection{Outcome Depending on Proxy}
When outcome depends on proxy, see Figure \ref{fig:CI-StoNet-proxy-outcome}(a),
the imputation of $\bZ$ is based on the following decomposition:
\[
\begin{split} 
\pi(\bZ | \bA,\bY, \bX) & \propto  \pi(\bZ) \pi(\bX|\bZ) \pi(\bA|\bZ) \pi(\bY|\bZ,\bA, \bX) 
      \propto \pi(\bZ|\bX) \pi(\bA|\bZ) \pi(\bY|\bZ,\bA, \bX). 
\end{split}
\]
Accordingly, the structure of the CI-StoNet can be arranged as follows: 
\begin{equation}\label{eq:stonet-proxy-outcome}
    \begin{split}
    \bZ &= \mu_1(\bX, \btheta_1) + \be_z, \\
    \bA &= \mu_2(\bZ, \btheta_{2})+ \be_{a},\\
    \bY & = \mu_3(\bX, \bZ, \bA, \btheta_3) + \be_{y}, \\
    \end{split}
\end{equation}
where $\be_z$, $\be_a$, and $\be_y$ denote Gaussian random errors. 
The corresponding diagram is 
shown in Figure \ref{fig:CI-StoNet-proxy-outcome}(b). 

\begin{figure}[H]
     \centering
\begin{tabular}{cc}
 (a) & (b) \\ 
   \includegraphics[scale=0.4]{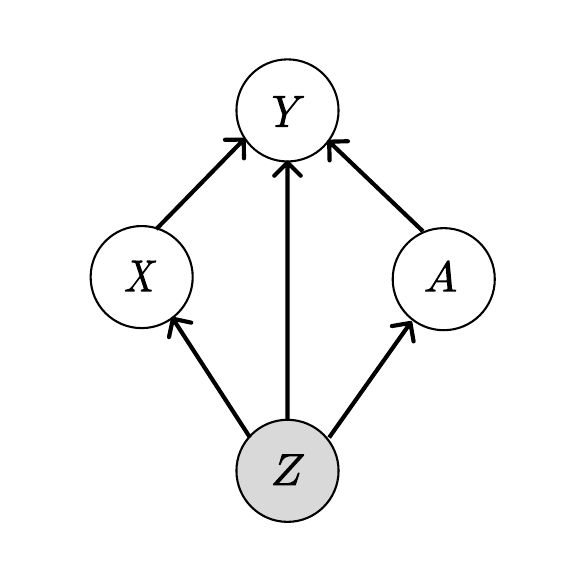}
    &   \includegraphics[scale=0.4]{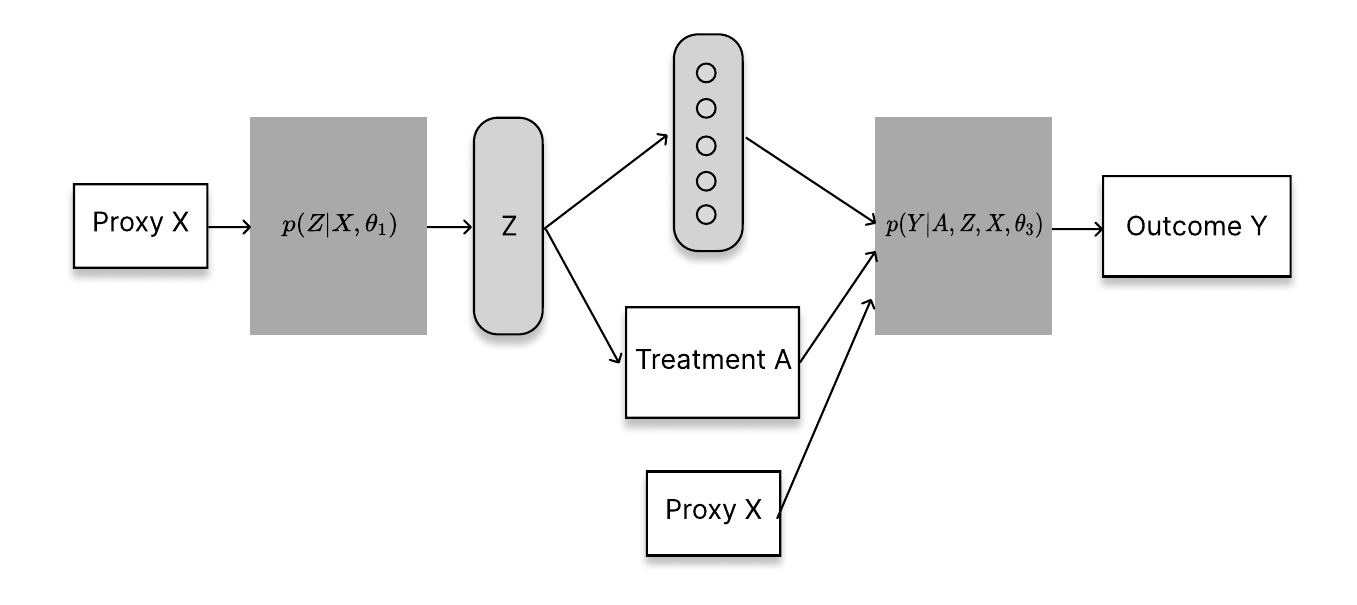}
\end{tabular}
    \caption{(a) Causal structure: outcome depends on the proxy; and (b) CI-StoNet structure for the scenario where outcome depends on the proxy.}
 \label{fig:CI-StoNet-proxy-outcome}
 \end{figure}

\subsection{Treatment Depending on Proxy}
When the treatment depends on the proxy, see Figure  
\ref{fig:proxy-treatment}(a),
the imputation of $\bZ$ 
is based on the decomposition:
\[
\begin{split} 
\pi(\bZ | \bA,\bY, \bX) & \propto  \pi(\bZ) \pi(\bX|\bZ) \pi(\bA|\bZ, \bX) \pi(\bY|\bZ,\bA) 
\propto \pi(\bZ|\bX) \pi(\bA|\bZ, \bX) \pi(\bY|\bZ,\bA). 
\end{split}
\]
The structure of the CI-StoNet can be arranged as follows: 
\begin{equation}\label{eq:stonet-proxy-treatment}
    \begin{split}
    \bZ &= \mu_1(\bX, \btheta_1) + \be_z, \\
    \bA &= \mu_{2}(\bZ, \bX, \btheta_{2})+ \be_{a},\\
    \bY & = \mu_3(\bZ, \bA, \btheta_3) + \be_{y}, \\
    \end{split}
\end{equation}
where $\be_z$, $\be_a$, and $\be_y$ denote Gaussian random errors. 
The corresponding diagram is 
shown in Figure \ref{fig:proxy-treatment}(b). 

\begin{figure}[H]
     \centering
     \begin{tabular}{cc}
     \includegraphics[scale=0.4]{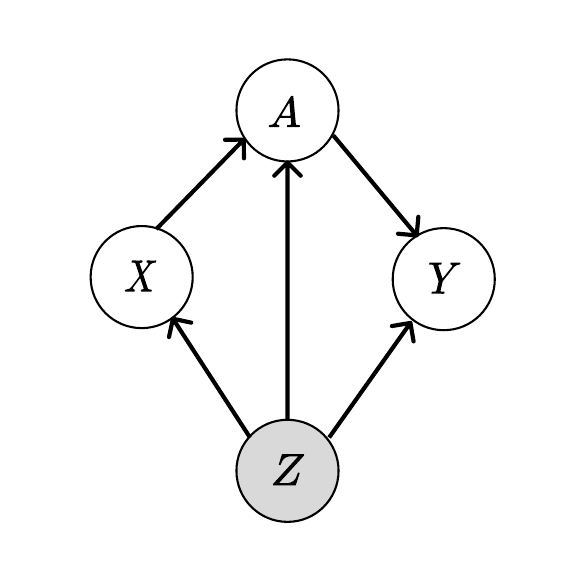}
      &  \includegraphics[scale=0.4]{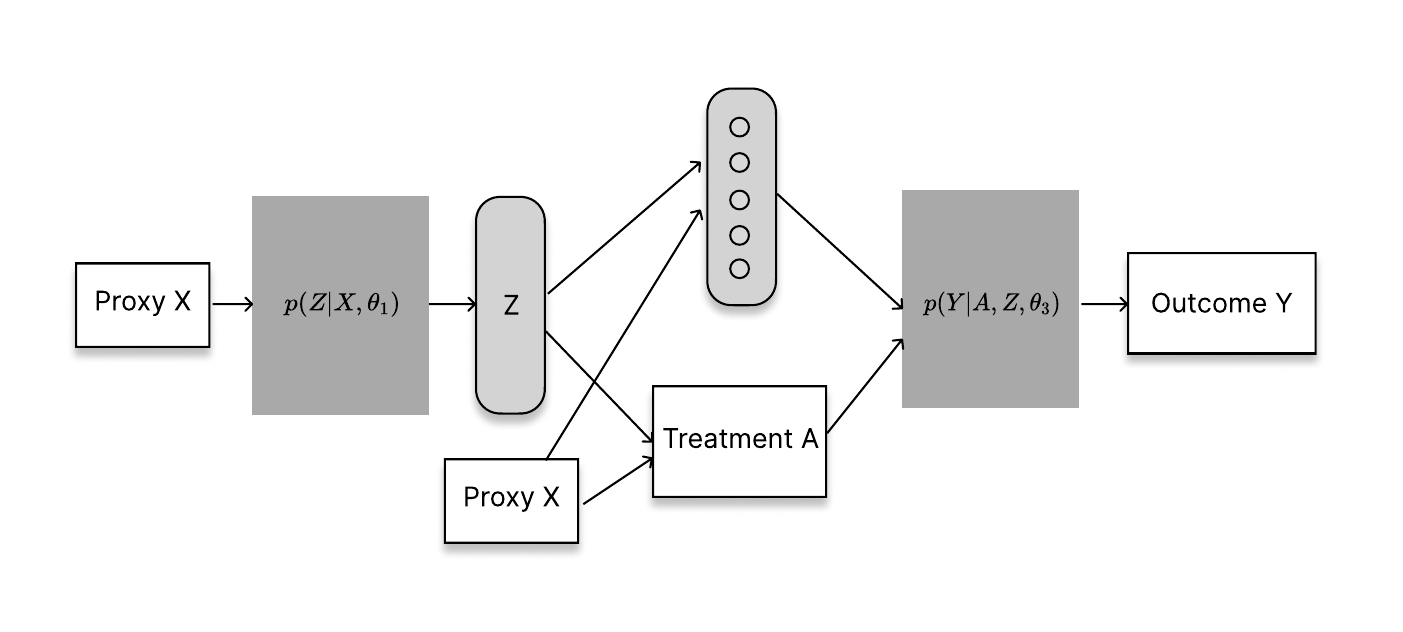}
    \end{tabular}
    \caption{(a) Causal structure and (b) CI-StoNet structure for the scenario where treatment depends on the proxy.}
    \label{fig:proxy-treatment}
\end{figure}

Both models can be trained using an adaptive stochastic gradient MCMC algorithm,
and the corresponding causal effects can be estimated based on 
the imputed confounders from $\pi(\bz|\bX,\hat{\btheta}_1^*)$.

For causal structures shown in Figures \ref{fig:CI-StoNet-proxy-outcome}(a) and \ref{fig:proxy-treatment}(a), $X$ is the proxy variable, $Z$ represents the missing confounder, and $A$ and $Y$ represents the treatment variable and outcome variable, respectively. The white nodes represent observed variables, while the light-grey node represent the unobserved variables. For the CI-StoNet structures shown in Figures \ref{fig:CI-StoNet-proxy-outcome}(b) and \ref{fig:proxy-treatment}(b), white rectangles represent variables from observed data; light-grey rounded-rectangles represent hidden neurons; and dark-grey rectangles represent network modules to learn respective conditional distributions.

\section{Experiment on DAG Misspecification}
\label{dag_misspec}
We conducted an experiemnt where the true data generating process is treatment-depending proxy and outcome-depending proxy, but we used basic proxy to fit the data. 
Following is data-generating procedure
\begin{enumerate}
    \item Confounder: $Z = (z_1, z_2, z_3, z_4, z_5) \sim N(0, I_5)$ 
    \item Proxy: for $j \in \{1. \dots, 50\}$, $X_j = h_1(Z) + \epsilon_X$, where $h_1(Z) = \sin(z_1) + 0.5z_2^2 + 0.3(z_3+z_4+z_5)$, $\epsilon_X \sim N(0, 0.5I_{50})$
    \item Treatment: let $h_2(Z) = 0.7z_1 + 0.3z_2 - 0.2z_3$, $g(X) = \frac{1}{d_X} \sum_{j=1}^{d_X} |X_j|$. For outcome depending proxy and basic proxy, $A \sim \text{Bernoulli}(\text{expit}(h_2(Z)))$. For treatment-depending proxy, $A \sim \text{Bernoulli}(\text{expit}(h_2(Z) + 0.5g(X)))$.
    \item Outcome: let $f(Z) = z_1^2 + 0.5z_2z_3$, $\tau(Z) = 3 + 0.5\sin(z_1)$, $q(X) = \frac{1}{d_X} \sum_{j=1}^{d_X} |X_j|$. For treatment-dependent proxy and basic proxy, $Y = f(Z) + A \cdot\tau(Z) + \epsilon_Y$. For outcome-dependent proxy, $Y = f(Z) + A \cdot\tau(Z) + \gamma q(X) + \epsilon_Y$. $\epsilon_Y \sim N(0, 0.5^2)$.
\end{enumerate}
The true ATE is 3. For each scenario, 10 simulation datasets are generated, each dataset contains 2000 training samples, 500 validation samples, and 500 test samples. We use the basic proxy structure (equation 15) to model all three scenarios, where basic proxy scenario provides a correct baseline while outcome-depending and treatment-depending provide demonstration of DAG misspecification. In-sample MAE is calculated across train and validation set, and out-of-sample MAE is calculated across test set. 

\begin{table}[h]
\caption{Mean Absolute Error of ATE estimation under DAG misspecification} 
\vspace{0.02in}
\label{DAG_misspec}
\centering
 \begin{tabular}{ccc}
 \toprule
  & In-Sample MAE & Out-of-Sample MAE\\ \midrule
 Basic Proxy & 0.0682 (0.0193) & 0.0702 (0.0207) \\
 Outcome-depending proxy & 0.0705 (0.0188) & 0.0976 (0.0236)\\
 Treatment-depending proxy & 0.0911 (0.0156) & 0.1239 (0.0178)\\
 \bottomrule
 \end{tabular}
\end{table}

\section{Diagnostics and Uncertainty Quantification}
\label{diagnostics_and_boostrap}
\paragraph{Diagnostics on latent overlap}
Although $Z$ is latent, we can still perform robustness diagnostics by sampling from its posterior. We show the procedure of diagnostics on latent overlap under simple confounding setting. Given that we have a overlap threshold $0 < \alpha <1$, and a propensity score model $\hat{e}(A|Z)$, A simple stress test is:
\begin{enumerate}
    \item For each unit $i$, draw  $\bZ_i^{(b)} \sim p_{\hat{\btheta}}(\bZ_i | \bA_i,\bY_i)$, \ b=1, \dots, B
    \item For each draw, compute propensities and a summary statistic of overlap
        \begin{equation*}
            S_{\alpha}^{(b)} = \frac{1}{n} \sum_{i=1}^n \mathbf{1}\{ \hat{e}_i^{(b)} < \alpha \ \text{or} > 1-\alpha \}
        \end{equation*}
    \item Report the mean $\overline{S}_{\alpha} = \frac{1}{B} \sum_{b} S_{\alpha}^{(b)}$ as diagnostic.
\end{enumerate}
This diagnostic does not “verify” the latent overlap assumptions, which is theoretically untestable, but it allows practitioners to detect violations or fragility of overlap under posterior draws of $Z$, analogous to how overlap is assessed with fully observed covariates.

\paragraph{Bootstrapped Confidence Interval}
As explained in the imputation-regularized optimization paper  \citep{Liang2018missing}, the uncertainty in the
``Monte Carlo average over imputed $Z$'' reflects only the uncertainty
due to the missing data (i.e., $Z$), and does not coincide with the
uncertainty of the causal effect itself. Estimating the uncertainty of
the causal effect requires additional methodology, such as bootstrapping or a method to compute the observed information matrix. 

To derive a bootstrap confidence interval, consider the following computationally-light bootstrap procedure:
\begin{enumerate}
    \item Fit CI-StoNet on the full dataset and save the final pruned parameters. The parameters $\hat{\btheta}$ are saved as the warm start for all bootstrap replicates.
    
    \item Draw a bootstrap sample of size $n$, i.e., sampling  indices $\{i_1^{b}, \dots, i_n^{b}\} \sim \text{i.i.d } \text{ Uniform}(1, \dots, n)$ with replacement, then construct the bootstrapped dataset $D^{b} = \{(A_{i_{j}^b}, X_{i_{j}^b}, Y_{i_{j}^b})\}_{j=1}^n$.
  
    \item Warm-start model initialization. Initialize model parameters for this replicate by copying $\hat{\btheta}$.
    \item Run a short SGD + SGHMC imputation phase to let the parameters to adjust to $D^{b}$.
    \item Compute bootstrap ATE estimate. Use the trained parameter $\hat{\btheta}^b$, calculate $\tau^b = \hat{E}^b[Y(1)] - \hat{E}^b[Y(0)]$.
\end{enumerate}
For the basic proxy scenario in DAG misspecification experiment (see details in \ref{dag_misspec}), 100 iterations of such bootstrapping are conducted for each of the 10 generated dataset. Note that the true value of ATE is 3. 

\begin{table}[h]
\caption{Bootstrapped bounds for basic proxy setting} 
\vspace{0.02in}
\label{DAG_misspec2}
\centering
 \begin{tabular}{cccc}
 \toprule
 DGP seed & $\hat{\tau}$ & $L_{\text{bootstrap}}$ & $U_{\text{bootstrap}}$\\ \midrule
 0 & 3.1365 & 3.0897 & 3.1581\\
 1 & 3.0057 & 2.9758 & 3.0574\\
 2 & 3.0074 & 2.9836 & 3.0720\\
 3 & 3.1582 & 3.1066 & 3.2113\\
 4 & 2.9718 & 2.9609 & 3.0404\\
 5 & 3.0123 & 2.9774 & 3.0801\\
 6 & 3.0515 & 3.0263 & 3.1113\\
 7 & 3.0556 & 3.0128 & 3.0904\\
 8 & 3.1846 & 3.1181 & 3.2119\\
 9 & 3.0241 & 2.9787 & 3.0471\\ 
 \bottomrule
 \end{tabular}
\end{table}

\section{Theoretical Proofs}\label{sect:proof}
The consistency of the estimator (\ref{Yest1}) can be established through several steps. First, we show that the estimator $\btheta^{(k)}=\{\btheta_1^{(k)},\btheta_2^{(k)}\}$ obtained from Algorithm \ref{algo:simple-confounding} converges in probability to a solution of (\ref{Fishereq}), denoted by 
$\hat{\btheta}_n^*$ (see Lemma \ref{thm:lem1}) A discussion on how to address the 
non-uniqueness of $\hat{\btheta}_n^*$ is followed. 
Next, we show that  $\hat{\btheta}_n^*$ is a consistent estimator of $\btheta^*$, the true parameter vector of the sparse StoNet defined in (\ref{eq:stonet-multiple-treatment}) (see Theorem \ref{thm:lem2}). Building on this result, we establish the consistency of the estimator  (\ref{Yest1}) (see Theorem \ref{thm:thm1}).
These results are presented in the following. 

\subsection{Convergence of $\btheta^{(k)}$}
\label{convergence-sgHMC}

To train the CI-StoNet using the IRO algorithm, it requires that the full dataset is used at each iteration, making the algorithm difficult to scale up to 
 large-scale neural networks. In contrast, the adaptive SGHMC algorithm can 
 use mini-batch data in parameter updating.  
 As shown in \citet{SDR_StoNet}, the adaptive SGHMC algorithm solves 
equation (\ref{Fishereq}) under the following conditions.  

{\it Notations:} We let $\bD$ denote a dataset of $n$ observations, and let $D_i$ denote the $i$-th observation of $\bD$. For StoNet, $D_i$ has included both the input and output variables of the observation. 
For the CI-StoNet, $D_i$ includes the treatment and outcome, i.e., $D_i=\{\bA_i, \bY_i\}$.  
For simplicity of notation, we re-denote the latent variable corresponding to $D_i$ by $Z_i$, and denote by $f_{D_i}(Z_i,\btheta)=-\log \pi(Z_i|D_i,\btheta)$ the negative log-density function of $Z_i$. 
Let $\bZ=(Z_1,Z_2,\ldots,Z_n)$, 
let $\bz=(z_1, z_2, \ldots,z_n)$ be a realization of $\bZ$, 
let $F_{\bD}(\bZ, \btheta)=\sum_{i=1}^n f_{D_i}(Z_i,\btheta)$, and let $H(\bZ,\btheta)=
\nabla_{\btheta} \log \pi(\bZ|\bA, \btheta)$. 
To study the convergence of 
the adaptive SGHMC algorithm presented in 
Algorithm \ref{algo:simple-confounding}, 
 we make the following assumptions:

\begin{assumption}\label{assB:1} 
\begin{itemize} 
\item[(i)] (Boundedness) The function $F_{\bD}(\cdot,\cdot)$ takes nonnegative real values, and there exist constants $A, B \geq 0$, such that 
$|F_{\bD}(0, \btheta^*)|\leq A$, $ \|\nabla_{\bZ} F_{\bD}(0, \btheta^*)\|\leq B$, 
 $\|\nabla_{\btheta} F_{\bD}(0, \btheta^*)\|\leq B$, and $\|H(0, \btheta^*)\|\leq B$.
 
\item[(ii)] (Smoothness) $F_{\bD}(\cdot, \cdot)$ is $M$-smooth and $H(\cdot, \cdot)$ is $M$-Lipschitz: there exists some constant $M>0$ such that for any $\bZ, \bZ'\in\mathbb{R}^{d_{z}}$ and any $\btheta, \btheta'\in \Theta$,
\begin{equation}
\nonumber
\begin{split}
& \|\nabla_{\bZ} F_{\bD}(\bZ, \btheta)-\nabla_{\bZ} F_{\bD}(\bZ', \btheta')\|\leq M\|\bZ-\bZ'\| + M\|\btheta - \btheta' \|, \\
& \|\nabla_{\btheta} F_{\bD}(\bZ, \btheta)-\nabla_{\btheta} F_{\bD}(\bZ', \btheta')\|\leq M\|\bZ-\bZ'\| + M\|\btheta - \btheta' \|, \\
&  \|H(\bZ, \btheta)-H(\bZ', \btheta')\|\leq M\|\bZ-\bZ'\| + M\|\btheta - \btheta' \|.
\end{split}
\end{equation}

\item[(iii)] (Dissipativity) For any $\btheta\in \Theta$, the function $F_{\bD}(\cdot, \btheta^*)$ is $(m, b)$-dissipative: there exist some constants $m>\frac{1}{2}$ and $b\geq 0$ such that
$\langle \bZ, \nabla_{\bZ} F_{\bD}(\bZ, \btheta^*)\rangle\geq m\|\bZ\|^2 -b$.

\item[(iv)] (Gradient noise) 
There exists a constant $\varsigma \in [0,1)$ such that for any $\bZ$ and $\btheta$,
$\mathbb{E} \| \nabla_{\bZ} \hat{F}_{\bD}(\bZ, \btheta)- \nabla_{\bZ} F_{\bD}(\bZ, \btheta)\|^2 \leq 2 \varsigma (M^2 \|\bZ\|^2 +M^2 \|\btheta-\btheta^*\|^2+B^2)$.
\end{itemize} 
\end{assumption} 

\begin{assumption}\label{assB:5}
The step size $\{\gamma_{k}\}_{k\in \mathbb{N}}$ is a positive decreasing sequence such that $\gamma_{k} \rightarrow 0$ and  $\sum_{k=1}^{\infty} \gamma_{k} = \infty$.
In addition, let $h(\btheta) = \mathbb{E}(H({\bZ}, \btheta))$, then 
there exists $\delta > 0$ such that  for any $\btheta \in \Theta$,
$\langle \btheta - \btheta^*, h(\btheta)) \rangle \geq \delta \|\btheta - \btheta^* \|^2$, and 
$\liminf_{k\rightarrow \infty} 2\delta\frac{\gamma_k}{\gamma_{k+1}} + \frac{\gamma_{k+1} - \gamma_{k}}{\gamma_{k+1}^2} > 0$.
\end{assumption}

 \begin{assumption} \label{assB:5b} (Solution of Poisson equation) 
For any $\btheta\in\Theta$, $\bz \in \mZ$, and a function $V(\bz)=1+\|\bz\|$, 
there exists a function $\mu_{\btheta}$ on $\mZ$ that solves the Poisson equation $\mu_{\btheta}(\bz)-\mathcal{T}_{\btheta}\mu_{\btheta}(\bz)={H}(\btheta,\bz)-h(\btheta)$, 
where $\mathcal{T}_{\btheta}$ denotes a probability transition kernel with
$\mathcal{T}_{\btheta}\mu_{\btheta}(\bz)=\int_{\mZ} \mu_{\btheta}(\bz')\mathcal{T}_{\btheta}(\bz,\bz') d\bz'$,  such that 
\begin{equation} \label{poissoneq0}
 {H}(\btheta_k,\bz_{k+1})=h(\btheta_k)+\mu_{\btheta_k}(\bz_{k+1})-\mathcal{T}_{\btheta_k}\mu_{\btheta_k}(\bz_{k+1}), \quad k=1,2,\ldots.
\end{equation}
Moreover, for all $\btheta, \btheta'\in \Theta$ and $\bz\in \mZ$, we have 
$\|\mu_{\btheta}(\bz)-\mu_{\btheta'}(\bz)\|   \leq \varsigma_1 \|\btheta-\btheta'\| V(\bz)$ and
$\|\mu_{\btheta}(\bz) \| \leq \varsigma_2 V(\bz)$ for some constants $\varsigma_1>0$ and $\varsigma_2>0$. 
\end{assumption}

\begin{lemma} \label{thm:lem1} (Theorem S1, \cite{SDR_StoNet}) 
Suppose Assumptions \ref{assB:1}-\ref{assB:5b} hold. For  Algorithm \ref{algo:simple-confounding}, 
if we set $\epsilon_{k}=C_{\epsilon}/(c_e+k^{\alpha})$ and $\gamma_{k}=C_{\gamma}/(c_g+k^{\alpha})$ for some constants $\alpha \in (0,1)$, $C_{\epsilon}>0$, $C_{\gamma}>0$, $c_e\geq 0$ and $c_g \geq 0$,  then there exists an iteration $k_0$ and a constant $\Lambda_0>0$ such that for any $k>k_0$, 
\begin{equation} \label{eq:L2convergence}
\mathbb{E}(\|\btheta^{(k)} - \hat{\btheta}_n^*\|^2) \leq \Lambda_0 \gamma_{k},
\end{equation}
where $\hat{\btheta}_n^*$ denotes a solution to Eq.~(\ref{Fishereq}), i.e., 
$\hat{\btheta}_n^* \in \mL=\{\btheta: \nabla_{\btheta} \log \pi(\btheta|\bA,\bY)=0\}$; and 
$\Lambda_0=\Lambda_0'+ 6 \sqrt{6} C_{\btheta}^{1/2}( (3M^2+\varsigma_2)C_{\bZ} + 
3 M^2 C_{\btheta} + 3B^2+ \varsigma_2^2)^{1/2}$ for some positive constants 
$\Lambda_0'$, $C_{\btheta}$, and $C_{\bZ}$. 
\end{lemma}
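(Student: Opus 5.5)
This lemma is quoted as Theorem S1 of \cite{SDR_StoNet}, so the plan is to check that the CI-StoNet setting fits that framework and then follow the standard stochastic-approximation argument with Markov-chain (Langevin/HMC) noise.

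\textbf{Step 1: map the algorithm to the template.} Identify the latent state $\bZ$ with the hidden neurons of the StoNet and $D_i=\{\bA_i,\bY_i\}$ with the observed input/output. Step 1 of Algorithm~\ref{algo:simple-confounding} is then an SGHMC transition $\mathcal{T}_{\btheta}$ targeting $\pi(\bZ\mid\bA,\bY,\btheta)$. Step 2 is a Robbins--Monro update
\[
\btheta^{(k+1)}=\btheta^{(k)}+\gamma_{k+1}H(\bZ^{(k+1)},\btheta^{(k)}),
\]
with mean field $h(\btheta)$. By the Fisher identity (\ref{Fishereq0}), the zeros of $h$ are exactly the set $\mL$ of solutions to (\ref{Fishereq}).

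\textbf{Step 2: control the sampler.} Under the smoothness and dissipativity conditions of Assumption~\ref{assB:1} and the gradient-noise bound, I will derive uniform moment bounds $\sup_k\mathbb{E}\|\bZ^{(k)}\|^2\le C_{\bZ}$ and $\sup_k\mathbb{E}\|\btheta^{(k)}\|^2\le C_{\btheta}$. The route is a Lyapunov function of HMC type, $F_{\bD}+\tfrac12\|\mathbf v\|^2+$ a cross term, taken with decreasing step sizes $\epsilon_k$. These constants are what appear in $\Lambda_0$.

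\textbf{Step 3: the recursion for the error.} Let $T_k=\btheta^{(k)}-\hat{\btheta}_n^*$ and expand $\|T_{k+1}\|^2$. Using the Poisson equation from Assumption~\ref{assB:5b}, decompose
\[
H=h+\bigl(\mu_{\btheta_k}-\mathcal{T}_{\btheta_k}\mu_{\btheta_k}\bigr).
\]
The drift term gives $-2\gamma_{k+1}\delta\|T_k\|^2$ by the monotonicity in Assumption~\ref{assB:5}. The Poisson term splits into three pieces: a martingale difference with zero mean; a telescoping piece; and a remainder bounded via the Lipschitz property $\varsigma_1\|\btheta_{k+1}-\btheta_k\|V(\bz)$, which is $O(\gamma_k^2)$ once the moment bounds are used. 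The squared-increment term is $O(\gamma_{k+1}^2)$ with constant $3M^2C_{\bZ}+3M^2C_{\btheta}+3B^2$. This yields
\[
\mathbb{E}\|T_{k+1}\|^2\le(1-2\delta\gamma_{k+1})\,\mathbb{E}\|T_k\|^2+C\gamma_{k+1}^2+\text{(telescoping terms)}.
\]
Induction with $\gamma_k=C_\gamma/(c_g+k^\alpha)$ and the liminf condition in Assumption~\ref{assB:5} then gives $\mathbb{E}\|T_k\|^2\le\Lambda_0\gamma_k$ for $k>k_0$. The Cauchy--Schwarz treatment of the telescoping cross terms produces the $6\sqrt6\,C_{\btheta}^{1/2}(\cdots)^{1/2}$ part of $\Lambda_0$.

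\textbf{Main obstacle.} The hardest part is the Poisson/telescoping term. One must pair $\langle T_k,\mathcal{T}_{\btheta_{k-1}}\mu_{\btheta_{k-1}}(\bz_k)\rangle$ across consecutive iterations so that they cancel up to $O(\gamma_k^2)$. This needs care because the kernel changes with $\btheta$ and $\epsilon_k$ also decays, which makes the chain inhomogeneous. I would handle it by absorbing the sampler bias through the $\varsigma_1$-Lipschitz condition and keeping the telescoping sum as an extra sequence inside the induction hypothesis, as is done in \cite{SDR_StoNet}.
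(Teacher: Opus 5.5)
Your proposal follows the same route as the paper. The paper offers no argument of its own beyond citing Theorem S1 of \cite{SDR_StoNet}, and your sketch is the argument behind that theorem: the Poisson-equation split into a martingale piece, a $\varsigma_1$-Lipschitz remainder and a telescoping piece; the $(1-2\delta\gamma_{k+1})$ recursion closed by the liminf condition; and Cauchy--Schwarz on the surviving telescoping boundary term, which gives the $6\sqrt{6}\,C_{\btheta}^{1/2}(\cdots)^{1/2}$ part of $\Lambda_0$. Two points need tightening. First, the bound $C_{\btheta}$ in your Step 2 cannot come from Assumption~\ref{assB:1} alone: dissipativity there holds only at $\btheta^*$, and nothing in that assumption controls the direction of the $\btheta$-drift, so $C_{\btheta}$ must be derived jointly with $C_{\bZ}$ using the stability condition of Assumption~\ref{assB:5} (this is where the paper points to Lemma S1 of \cite{SDR_StoNet}). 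Second, your drift term $-2\delta\gamma_{k+1}\|T_k\|^2$ requires reading that condition as $\langle \btheta-\btheta^*, h(\btheta)\rangle \le -\delta\|\btheta-\btheta^*\|^2$, since the displayed sign conflicts with the ascent update; under the corrected sign, every $\hat{\btheta}_n^*\in\mL$ equals $\btheta^*$, which is what justifies centering $T_k$ at $\hat{\btheta}_n^*$.
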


\begin{proof} Lemma \ref{thm:lem1} is 
a restatement of Theorem S1 of \cite{SDR_StoNet}, and its proof is thus omitted.
\end{proof}

Refer to Lemma S1 of \cite{SDR_StoNet} for the derivation of 
the constants $C_{\btheta}$ and $C_{\bZ}$, which 
indicate the dependence of the convergence of $\btheta^{(k)}$ on 
the structure of the StoNet (\ref{eq:stonet-multiple-treatment}).  
As a consequence of the $l_2$-convergence (\ref{eq:L2convergence}), we 
immediately have $\|\btheta^{(k)} - \hat{\btheta}_n^*\|\stackrel{p}{\to} 0$ as  $k\to \infty$, where $\stackrel{p}{\to}$ denotes convergence in probability. 

\begin{remark}  \label{Rem1}
For neural networks, it is known that their loss function is invariant under certain transformations of the connection weights, such as reordering hidden neurons within a layer or jointly changing the signs or scales of specific weights and biases, refer to, e.g., \cite{Liang2018BNN} and \cite{SunSLiang2021} for detailed discussions. As a result, the solution $\hat{\btheta}_n^*$    is not unique, and  all such solutions can be viewed as belonging to 
an equivalence class of unique solutions, defined by loss-invariant transformations. 
This equivalence class forms a reduced representation of the parameter space, where each member corresponds 
to a distinct network (i.e., not transformable into another via loss-invariant operations) and may have a different loss value. The consistency results established in this paper apply specifically to this reduced space of neural networks.
\end{remark}


\subsection{Consistency of $\hat{\btheta}_n^*$} 

\subsubsection{Consistency of the IRO Algorithm}
\label{sect:IRO}

\paragraph{The IRO Algorithm}

The IRO algorithm \citep{Liang2018missing} starts with an initial weight setting $\hat{\btheta}^{(0)} = (\hat{\btheta}_1^{(0)}, \hat{\btheta}_2^{(0)})$ and then
iterates between the imputation of latent confounders and regularized optimization for parameter updating:
\begin{itemize}
    \item \textbf{Imputation:} simulate \(\bz^{(t+1)}_i\) from the predictive distribution:
    \[
    \pi(\bz_i \mid \by_i, \ba_i, \hat{\btheta}^{(t)}, \bsigma_{CI}^2) \propto \pi(\bz_i\mid \ba_i, \hat{\btheta}_{1}^{(t)}, \sigma_z^2) \pi(\by_i \mid \bz_i, \ba_i, \hat{\btheta}_2^{(t)}, \sigma_y^2)
    \]
    where \(t\) indexes iterations, and $\bsigma_{CI}^2=(\sigma_z^2, \sigma_y^2)$.
    
    \item \textbf{Regularized optimization:} Given the pseudo-complete data 
    \(\{(\by_i, \bz_i^{(t+1)}, \ba_i): i = 1, 2, \ldots, n\}\), update \(\hat{\btheta}^{(t+1)}\) by maximizing the penalized log-likelihood function as follows:
    \begin{equation}
    \label{iter-estimator}
    \hat{\btheta}^{(t+1)} = \arg \max_{\btheta} \bigl\{ \frac{1}{n} \sum_{i=1}^n \log\pi(\by_i, \bz_i^{(t+1)}|\ba_i, \btheta, \bsigma_{CI}^2) - \frac{1}{n} \log P_{\lambda_n} (\btheta) \bigr\}.
    \end{equation}
\end{itemize}

    The penalty function $\frac{1}{n}\log P_{\lambda_n}(\btheta)$ satisfies 
    some conditions (see Assumption \ref{asump:penalty}) such that $\hat{\btheta}^{(t+1)}$ forms a consistent estimator, uniformly over iterations, for the working parameter
\begin{equation}
\begin{split}
    \label{iter-working-estimator}
    \btheta_*^{(t+1)} = & \arg\max_\btheta \mathbb{E}_{\hat{\btheta}^{(t)}} \log \pi(\by, \bz|\ba, \btheta, \bsigma_{CI}^2) \\
                      = & \arg\max_\btheta \int \log \pi(\by, \bz|\ba, \btheta, \bsigma_{CI}^2)
                      \pi(\bz \mid \by, \ba, \hat{\btheta}^{(t)}, \sigma_{z}^2) \pi(\by \mid \ba, \btheta^*, \sigma_{y}^2) d\bz d\by,\\
\end{split}
\end{equation}
where \(\btheta^*\) denotes the true parameter value of the CI-StoNet model.

\paragraph{Consistency of Parameter Estimation}

The main proof for the consistency of parameter estimation is built on the theoretical framework developed in \citet{Liang2018missing}. Let $\tilde{\bx} = (\bA, \bY, \bZ)$ be the complete data, which is a collection of observed variable and latent variables. Define
\[
\begin{split}
    G_n(\btheta \mid \hat{\btheta}^{(t)}) & = \int \log \pi(\by, \bz|\ba, \btheta, \bsigma_{CI}^2)
                      \pi(\bz \mid \by, \ba, \hat{\btheta}^{(t)}, \sigma_{z}^2) \pi(\by \mid \ba, \btheta^*, \sigma_{y}^2) d\bz d\by,\\ 
    \hat{G}_n(\btheta \mid \tilde{\bx}, \hat{\btheta}^{(t)}) & = \frac{1}{n} \sum_{i=1}^n \log\pi(\by_i, \bz_i|\ba_i,\btheta, \bsigma_{CI}^2),  \quad \bz_i \sim \pi(\bz|\by_i, \ba_i, \hat{\btheta}^{(t)}, \sigma_{z}^2), \\
\end{split}
\]

\begin{lemma}\label{lem:IRO1}
(Theorem 1; \citet{Liang2018missing}) 
Let $T$ denote the total number of iterations of the IRO algorithm. 
Under mild regularity conditions (See Assumptions 1-3 in \cite{Liang2018missing}), the following uniform law of large numbers holds for any \(T\), with \(\log(T) = o(n)\):
\begin{eqnarray}
& \sup_{\hat{\btheta}^{(t)} \in \btheta^T} \sup_{\btheta \in \Theta} 
\left| \hat{G}_n(\btheta \mid \tilde{\bx}, \hat{\btheta}^{(t)}) - G_n(\btheta \mid \hat{\btheta}^{(t)}) \right| & \overset{p}{\to} 0, \label{IROeq2} 
\end{eqnarray}
as the sample size $n \to \infty$. 
\end{lemma}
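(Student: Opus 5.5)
The statement is Lemma~\ref{lem:IRO1}: a uniform law of large numbers for $\hat G_n$ over $\btheta\in\Theta$ and over the $T$ iterates $\hat\btheta^{(t)}$ produced by IRO. The key feature is that the supremum over $\hat\btheta^{(t)}$ ranges over a \emph{finite} set of at most $T$ elements. So I would prove a ULLN for a single fixed conditioning parameter with an exponential tail bound, and then take a union bound over the $T$ iterates. This is exactly where the condition $\log T=o(n)$ enters.

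\textbf{Step 1 (fixed iterate).} Fix $\tilde\btheta=\hat\btheta^{(t)}$. Conditional on $\tilde\btheta$, the imputed triples $(\ba_i,\by_i,\bz_i)$ are i.i.d.\ draws from $\pi(\ba,\by)\,\pi(\bz\mid \by,\ba,\tilde\btheta)$, whose mean is $G_n(\cdot\mid\tilde\btheta)$. To handle the dependence of $\hat\btheta^{(t)}$ on the data, I would follow \citet{Liang2018missing}: the imputation at step $t+1$ uses fresh randomness given $\hat\btheta^{(t)}$, or one conditions on the iterate, or uses sample splitting in the formal argument. Under their Assumptions 1--3, the summand $\log\pi(\by,\bz\mid\ba,\btheta,\bsigma^2_{CI})$ has the following properties.
\begin{itemize}
\item[(a)] It has an envelope with finite moment generating function. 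For the Gaussian StoNet this means a quadratic in $(\by,\bz)$ with bounded network outputs on a compact $\Theta$.
\item[(b)] It is Lipschitz in $\btheta$, with a Lipschitz constant that has sub-exponential tails.
\end{itemize}

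\textbf{Step 2 (pointwise bound and covering).} Cover $\Theta$ by an $\epsilon$-net of size $N(\epsilon)$. At each net point, Bernstein's inequality gives a tail bound of order $\exp(-cn\delta^2)$. The Lipschitz property in (b) controls the oscillation between a parameter and its nearest net point, with high probability, by $\epsilon\cdot O_p(1)$. This yields
\[
P\Bigl(\sup_{\btheta}|\hat G_n-G_n|>\delta\Bigr)\le N(\epsilon)\,e^{-cn\delta^2}+r_n,
\]
uniformly in $\tilde\btheta$, where $r_n\to0$.

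\textbf{Step 3 (union over iterates).} A union bound over the $T$ iterates multiplies the bound by $T$. Since $\log T=o(n)$, and $\log N(\epsilon)=o(n)$ for the sparse network class, the product $T\,N(\epsilon)\,e^{-cn\delta^2}$ tends to $0$. Letting $\epsilon\downarrow0$ then completes the argument.

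\textbf{Main obstacle.} The hardest part is the uniformity in $\tilde\btheta$. Both the constants in the moment bounds and the remainder $r_n$ must not depend on the iterate. I would secure this by showing that the posterior $\pi(\bz\mid\by,\ba,\tilde\btheta)$ has uniformly controlled second moments over compact $\Theta$; this is available because it is Gaussian-prior times Gaussian-likelihood with bounded means. A secondary difficulty is that $\dim\Theta=K_n$ grows with $n$. This requires $K_n\log(1/\epsilon)=o(n)$, which follows from the sparse-DNN conditions of \citet{SunSLiang2021}.
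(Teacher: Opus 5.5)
The paper gives no proof of this lemma; it imports it as Theorem~1 of \citet{Liang2018missing}. Your skeleton is the natural route, and the one the condition $\log T=o(n)$ points to: an exponential tail bound for a fixed conditioning value, followed by a union bound over the $T$ iterates. Two steps, however, fail as written.

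\textbf{Dependence of the iterates on the data.} Step~1 claims that, conditional on $\tilde\btheta=\hat\btheta^{(t)}$, the triples are i.i.d.\ with $(\ba_i,\by_i)$ still following their true law. This is false, because $\hat\btheta^{(t)}$ is computed from those same $(\ba_i,\by_i)$ and from earlier imputations. A bound proved uniformly over \emph{fixed} $\tilde\btheta$ therefore cannot be evaluated at the random iterate. Fresh imputation randomness only takes care of the $\bz_i$'s, and sample splitting proves a statement about a different algorithm. The missing step is the decomposition
\[
\hat G_n-G_n=\frac1n\sum_{i=1}^n\bigl\{\ell_{\btheta}(\ba_i,\by_i,\bz_i)-h_{\btheta,\hat\btheta^{(t)}}(\ba_i,\by_i)\bigr\}+\Bigl\{\frac1n\sum_{i=1}^n h_{\btheta,\btheta'}(\ba_i,\by_i)-\E\,h_{\btheta,\btheta'}\Bigr\}\Big|_{\btheta'=\hat\btheta^{(t)}},
\]
where $\ell_{\btheta}(\ba,\by,\bz)=\log\pi(\by,\bz\mid\ba,\btheta,\bsigma_{CI}^2)$ and $h_{\btheta,\btheta'}(\ba,\by)=\int\ell_{\btheta}(\ba,\by,\bz)\,\pi(\bz\mid\by,\ba,\btheta',\sigma_z^2)\,d\bz$. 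Here $\E$ is taken over the observed-data law used in $G_n$. Given the data and the past iterates, the first sum is an average of independent mean-zero terms. Your Bernstein-plus-net bound and the union over $T$ therefore apply to it, and this is exactly where $\log T=o(n)$ enters. The second term is an empirical process in the observed data alone. It must be controlled by a supremum over $(\btheta,\btheta')\in\Theta\times\Theta$ taken \emph{inside} the probability; no union over $T$ is needed there.

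\textbf{The entropy claim.} This step fails in the regime where the paper uses the lemma. Lemma~\ref{thm:penalty}, on which Theorem~\ref{thm:lem2} rests, assumes $K_n\succ n$, with $K_n=O(n^{\zeta})$ for some $\zeta>1$. An $\epsilon$-net of the full $K_n$-dimensional $\Theta$ then has $\log N(\epsilon)\asymp K_n\log(1/\epsilon)\gg n$, so $T\,N(\epsilon)\,e^{-cn\delta^2}$ does not tend to zero. Assumption~\ref{SunSLiang_assump_A.2} bounds the connectivity $r_n$ of the approximating sparse network, not the dimension of the set over which the supremum is taken. It therefore cannot give $K_n\log(1/\epsilon)=o(n)$. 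Nor is this only a weakness of the net argument: a class with $K_n\gg n$ free weights and polynomially large weight bounds can typically interpolate the $n$ pseudo-complete observations, which keeps the empirical log-likelihood away from its mean. To close the argument you have two options:
\begin{itemize}
\item import an explicit bracketing or entropy bound on $\{\log\pi(\cdot\mid\btheta):\btheta\in\Theta\}$ as one of the regularity conditions; or
\item restrict the supremum to effectively sparse $\btheta$ with at most $s$ active weights. There $\log N(\epsilon)\lesssim s\log K_n+s\log(1/\epsilon)$, which is $o(n)$ when $s\log K_n=o(n)$.
\end{itemize}
Whichever you choose, the same entropy control is also needed for the $\btheta'$-index of the second term above.
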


\begin{assumption}\label{asump:consistency-1}
For each \(t = 1, 2, \ldots, T\), \(G_n(\btheta \mid \hat{\btheta}^{(t)})\) has a unique maximum (up to loss-invariant transformations) at \(\btheta_*^{(t)}\); for any \(\epsilon > 0\),
$\sup_{\btheta \in \Theta \setminus B_t(\epsilon)} G_n(\btheta \mid \hat{\btheta}^{(t)})$ exists, where  $B_t(\epsilon) = \{\btheta \in \Theta : \|\btheta - \btheta_*^{(t)}\| < \epsilon\}$.
Let $\delta_t = G_n(\btheta_*^{(t)} \mid \hat{\btheta}^{(t)}) - \sup_{\btheta \in \Theta \setminus B_t(\epsilon)} G_n(\btheta \mid \hat{\btheta}^{(t)})$, $\delta = \min_{t \in \{1, 2, \ldots, T\}} \delta_t > 0$ holds.
\end{assumption}

Assumption \ref{asump:consistency-1} restricts the shape of $G_n(\btheta|\hat{\btheta}^{(t)})$ around the global maximizer,  ensuring that it is neither discontinuous nor too flat. Given the nonidentifiability of neural network models, 
Assumption \ref{asump:consistency-1} implicitly assumes that each $\btheta$ is unique up to loss-invariant transformations, such as reordering the hidden neurons within the same layer or simultaneously altering the signs or scales of certain weights and biases, see e.g., \cite{Liang2018BNN} and \cite{SunSLiang2021} for further discussions. 
Alternatively, the optimal solutions can be considered as belonging to an equivalence class, subject to appropriate loss-invariant transformations, with the uniqueness assumption applying to this equivalence class.

Furthermore, consider the 
mapping \(M(\btheta)\) defined by 
\[
M(\btheta) = \arg\max_{\btheta'} \mathbb{E}_{\btheta} \log \pi(\bY, \bZ|\ba,\btheta', \bsigma_{CI}^2).
\]
As argued in \citet{Liang2018missing} and \citet{Nielsen2000}, it is reasonable to assume that the mapping is a contraction, as a recursive application of the mapping, i.e., setting 
\[
\hat{\btheta}^{(t+1)} = \btheta_*^{(t+1)} = M(\hat{\btheta}^{(t)}),
\]
leads to a monotone increase of the target expectations 
$\mathbb{E}_{\hat{\btheta}^{(t)}} \log \pi(\bY, \bZ|\ba, \btheta, \bsigma_{CI}^2)$ 
for $t=1,2,\ldots$. 

\begin{assumption} \label{asump:consistency-2}  The mapping \(M(\btheta)\) is differentiable. Let \(\rho_n(\btheta)\) be the largest singular value of \(\partial M(\btheta) / \partial \btheta\). There exists a number \(\rho^* < 1\) such that \(\rho_n(\btheta) \leq \rho^*\) for all \(\btheta \in \Theta\) for sufficiently large \(n\) and almost every observed sequence of $(\bA, \bY)$.
\end{assumption}

\begin{assumption}\label{asump:penalty}
The penalty function \( \frac{1}{n} \log P_{\lambda_n}(\btheta)\) converges to \(0\) uniformly over the set \(\{\btheta_*^{(t)} : t = 1, 2, \ldots, T\}\) as \(n \to \infty\), where \(\lambda_n\) is a regularization parameter and its value can depend on the sample size \(n\).
\end{assumption}

\begin{lemma} \label{thm:IROconsistency} (Theorem 4; \cite{Liang2018missing}) Suppose the conditions of Lemma \ref{lem:IRO1}, Assumptions \ref{asump:consistency-1}-\ref{asump:penalty} hold, and $\sup_{n,t} \mathbb{E} \|\hat{\btheta}_n^{(t)}\| <\infty$ hold. 
Then for sufficiently large $t$ 
and almost every $(\bA,\bY)$-sequence,  $\|\hat{\btheta}_n^{(t)}- \btheta^*\| \stackrel{p}{\to} 0$, as $n\to \infty$.  
\end{lemma}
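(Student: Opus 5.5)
The statement just above is Lemma \ref{thm:IROconsistency}, a restatement of Theorem 4 of \citet{Liang2018missing}. The plan is to reproduce the structure of that argument for the CI-StoNet complete-data likelihood. It combines three ingredients: the uniform law of large numbers of Lemma \ref{lem:IRO1}, the well-separated maximum condition of Assumption \ref{asump:consistency-1}, and the contraction property of Assumption \ref{asump:consistency-2}.

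\textbf{Step 1: one-step consistency, uniformly in $t$.} We show that $\sup_{t\le T}\|\hat{\btheta}^{(t+1)}-\btheta_*^{(t+1)}\|\stackrel{p}{\to}0$. Fix $\epsilon>0$ and suppose $\hat{\btheta}^{(t+1)}\notin B_{t+1}(\epsilon)$. By Assumption \ref{asump:consistency-1}, $G_n(\hat{\btheta}^{(t+1)}\mid\hat{\btheta}^{(t)})\le G_n(\btheta_*^{(t+1)}\mid\hat{\btheta}^{(t)})-\delta$. Since $\hat{\btheta}^{(t+1)}$ maximizes the penalized empirical criterion,
\[
\hat G_n(\hat{\btheta}^{(t+1)}\mid\tilde\bx,\hat{\btheta}^{(t)})-\tfrac1n\log P_{\lambda_n}(\hat{\btheta}^{(t+1)})\ge \hat G_n(\btheta_*^{(t+1)}\mid\tilde\bx,\hat{\btheta}^{(t)})-\tfrac1n\log P_{\lambda_n}(\btheta_*^{(t+1)}).
\]
Replace $\hat G_n$ by $G_n$ at the cost of $2\sup|\hat G_n-G_n|$, which is $o_p(1)$ uniformly over $t$ by (\ref{IROeq2}). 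The penalty at $\btheta_*^{(t+1)}$ is $o(1)$ uniformly by Assumption \ref{asump:penalty}. The penalty at $\hat{\btheta}^{(t+1)}$ needs a one-sided bound: for the mixture Gaussian prior (\ref{eq:mixtureprior}), $-\frac1n\log P_{\lambda_n}$ is bounded below by a term that is $o(1)$ on bounded sets, and the bound $\sup_{n,t}\mathbb{E}\|\hat{\btheta}_n^{(t)}\|<\infty$ together with Markov's inequality keeps us on such a set with high probability. These pieces contradict the $\delta$-gap with probability tending to one. Because $\delta=\min_t\delta_t>0$, the conclusion holds uniformly over $t\le T$, with $\log T=o(n)$ inherited from Lemma \ref{lem:IRO1}. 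All statements are read modulo the loss-invariant equivalence classes of Remark \ref{Rem1}.

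\textbf{Step 2: contraction and propagation.} We have $\btheta_*^{(t+1)}=M(\hat{\btheta}^{(t)})$, and $\btheta^*$ is a fixed point of $M$: under the true model, the EM-type map $M$ fixes the true parameter by the information inequality, since $\mathbb{E}_{\btheta^*}\log\pi(\bY,\bZ\mid\ba,\btheta)$ is maximized at $\btheta^*$. By the mean value inequality and Assumption \ref{asump:consistency-2}, $\|M(\btheta)-\btheta^*\|\le\rho^*\|\btheta-\btheta^*\|$. Setting $e_t=\|\hat{\btheta}^{(t)}-\btheta^*\|$ and $\eta_n=\sup_t\|\hat{\btheta}^{(t+1)}-\btheta_*^{(t+1)}\|$, we get $e_{t+1}\le\rho^* e_t+\eta_n$. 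Unrolling gives
\[
e_t\le(\rho^*)^t e_0+\eta_n/(1-\rho^*).
\]
The first term is made small by taking $t$ large; it is bounded in probability through the moment assumption on $\|\hat{\btheta}^{(0)}\|$. The second term is $o_p(1)$ by Step 1. Together these yield $\|\hat{\btheta}_n^{(t)}-\btheta^*\|\stackrel{p}{\to}0$ for sufficiently large $t$ and almost every $(\bA,\bY)$-sequence.

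\textbf{Main obstacle.} The hardest part is making Step 1 genuinely uniform in $t$. The data-dependent iterate $\hat{\btheta}^{(t)}$ enters both the imputation distribution and the target $\btheta_*^{(t+1)}$. This is exactly why (\ref{IROeq2}) is stated as a supremum over $\hat{\btheta}^{(t)}$, and I would lean on it directly. A secondary subtlety is the fixed-point claim $M(\btheta^*)=\btheta^*$ in the presence of non-identifiability. I would handle it by working on the reduced parameter space of equivalence classes, where Assumption \ref{asump:consistency-1} posits uniqueness.
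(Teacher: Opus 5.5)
Your proposal is correct and follows essentially the same route as the paper, which gives no proof of its own and simply invokes Theorem 4 of \citet{Liang2018missing}; that argument is exactly your combination of uniform one-step consistency (ULLN, well-separated maximum, penalty condition) with the contraction recursion anchored at the fixed point $M(\btheta^*)=\btheta^*$. One small tidy-up concerns the one-sided penalty bound at $\hat{\btheta}^{(t+1)}$: it holds globally once $\log\pi(\btheta)$ is centered by $K_n\log(\sqrt{2\pi}\sigma_0)$, as in Lemma \ref{thm:penalty}, because each coordinate's mixture density is at most $1/(\sqrt{2\pi}\sigma_0)$ when $\sigma_0<\sigma_1$, so the bounded-set/Markov detour is unnecessary and in your pathwise version the moment hypothesis only serves to keep $\|\hat{\btheta}^{(0)}-\btheta^*\|$ tight.
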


\subsubsection{Verification of Assumption \ref{asump:penalty}}

To verify Assumption \ref{asump:penalty}, we prove the following lemma. 

\begin{lemma}\label{thm:penalty} Let $\btheta=(\btheta_1,\btheta_2,\ldots,\btheta_{K_n})^T$. Suppose that all components of $\btheta$ are {\it a priori} independent and they are subject to the following mixture Gaussian prior 
(\ref{eq:mixtureprior}). 
Suppose $K_n \succ n$, 
 $\btheta$ is sparse at a level of $m_n \prec \frac{n}{c \log(K_n/n)}$ for some constant $c>1$, and $\min\{|\btheta_i|: \btheta_i \ne 0, i=1,2,\ldots, K_n\} > \delta_n$ for some constant $\delta_n=o(1)$.  
 If  we set  $\sigma_1=O(1)$ and set $(\lambda_n,\sigma_0)$ to satisfy the conditions:  
 \begin{equation} \label{eq:pen0b}
\begin{split} 
  (\frac{n}{K_n})^c & \prec \lambda_n \prec \frac{n}{K_n}, \\
  (\frac{n}{K_n})^c & \prec \sigma_0 \prec 
  \min\left\{ 1-\frac{n}{K_n}, \frac{\delta_n}{\sqrt{c\log(K_n)-(c-1)\log(n)}}\right\}, 
\end{split}
  \end{equation}
then the following result holds: 
 \begin{equation} \label{eq:pens}
 \frac{1}{n} \left|\log\pi(\btheta)+K_n \log(\sqrt{2\pi} \sigma_0) \right| \to 0, \quad \mbox{as $n\to \infty$}. 
 \end{equation}
\end{lemma}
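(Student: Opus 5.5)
We need to show $\frac1n|\log\pi(\btheta)+K_n\log(\sqrt{2\pi}\sigma_0)|\to0$. The plan is to write the log-prior as a sum over coordinates and split it into the zero coordinates ($\btheta_i=0$, $K_n-m_n$ of them) and the nonzero ones ($m_n$ of them, each with $|\btheta_i|>\delta_n$). We normalise densities properly: each factor is $(1-\lambda_n)\frac{1}{\sqrt{2\pi}\sigma_0}e^{-\btheta_i^2/2\sigma_0^2}+\lambda_n\frac{1}{\sqrt{2\pi}\sigma_1}e^{-\btheta_i^2/2\sigma_1^2}$. Then we add $\log(\sqrt{2\pi}\sigma_0)$ to each term and bound the sum of the resulting corrections.

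\textbf{Zero coordinates.} Here the term after adding $\log(\sqrt{2\pi}\sigma_0)$ equals $\log\bigl(1-\lambda_n+\lambda_n\sigma_0/\sigma_1\bigr)$. Since $\lambda_n\prec n/K_n\to0$ and $\sigma_0\to 0$, this is $O(\lambda_n)$ in absolute value. Summing over at most $K_n$ coordinates gives $O(K_n\lambda_n)$, and dividing by $n$ gives $O(K_n\lambda_n/n)=o(1)$ because $\lambda_n\prec n/K_n$.

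\textbf{Nonzero coordinates.} Each term is $\log\bigl((1-\lambda_n)e^{-\btheta_i^2/2\sigma_0^2}+\lambda_n(\sigma_0/\sigma_1)e^{-\btheta_i^2/2\sigma_1^2}\bigr)$. For a lower bound we drop the first summand. Using $\sigma_1=O(1)$ and bounded $\btheta$ (a bounded parameter space, as is implicit in the Lemma~\ref{lem:IRO1} conditions), the term is at least $\log\lambda_n+\log\sigma_0-O(1)$. Both $\log\lambda_n$ and $\log\sigma_0$ are $\ge c\log(n/K_n)$ in order, since both exceed $(n/K_n)^c$. For an upper bound we need the first summand not to dominate. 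This holds because $\delta_n/\sigma_0\succ\sqrt{c\log K_n-(c-1)\log n}$, so
\[
e^{-\btheta_i^2/2\sigma_0^2}\le e^{-\delta_n^2/2\sigma_0^2}\prec (n/K_n)^{c}\,n^{-?}\le\lambda_n\sigma_0,
\]
where the exponent $n^{-?}$ is fixed up to constants, which is exactly what the $\delta_n$-condition is designed for. Hence the whole term is bounded above by $\log(\lambda_n\sigma_0)+O(1)\le O(1)$. So each nonzero term has absolute value $O(\log(K_n/n))+O(1)$. The total is $m_n\,O(c\log(K_n/n))$, which is $o(n)$ by the sparsity assumption $m_n\prec n/(c\log(K_n/n))$.

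Adding the two parts gives \eqref{eq:pens}. The main obstacle is the nonzero-coordinate bound. It requires careful bookkeeping so that the constant $c$ in $\log\lambda_n,\log\sigma_0\gtrsim c\log(n/K_n)$ matches the one in the sparsity rate. It also requires checking that the spike-component contribution is negligible, using the $\delta_n$ lower bound on $\sigma_0$'s ratio. I would verify that the $\delta_n$ condition gives $\delta_n^2/(2\sigma_0^2)\ge$ a constant multiple of $c\log K_n-(c-1)\log n=\log\bigl(K_n^c/n^{c-1}\bigr)$, which dominates $-\log(\lambda_n\sigma_0)\lesssim 2c\log(K_n/n)$ up to constants, possibly after adjusting constants.
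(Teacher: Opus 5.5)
Your proposal is correct and follows essentially the paper's route. You split the log-prior into the $K_n-m_n$ zero coordinates, which contribute $O(K_n\lambda_n)=o(n)$, and the $m_n$ nonzero coordinates, which contribute $O(m_n\log(K_n/n))=o(n)$ via the lower bounds on $\lambda_n$ and $\sigma_0$. The $\delta_n$ condition then makes the spike component negligible on the nonzero coordinates; the paper does this through $\log(1+x)\le x$ rather than your two-sided sandwich.

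Your explicit appeal to bounded $\btheta$ (and $\sigma_1$ bounded away from $0$) for the $\btheta_i^2/(2\sigma_1^2)$ term is genuinely needed; the paper's displayed $-m_n\delta_n^2/(2\sigma_1^2)$ glosses over this. The upper side is also trivial: each nonzero term is at most $\log\bigl(1-\lambda_n+\lambda_n\sigma_0/\sigma_1\bigr)\le 0$ once $\sigma_0<\sigma_1$. So the $\delta_n$ step, with its vague $n^{-?}$, can simply be dropped.
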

\begin{proof} 
A straightforward calculation shows that 
\[
\begin{split} 
\left|\log\pi(\btheta)+K_n \log(\sigma_0) \right| & \lesssim  K_n |\log(1-\lambda_n)| + (K_n-m_n) \frac{\sigma_0\lambda_n}{\sigma_1(1-\lambda_n)}  
 + m_n \left|\log\left(\frac{\sigma_0\lambda_n}{1-\lambda_n}\right)\right|  \\
 & - m_n \frac{\delta_n^2}{2 \sigma_1^2} + \frac{m_n (1-\lambda_n) \sigma_1}{\lambda_n \sigma_0} e^{-\frac{\delta_n^2}{2} (\frac{1}{\sigma_0^2}-\frac{1}{\sigma_1^2})}. \\ 
\end{split} 
\]
To ensure $K_n|\log(1-\lambda)| \prec n$, we set 
\begin{equation} \label{eq:pen1}
\lambda_n \prec 1- e^{-n/K_n} \asymp \frac{n}{K_n}. 
\end{equation} 
To ensure $m_n \left|\log\left(\frac{\sigma_0\lambda_n}{1-\lambda_n}\right)\right| \prec n$, we set 
\begin{equation} \label{eq:pen2} 
\sigma_0 \succ  (\frac{n}{K_n})^c \succ e^{-n/m_n} , \quad \lambda_n  \succ (\frac{n}{K_n})^c \succ e^{-n/m_n}. 
\end{equation}
To ensure $(K_n-m_n) \frac{\sigma_0\lambda_n}{\sigma_1(1-\lambda_n)} \prec n$, we set 
\begin{equation} \label{eq:pen3} 
\sigma_0 \prec 1-\frac{n}{K_n} \prec \frac{n}{K_n} \frac{(1-\lambda_n)}{\lambda_n}.  
\end{equation}
To ensure  $\frac{m_n (1-\lambda_n) \sigma_1}{\lambda_n \sigma_0} e^{-\frac{\delta_n^2}{2} (\frac{1}{\sigma_0^2}-\frac{1}{\sigma_1^2})} \prec n$, we set 
\begin{equation} \label{eq:pen4} 
\sigma_0   \prec \frac{\delta_n}{\sqrt{c\log(K_n)-(c-1)\log(n)}} \prec \frac{\delta_n}{\sqrt{|\log(n \lambda_n/m_n))|}}.  
\end{equation}
Since $\delta_n \prec o(1)$ and $m_n \prec n$, we have $m_n \frac{\delta_n^2}{2 \sigma_1^2} \prec n$.  

As a summary of (\ref{eq:pen1})-(\ref{eq:pen4}), we can set $(\lambda_n,\sigma_0)$ as stated in (\ref{eq:pen0b}), which ensures (\ref{eq:pens}) holds. 
\end{proof}

 \begin{theorem} \label{thm:lem2} 
Suppose the regularity conditions give in Lemma \ref{lem:IRO1} and  
Assumptions \ref{asump:consistency-1}- \ref{asump:consistency-2} (given in Supplement  \ref{sect:proof}) hold. 
Additionally, assume that the dimension of $\btheta$, denoted by $K_n$, increases with $n$ in a polynomial rate $K_n=O(n^{\zeta})$ for some constant $\zeta > 1$, while the true StoNet is sparse with the number of nonzero connections $m_n \prec \frac{n}{c \log(K_n/n)}$ for some constant $c>1$. 
Set the hyper-parameters of the prior (\ref{eq:mixtureprior}) 
to satisfy the conditions:  
\begin{equation} \label{eq:pen0}
\small 
 (\frac{n}{K_n})^c  \prec \lambda_n \prec \frac{n}{K_n}, \quad \sigma_1=O(1), \quad 
  (\frac{n}{K_n})^c \prec \sigma_0 \prec 
  \min\left\{ 1-\frac{n}{K_n}, \frac{\delta_n}{\sqrt{c\log(K_n)-(c-1)\log(n)}}\right\}. 
  \end{equation}
Then  $\|\hat{\btheta}_n^*- \btheta^*\|\stackrel{p}{\to} 0$  holds as $n\to \infty$,  where $\btheta^*$ denotes the true parameter of the StoNet (\ref{eq:stonet-multiple-treatment}), and $\hat{\btheta}_n^*$ is up to a loss-invariant transformation.
\end{theorem}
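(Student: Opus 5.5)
The plan is to route the result through the IRO consistency result, Lemma~\ref{thm:IROconsistency}, and then transfer it to the solution $\hat{\btheta}_n^*$ of (\ref{Fishereq}) produced by the adaptive SGHMC algorithm. Most hypotheses of Lemma~\ref{thm:IROconsistency} are assumed directly in the statement: the regularity conditions of Lemma~\ref{lem:IRO1}, giving the uniform law of large numbers (\ref{IROeq2}); the identifiability-up-to-equivalence condition, Assumption~\ref{asump:consistency-1}; and the contraction condition, Assumption~\ref{asump:consistency-2}. Two things are left to supply: Assumption~\ref{asump:penalty} and the moment bound $\sup_{n,t}\mathbb{E}\|\hat{\btheta}_n^{(t)}\|<\infty$.

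\textbf{Penalty condition.} I will take the penalty to be $P_{\lambda_n}(\btheta)=\pi(\btheta)$, the mixture Gaussian prior (\ref{eq:mixtureprior}). The polynomial growth $K_n=O(n^\zeta)$ with $\zeta>1$ gives $K_n\succ n$. The sparsity level $m_n\prec n/(c\log(K_n/n))$ and the hyperparameter choice (\ref{eq:pen0}) are exactly the hypotheses of Lemma~\ref{thm:penalty}. That lemma then yields
\[
\frac{1}{n}\Bigl|\log\pi(\btheta)+K_n\log(\sqrt{2\pi}\sigma_0)\Bigr|\to 0 .
\]
The term $K_n\log(\sqrt{2\pi}\sigma_0)$ does not depend on $\btheta$, so it can be dropped from the objective (\ref{iter-estimator}) without changing the maximizer. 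The effective penalty therefore vanishes at rate $o(1)$.

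The delicate point is uniformity over $\{\btheta_*^{(t)}\}$. Lemma~\ref{thm:penalty} needs each working parameter $\btheta_*^{(t)}$ to share the sparsity level $m_n$ and the minimal-signal bound $\delta_n$ of the true StoNet. I would argue this from the contraction Assumption~\ref{asump:consistency-2}: the iterates $\btheta_*^{(t)}$ stay in a neighbourhood of $\btheta^*$, and their effective sparse support is inherited, up to components negligible at scale $\sigma_0$, which the prior absorbs. I would note this as the step requiring the most care.

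\textbf{Moment bound.} The bound $\sup_{n,t}\mathbb{E}\|\hat{\btheta}_n^{(t)}\|<\infty$ should follow from the $L^2$ bound (\ref{eq:L2convergence}) of Lemma~\ref{thm:lem1}, together with boundedness of the parameter space $\Theta$ implicit in Assumptions~\ref{assB:1} and~\ref{assB:5}. Alternatively, it follows from the Gaussian-tailed prior, which keeps penalized maximizers bounded.

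\textbf{Concluding the argument.} With these hypotheses in place, Lemma~\ref{thm:IROconsistency} gives $\|\hat{\btheta}_n^{(t)}-\btheta^*\|\stackrel{p}{\to}0$ for large $t$, modulo loss-invariant transformations (Remark~\ref{Rem1}).

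The final step identifies the IRO fixed point with $\hat{\btheta}_n^*$. A fixed point $\bar{\btheta}=M(\bar{\btheta})$ of the penalized IRO map satisfies the penalized Fisher identity (\ref{Fishereq0}) set to zero, so $\bar{\btheta}\in\mL$. Since $M$ is a contraction, there is a unique such fixed point in the reduced parameter space, and it coincides with the root $\hat{\btheta}_n^*$ that SGHMC converges to by Lemma~\ref{thm:lem1}.

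The main obstacle I expect is the uniqueness of the root in $\mL$: justifying that the SGHMC limit and the IRO limit are the same equivalence-class representative. I would handle it via Assumption~\ref{asump:consistency-1}, applied to the equivalence class rather than to raw parameters.
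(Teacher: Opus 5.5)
Your main line matches the paper's proof. The paper treats $\hat{\btheta}_n^*$ as the incomplete-data MAP, invokes Lemma~\ref{thm:IROconsistency}, declares every hypothesis except Assumption~\ref{asump:penalty} ``generally satisfied,'' and obtains that one from Lemma~\ref{thm:penalty}. Your point that the $\btheta$-free term $K_n\log(\sqrt{2\pi}\sigma_0)$ must be removed from the penalty is correct and sharper than the paper. Since $\sigma_0\to0$ and $K_n\succ n$, that term is of order $K_n|\log\sigma_0|\succ n$, so $\frac1n\log\pi(\btheta)$ itself does not vanish. One small slip: $K_n=O(n^\zeta)$ is an upper bound and does not give $K_n\succ n$. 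That comes instead from (\ref{eq:pen0}), because $(n/K_n)^c\prec n/K_n$ with $c>1$ forces $n/K_n\to0$.

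The steps you add beyond the paper, however, do not go through as written. The serious one is the bridge from the IRO iterates of Lemma~\ref{thm:IROconsistency} to the SGHMC root $\hat{\btheta}_n^*$; the paper crosses it only by calling a root of (\ref{Fishereq}) the MAP.
\begin{itemize}
\item Assumption~\ref{asump:consistency-2} is a contraction property of the population, unpenalized map $M$, in which $\by$ is integrated under $\btheta^*$. Its fixed point is $\btheta^*$ itself.
\item Assumption~\ref{asump:consistency-1}, which you propose for the uniqueness step, likewise concerns the maximizer of the population objective $G_n(\cdot\mid\hat{\btheta}^{(t)})$.
\item Neither says anything about a sample-level penalized EM map, or about how many roots (\ref{Fishereq}) has.
\item Even a unique fixed point of such a sample map would not make $\mL$ a singleton. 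Fixed points lie in $\mL$, but other stationary points of $\log\pi(\btheta\mid\bA,\bY)$ (further local modes, saddles) need not be fixed points when the complete-data objective is nonconcave, as it is for neural networks.
\end{itemize}
So nothing in your argument ties the SGHMC limit of Lemma~\ref{thm:lem1} to what the IRO iterates track. You would need two further ingredients. First, uniqueness of the root from a separate condition, for instance a strong-monotonicity condition on the mean field of the kind in Assumption~\ref{assB:5}, which allows only one zero in $\Theta$. Second, an argument that the stochastic IRO iterates lie within $o_p(1)$ of that root for large $t$.

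The other two supplements also fail.
\begin{itemize}
\item \emph{Moment bound.} Lemma~\ref{thm:lem1} controls the SGHMC iterates $\btheta^{(k)}$ relative to $\hat{\btheta}_n^*$, not the IRO iterates $\hat{\btheta}_n^{(t)}$ that appear in Lemma~\ref{thm:IROconsistency}. Assumptions~\ref{assB:1} and~\ref{assB:5} do not make $\Theta$ bounded. With $K_n\to\infty$, a coordinatewise Gaussian tail yields no bound uniform in $n$.
\item \emph{Uniformity over $\{\btheta_*^{(t)}\}$.} An off-support coordinate of size $\epsilon$ costs roughly $\min\{\epsilon^2/(2\sigma_0^2),|\log(\lambda_n\sigma_0/\sigma_1)|\}$ in the normalized log-prior. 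With $S=\mathrm{supp}(\btheta^*)$, you therefore need $\sum_{i\notin S}\epsilon_i^2=o(n\sigma_0^2)$ in aggregate. Coordinatewise smallness relative to $\sigma_0$ is not enough when there are $K_n\succ n$ coordinates. The $\ell_2$-proximity to $\btheta^*$ given by the contraction does not deliver this, and the early $\btheta_*^{(t)}$ need not be near $\btheta^*$ at all.
\end{itemize}
Either state these conditions as hypotheses, which is effectively what the paper does, or find different arguments.
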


\begin{proof} 
Since \( \hat{\btheta}_n^* \) is a solution to equation (\ref{Fishereq}), it serves as the maximum \textit{a posteriori} (MAP) estimator of \( \btheta \) with respect to the incomplete data (by treating \( \bZ \) as missing). 
 By Lemma \ref{thm:IROconsistency},  we immediately have its consistency with respect to $\btheta^*$, i.e.,   
\begin{equation} \label{eq:}
\|\hat{\btheta}_n^*  -\btheta^*\| \stackrel{p}{\to} 0, \quad \mbox{as $n\to \infty$.} 
\end{equation} 
 
Among the conditions of Lemma \ref{thm:IROconsistency}, we only need to 
verify Assumption \ref{asump:penalty}, since the others 
are generally satisfied. 
Recall that we adopt the mixture Gaussian prior 
(\ref{eq:mixtureprior}) in computing the MAP of $\btheta$. 
By Lemma \ref{thm:penalty}, Assumption \ref{asump:penalty} 
is satisfied. This concludes the proof. 
\end{proof}

\begin{remark} \label{Rem2}
In Theorem \ref{thm:lem2}, we assume that the true 
sparse StoNet is of size $m_n=o(n)$. This assumption can be justified based 
on the theory established in \cite{Bolcskei2019}, \cite{Schmidt-Hieber2017Nonparametric}, and \cite{petersen2018optimal}, where it is shown that 
a DNN of this size has been large enough to approximate  
many classes of functions, including affine, piecewise smooth, and $\alpha$-H\"older smooth functions. 
See \cite{SunSLiang2021} for discussions on this issue. 
Additionally, \cite{SunSLiang2021} showed that a sparse neural network of this size has been large enough to achieve the desired function approximation and 
posterior consistency, with the mixture Gaussian prior (\ref{eq:mixtureprior}), 
as the sample size $n$ becomes large. 
Our theory allows $K_n$ to increase polynomially with $n$,  which is typically satisfied by deep neural networks.
\end{remark}

\subsection{Proof of Theorem \ref{thm:thm1}} \label{sect:app:proof2}

\paragraph{Justification of the estimator (\ref{Yest1})}
 To justify the pooled Monte Carlo average, fix an integrable test function $\varphi$ and define
$\bar\varphi_i:=\frac{1}{M}\sum_{l=1}^M \varphi(\bz_i^{(l)})$.
Conditional on $\bA_i=\ba_i$, the draws $\bz_i^{(l)}\sim p(\cdot\mid \ba_i)$ are i.i.d.\ (or ergodic), hence
\[
\bar\varphi_i \xrightarrow[]{p} \E\{\varphi(\bz)\mid \ba_i\}
\qquad \text{as } \mM\to\infty,
\]
where $\stackrel{p}{\to}$ denotes convergence in probability. 
Therefore, for any fixed $n$,
\[
\frac{1}{n\mM}\sum_{i=1}^n\sum_{l=1}^\mM \varphi(\bz_i^{(l)})
=\frac{1}{n}\sum_{i=1}^n \bar\varphi_i
\xrightarrow[]{p} \frac{1}{n}\sum_{i=1}^n \E\{\varphi(\bz)\mid \ba_i\},
\qquad \text{as } \mM\to\infty.
\]
Under Assumption \ref{ass:1}, 
$\{\ba_i\}_{i=1}^n$ can be assumed to be i.i.d., the weak law of large numbers implies
\[
\frac{1}{n}\sum_{i=1}^n \E\{\varphi(\bz)\mid \ba_i\}
\xrightarrow[]{p}
\E_A\!\left[\E\{\varphi(\bz)\mid \ba\}\right]
=
\int \varphi(\bz)\,p(\bz)\,d\bz,
\qquad \text{as } n\to\infty.
\]

\paragraph{Proof of Theorem \ref{thm:thm1}}
\begin{proof}
Consider the joint density function:
\[
\pi(\bZ,\bY|\bA,\btheta^*)=\pi(\bZ|\bA,\btheta_1^*) 
\pi(\bY|\bZ,\bA,\btheta_2^*),
\]
under the assumption that the true model is a sparse StoNet (\ref{eq:stonet-multiple-treatment}) parameterized by $\btheta^*$. Then we have 
\[
\mathbb{E}(Y(\ba)|\btheta^*) = \int \by \pi(\bz|\btheta_1^*) \pi(\by|\bz,\ba,\btheta_2^*) d\bz d\by= \int \mu_2(\bz,\ba,\btheta_2^*) \pi(\bz|\btheta_1^*) d \bz.
\]

Let $\bz_i^{(l)}$, for $l=1,2,\ldots,\mM$, 
denote $\mM$ independent samples drawn from 
$\pi(\bz|\ba_i,\hat{\btheta}_1^*)$. Let 
\[
\widehat{\mathbb{E}(Y(\ba)|\hat{\btheta}_n^*)}=\frac{1}{n\mM}\sum_{i=1}^n \sum_{l=1}^{\mM} \mu_2(\bz_i^{(l)},\ba,\hat{\btheta}_2^*). 
\]
By the standard property of Monte Carlo averages, as justified for the estimator (\ref{Yest1}), we have 
\begin{equation} \label{eq:ave1b}
\| \widehat{\mathbb{E}(Y(\ba)|\hat{\btheta}_n^*)} - \mathbb{E}(Y(\ba)|\hat{\btheta}_n^*)\| \stackrel{p}{\to} 0, \quad \mbox{as $n, \mM \to \infty$.}
\end{equation}

On the other hand, by the consistency of $\hat{\btheta}_n^*=(\hat{\btheta}_1^*,\hat{\btheta}_2^*)$ (with respect to $\btheta^*$) as  
established in Lemma \ref{thm:lem2}, we have  
\begin{equation} \label{eq:ave2}
\|\mathbb{E}(Y(\ba)|\hat{\btheta}_n^*) -\mathbb{E}(Y(\ba)|\btheta^*)\| \stackrel{p}{\to} 0, \quad \mbox{as $n\to \infty$}, 
\end{equation}
since $\mu_2(\cdot)$ is continuous respect to the parameters (as assumed for the neural network model).

Combining the convergence results in (\ref{eq:ave1b}) and (\ref{eq:ave2}), we have 
\[
\|\widehat{\mathbb{E}(Y(\ba)|\hat{\btheta}_n^*)} - \mathbb{E}(Y(\ba)|\btheta^*) \| 
\leq \|\widehat{\mathbb{E}(Y(\ba)|\hat{\btheta}_n^*)} -\mathbb{E}(Y(\ba)|\hat{\btheta}^*)\| + \|\mathbb{E}(Y(\ba)|\hat{\btheta}^*) - \mathbb{E}(Y(\ba)|\btheta^*) \|  \stackrel{p}{\to} 0,
\]
as $n\to \infty$ and $\mM \to \infty$.
This concludes the proof.
\end{proof}

\subsection{Proof of Theorem \ref{thm:misspec}}
\label{mis_speci_error}
We can leverage the theory developed in \citep{SunSLiang2021} to justify the misspecification error of the causal effect.

\begin{assumption} \label{SunSLiang_assump_A.2}
(Restatement of Assumption A.2 of \cite{SunSLiang2021}) The sparse DNN model $\mu_{A, n}$ and $\mu_{Y, n}$ satisfy the following conditions:
\begin{enumerate}
    \item The network structure satisfies:
        \[
        r_n H_n \log n \;+\; r_n \log \overline{L} \;+\; s_n \log p_n 
        \;\le\; C_0 n^{1-\varepsilon},
        \]
    where \(0 < \varepsilon < 1\) is a small constant,  \(r_n = |\gamma^*|\) denotes the connectivity of \(\gamma^*\), \(\overline{L} = \max_{1 \le j \le H_n-1} L_j\) denotes the maximum hidden layer width,  and \(s_n\) denotes the input dimension of \(\gamma^*\).
    \item The network weights are polynomially bounded:
    \[
    \|\boldsymbol{\beta}^*\|_\infty \le E_n, 
    \qquad 
    E_n = n^{C_1}
    \]
    for some constant \(C_1 > 0\).
\end{enumerate}
\end{assumption}

For example, affine-system functions (\cite{bolcskei2019optimal}), piecewise-smooth functions with a fixed input dimension (\cite{petersen2018optimal}), and bounded $\alpha$-Holder smooth function (\cite{polson2018posterior}). Assumption \ref{SunSLiang_assump_A.2} clarifies the class of sparse DNNs that can approximate the unknown structural mean functions, and hence considered as "true sparse DNN" in the paper. Informally, Assumption \ref{SunSLiang_assump_A.2} requires that the sparse deep network whose number of active weights and relevant inputs grows slower than the sample size, and whose weights are at most polynomially large in $n$. This ensures the “true” network lies in a capacity-controlled function class. By changing the network structure and network weight accordingly, we can establish different approximation error upper-bounding sequence $\omega_n$ for different function classes $\mathcal{F}$, see discussion in \cite{SunSLiang2021} Section 2.2 for more details.

\begin{proof}
Let $\mathcal{G}_n$ denote the class of sparse DNNs compatible with the Assumption \ref{SunSLiang_assump_A.2}, we define for each sample size $n$ the pseudo‑true sparse DNNs $(\mu_{A, n}^*, \mu_{Y, n}^*)$ as minimizers of the approximation error within $\mathcal{G}_n$:
\begin{equation*}
    (\mu_{A, n}^*, \mu_{Y, n}^*) \in \arg \min_{(\mu_1, \mu_2) \in \mathcal{G}_n} \{\left\| \mu_{A, n} - m_A \right\|_{L^2} + \left\| \mu_{Y, n} - m_Y \right\|_{L^2}\}
\end{equation*}
by Assumption \ref{counfound_mechanism}, there exists a sequence $\omega_n \to 0$ such that
\begin{equation*}
    \{\left\| \mu_{A, n} - m_A \right\|_{L^2} + \left\| \mu_{Y, n} - m_Y \right\|_{L^2}\} \lesssim \omega_n
\end{equation*}
with $\omega_n$ scaling at the chosen rates for chosen function $\mathcal{F}$ (e.g. $\omega_n \asymp n^{-\alpha/(2\alpha+d)}$ up to logarithmic factors for $\alpha$ - Hölder functions in dimension $d$).

Let $P_0$ denote the true joint law of $(\bA,\bZ,\bY)$. For any parameter $\eta$ of sparse DNN, let $Q_{\eta}$ be the induced joint law under the corresponding generalized linear model (Gaussian or logistic) with regression function $\mu_{\eta}$, and let $P_{\btheta}$ be the distribution induced by the CI-StoNet with parameter $\btheta$. Following \cite{SunSLiang2021}, we define Kullback–Leibler divergence
\begin{equation*}
    d_0(q, p) = \int q \log \frac{q}{p},
\end{equation*}
and a family of distance $d_t (q, p) = \frac{1}{t} \int q [(\frac{q}{p})^t - 1]$ for any $t > 0$, which decreases to $d_0$ as $t$
decreases toward 0. For Gaussian regression with mean functions $m_1$ and $m_2$, they show that, for fixed noise scale and bounded regression functions with known $\sigma^2$
\begin{equation*}
    d_0(p_{m_1}, p_{m_2})=\frac{1}{2\sigma^2}(m_1-m_2)^2,
\end{equation*}
and for logistic they derive upper bounds
\begin{equation*}
    d_1(p_{m_1}, p_{m_2}) \leq \frac{1}{2}(m_1 - m_2)^2+O((m_1 - m_2)^3).
\end{equation*}
Since $d_0(p_{m_1}, p_{m_2}) \leq d_1(p_{m_1}, p_{m_2})$, we have $d_0(p_{m_1}, p_{m_2}) \leq C(m_1 - m_2)^2$ for some constant $C>0$ depending only on the uniform bound on $\mu$. Then it gives
\begin{equation*}
    \text{KL}(P_0, Q_{\eta}) \lesssim \left\| \mu_{\eta} - \mu\right\|^2_{L^2},
\end{equation*}
where $\mu_{\eta}$ is the DNN‑based regression function in $Q_{\eta}$ and $\mu^*$ is the true regression function. Putting things together, if the function class $\mathcal{F}$ is approximable at rate $\omega_n$ in $L_2$ by sparse DNNs, then there exist sparse DNN parameters $\eta_n$ such that 
\begin{equation*}
    \text{KL}(P_0, Q_{\eta_n}) \lesssim \omega^2_n.
\end{equation*}
In the StoNet definition, the deterministic parts $m_1$, $m_2$ are themselves neural networks. Therefore, for any sparse DNN architecture that satisfies Assumption \ref{SunSLiang_assump_A.2}, we can embed the same architecture into the StoNet by choosing $\btheta$ so that the StoNet’s deterministic part coincides with the DNN, which yields a mapping $\eta \to \btheta(\eta)$ with $Q_{\eta} = P_{\btheta(\eta)}$. Therefore, defining a pseudo‑true StoNet parameter
\begin{equation*}
    \btheta^* = \arg \min_{\btheta} \text{KL} (P_0, P_{\btheta}),
\end{equation*}
then we have
\begin{equation*}
    \text{KL}(P_0, P_{\btheta^*}) \leq  \text{KL}(P_0, P_{\btheta{(\eta})}) = \text{KL}(P_0, Q_{\eta}) \lesssim \omega^2_n.
\end{equation*}

Now consider the simple confounding case (recall the setup in \ref{eq:stonet-multiple-treatment}). Let $\psi(P_0) = \int m_Y(\ba,\bz)p_{P_0}(\bz)d\bz$ and $\psi(P_{\btheta^*}) = \int \mu_2(\ba,\bz)p_{P_{\btheta^*}}(\bz)d\bz$, the approximation error between the true and the pseudo-true causal estimand is:
\begin{equation*}
    \begin{split}
        \|\psi(P_0) - \psi(P_{\btheta^*})\| & = \|\int m_Y(\ba,\bz)p_{P_0}(\bz)d \bz - \int \mu_2(\ba,\bz)p_{P_{\btheta^*}}(\bz)d\bz\| \\
         & \leq \underbrace{\|\int [m_Y(\ba,\bz) - \mu_2(\ba,\bz)]p_{P_0}(\bz)d \bz\|}_{A_1} + \underbrace{\|\int \mu_2(\ba,\bz)[p_{P_0}(\bz) - p_{P_{\btheta^*}}(\bz)]d \bz\|}_{A_2}
    \end{split}
\end{equation*}
$A_1$ can be bounded by
\begin{equation*}
        A_1 \leq \left\|m_Y(\ba,\bz)-\mu_2(\ba,\bz)\right\|_{L_2(P_0(\bz))} \lesssim \omega_n.
\end{equation*}
The first line is by Cauchy–Schwarz, and the second line is by 
sparse‑DNN approximation results and StoNet/DNN equivalence. 

Assume $|\mu_2(\ba,\bz)| \leq C_{\mu_2}$, then 
\begin{equation*}
    \begin{split}
        A_2 & \leq C_{\mu_2} \int |p_{P_0(\bz)} - p_{\btheta^*}(\bz)|d\bz \\
         & \leq 2 C_{\mu_2} \sqrt{\frac{1}{2}\text{KL}(P_{0, Z}, P_{\btheta^*, Z})} \\
         & \leq \sqrt{2} C_{\mu_2} \sqrt{\text{KL}(P_{0}, P_{\btheta^*})} \\
         & \lesssim \omega_n.
    \end{split}
\end{equation*}
From first line to second line we use Pinsker's inequality, and from the third line to the last line we used the previous result that $\text{KL}(P_0, P_{\btheta^*}) \lesssim \omega^2_n$.
From the second to the third line we use the data-processing inequality for KL divergence (monotonicity under marginalization):
\[
\mathrm{KL}(P_{0,Z},P_{\btheta^*,Z}) \le \mathrm{KL}(P_0,P_{\btheta^*}).
\]

Putting pieces together, we have
\begin{equation*}
    \|\psi(P_0) - \psi(P_{\btheta^*})\| \leq A_1 + A_2 \lesssim \omega_n + \omega_n \lesssim \omega_n.
\end{equation*}
Hence we propagated the approximation error of nuisance functions to the causal estimand and proved that the approximation error for causal estimand is $O(\omega_n)$.
\end{proof}

Theorem \ref{thm:thm1} in our paper controls the estimation error,
which is guaranteed by the properties of the sparse DNN.
The misspecification error is determined solely by the approximation power of the sparse‑DNN class (depth, width, sparsity rate). So the pirors only impact the misspecificaiton error indirectly, through controlling the model capacity.

\subsection{Finite-sample Analysis on Overlap Inflation}
\label{finite_sample_overlap}

\begin{assumption}[Boundedness and overlap]
\label{assump:bounded_overlap_revised}
For each treatment $\ba\in\mathcal A$, let $m^*(\bz,\ba)$ denote the true outcome nuisance and $\hat m(\bz,\ba)$ its estimator. Assume:
\begin{enumerate}
\item Outcome boundedness:
\[
\sup_{z}|m^*(\bz,\ba)|\le C_m,\qquad \sup_{z}|\hat m(\bz,\ba)|\le C_m.
\]
\item Latent propensity (positivity) bounds: there exist constants $0<\kappa_z(\ba)\le K_z(\ba)<\infty$ such that for all $z$ in the relevant support,
\[
\kappa_z(\ba)\le p^*(\ba \mid \bz)\le K_z(\ba),\qquad \kappa_z(\ba)\le \hat p(\ba \mid \bz)\le K_z(\ba).
\]
\item Observed propensity bounds: there exist constants $0<\kappa_x(\ba)\le C_{ax}(\ba)<\infty$ such that for all proxy $x_i$,
\[
\kappa_x(\ba)\le p^*(\ba\mid \bx_i)\le C_{ax}(\ba),\qquad 
\kappa_x(\ba)\le \hat p(\ba\mid \bx_i)\le C_{ax}(\ba).
\]
\item Marginal treatment density bound: For each treatment value $\ba\in\mathcal A$, there exists $\kappa_A(\ba)>0$ such that
\[
p^*(\ba)\ge \kappa_A(\ba),\qquad \hat p(\ba)\ge \kappa_A(\ba).
\]
\end{enumerate}
Define $\kappa(\ba):=\min\{\kappa_z(\ba),\kappa_x(\ba),\kappa_A(\ba)\}$ and $K(\ba):=K_z(\ba)$.
\end{assumption}

\begin{lemma}[Bayes-bridge in simple confounding]
\label{lem:bayes_bridge_simple_conf}
 Fix $\ba\in\mathcal A$ with $p(\ba)>0$. Then
\[
p(\bz)=p(\bz \mid \ba)\,w(\bz,\ba),\qquad w(\bz,\ba):=\frac{p(\ba)}{p(\ba \mid \bz)}.
\]
In particular, under Assumption \ref{assump:bounded_overlap_revised},
\[
0<w(\bz,\ba)\le \frac{K(\ba)}{\kappa(\ba)},\qquad 
|w(\bz,\ba)-1|\le \max\left\{1-\frac{\kappa_A(\ba)}{K(\ba)},\ \frac{K(\ba)}{\kappa(\ba)}-1\right\}.
\]
The same statements hold for the estimated model with $\hat w(\bz,\ba):=\hat p(\ba)/\hat p(\ba \mid \bz)$.
\end{lemma}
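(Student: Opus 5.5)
The identity is Bayes' rule, and the bounds are a one-line consequence of the overlap constants in Assumption \ref{assump:bounded_overlap_revised}. I plan three steps.

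\textbf{Step 1: the identity.} Write the joint density of $(\bZ,\bA)$ at $(\bz,\ba)$ in two ways:
\[
p(\bz)\,p(\ba\mid\bz)=p(\bz,\ba)=p(\ba)\,p(\bz\mid\ba).
\]
Since $p(\ba)>0$ by hypothesis and $p(\ba\mid\bz)\ge\kappa_z(\ba)>0$ on the relevant support, we may divide by $p(\ba\mid\bz)$. This gives
\[
p(\bz)=p(\bz\mid\ba)\,\frac{p(\ba)}{p(\ba\mid\bz)}=p(\bz\mid\ba)\,w(\bz,\ba).
\]
For a continuous treatment, $p(\ba)$ and $p(\ba\mid\bz)$ are densities and the same computation applies. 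The only care needed is to restrict attention to the support where the positivity bounds hold, so that $w$ is well defined.

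\textbf{Step 2: the upper and lower bounds on $w$.} Positivity is immediate, since both the numerator and the denominator are strictly positive. For the upper bound we need an upper bound on $p(\ba)$. Write $p(\ba)=\int p(\ba\mid\bz)p(\bz)\,d\bz$. Because $p(\ba\mid\bz)\le K_z(\ba)=K(\ba)$ pointwise, this gives $p(\ba)\le K(\ba)$. Dividing by $p(\ba\mid\bz)\ge\kappa_z(\ba)\ge\kappa(\ba)$ yields $w\le K(\ba)/\kappa(\ba)$. For a lower bound, use $p(\ba)\ge\kappa_A(\ba)$ together with $p(\ba\mid\bz)\le K(\ba)$, which gives $w\ge \kappa_A(\ba)/K(\ba)$.

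\textbf{Step 3: the bound on $|w-1|$.} From Step 2, $w$ lies in the interval $[\kappa_A/K,\;K/\kappa]$. Note that $\kappa_A\le p(\ba)\le K$, so $\kappa_A/K\le 1$; similarly $\kappa\le K$, so $K/\kappa\ge 1$. Hence the interval contains $1$, and
\[
|w-1|\le\max\Bigl\{1-\tfrac{\kappa_A(\ba)}{K(\ba)},\ \tfrac{K(\ba)}{\kappa(\ba)}-1\Bigr\}.
\]
The estimated-model statement for $\hat w$ follows verbatim. Assumption \ref{assump:bounded_overlap_revised} imposes the identical bounds on $\hat p(\ba\mid\bz)$ and $\hat p(\ba)$, and $\hat p(\ba)=\int\hat p(\ba\mid\bz)\hat p(\bz)\,d\bz$ holds within the fitted model.

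\textbf{Main obstacle.} The only nontrivial point is the upper bound $p(\ba)\le K(\ba)$. The assumption does not state it directly; it must be derived by marginalizing the conditional bound. This requires $p(\ba\mid\bz)\le K_z(\ba)$ to hold $p(\bz)$-almost everywhere, not merely on some subset. I would make that explicit by interpreting ``relevant support'' as the support of $p(\bz)$.
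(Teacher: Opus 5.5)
Your proposal is correct and matches the paper's proof essentially step for step: Bayes' rule gives the identity, then $p(\ba)\le K(\ba)$ (by marginalizing the conditional bound), $p(\ba)\ge\kappa_A(\ba)$ and $\kappa(\ba)\le p(\ba\mid\bz)\le K(\ba)$ place $w$ in $[\kappa_A(\ba)/K(\ba),\,K(\ba)/\kappa(\ba)]$, from which the $|w-1|$ bound follows. Your check that this interval contains $1$ is harmless but not needed, since $|w-1|\le\max\{1-L,\,U-1\}$ holds for any $w\in[L,U]$.
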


\begin{proof}
Bayes' rule gives $p(\bz \mid \ba)=\frac{p(\ba \mid \bz)p(\bz)}{p(\ba)}$, hence $p(\bz)=p(\bz \mid \ba)\frac{p(\ba)}{p(\ba \mid \bz)}$.
Under Assumption \ref{assump:bounded_overlap_revised}, $p(\ba \mid \bz)\ge \kappa(\ba)$ and $p(\ba \mid \bz)\le K(\ba)$.
Also $p(\ba)=\int p(\ba \mid \bz)p(\bz)d\bz\le K(\ba)$, and by Assumption \ref{assump:bounded_overlap_revised}-(4), $p(\ba)\ge \kappa_A(\ba)$.
Therefore $w(\bz,\ba)\le K(\ba)/\kappa(\ba)$ and $w(\bz,\ba)\ge \kappa_A(\ba)/K(\ba)$, which implies the bound on $|w-1|$.
\end{proof}

\begin{theorem}[Simple-confounding bound for estimator \eqref{Yest1}]
\label{thm:finite_sample_simple_Yest1_bridge}
Fix $\ba\in\mathcal A$. Let $m^*(\bz,\ba)=\mathbb E[Y\mid \bA=\ba,\bZ=\bz]$ and
$\hat m(\bz,\ba)=\mu_2(\bz,\ba;\hat\btheta_2^*)$. 
Suppose Assumption \ref{assump:bounded_overlap_revised} holds.
Define the population causal estimand
\[
\psi^*(\ba):=\int m^*(\bz,\ba)\,p^*(\bz)\,d\bz.
\]

Assume the latent draws used in \eqref{Yest1} satisfy
\[
\bz_i^{(l)}\stackrel{\text{i.i.d.}}{\sim}\hat p(\bz\mid \ba_i),\qquad i=1,\dots,n,\ l=1,\dots,\mathcal M,
\]
conditional on $(\ba_i)_{i=1}^n$ and the fitted model. Consider the  estimator
\[
\hat\psi_{n,\mathcal M}(\ba)
:=
\frac{1}{n\mathcal M}\sum_{i=1}^n\sum_{l=1}^{\mathcal M}\hat m(\bz_i^{(l)},\ba).
\]

Let the corresponding (model-based) conditional plug-in target be
\[
\Phi_{n,\mathrm{cond}}(\hat\btheta;\ba)
:=
\frac1n\sum_{i=1}^n\int \hat m(\bz,\ba)\,\hat p(\bz\mid \ba_i)\,d\bz,
\]
and rewrite $\psi^*(\ba)$ as 
\[
\psi^*(\ba)=\Phi_{n,\mathrm{bridge}}^*(\ba)
:=
\frac1n\sum_{i=1}^n\int m^*(\bz,\ba)\,p^*(\bz\mid \ba_i)\,w^*(\bz,\ba_i)\,d\bz,
\]
where 
\[ 
w^*(\bz,\ba_i):=\frac{p^*(\ba_i)}{p^*(\ba_i\mid \bz)}.
\]

For any $\delta\in(0,1)$, with probability at least $1-\delta$,
\[
\begin{split}
|\hat\psi_{n,\mathcal M}(\ba)-\psi^*(\ba)|
& \le
\underbrace{C_m\sqrt{\frac{2\log(2/\delta)}{n\mathcal M}}}_{\text{Monte Carlo}}
+\underbrace{\Big|\Phi_{n,\mathrm{cond}}(\hat\btheta;\ba)-\Phi_{n,\mathrm{cond}}^*(\ba)\Big|}_{\text{estimation under }p(\bz\mid \ba)} \\
& +\underbrace{\Big|\Phi_{n,\mathrm{cond}}^*(\ba)-\Phi_{n,\mathrm{bridge}}^*(\ba)\Big|}_{\text{Bayes-bridge (overlap-inflated) bias}}, 
\end{split}
\]
where $C_m$ is from Assumption \ref{assump:bounded_overlap_revised}-(1), and 
\[
\Phi_{n,\mathrm{cond}}^*(\ba):=\frac1n\sum_{i=1}^n\int m^*(\bz,\ba)\,p^*(\bz\mid \ba_i)\,d\bz.
\]

Moreover, the Bayes-bridge bias term satisfies
\[
\begin{split}
\Big|\Phi_{n,\mathrm{cond}}^*(\ba)-\Phi_{n,\mathrm{bridge}}^*(\ba)\Big|
 & =
\frac1n\sum_{i=1}^n\left|\int m^*(\bz,\ba)\,p^*(\bz\mid \ba_i)\,\bigl(w^*(\bz,\ba_i)-1\bigr)\,d\bz\right| \\
& \le
C_m \cdot \max_{1\le i\le n}\max \left\{1-\frac{\kappa_A(\ba_i)}{K(\ba_i)},\ \frac{K(\ba_i)}{\kappa(\ba_i)}-1\right\}.
\end{split}
\]
\end{theorem}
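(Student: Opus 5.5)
The plan is to split the error with the triangle inequality into three pieces, matching the three terms in the statement:
\[
\hat\psi_{n,\mathcal M}(\ba)-\psi^*(\ba)
=\bigl[\hat\psi_{n,\mathcal M}(\ba)-\Phi_{n,\mathrm{cond}}(\hat\btheta;\ba)\bigr]
+\bigl[\Phi_{n,\mathrm{cond}}(\hat\btheta;\ba)-\Phi^*_{n,\mathrm{cond}}(\ba)\bigr]
+\bigl[\Phi^*_{n,\mathrm{cond}}(\ba)-\Phi^*_{n,\mathrm{bridge}}(\ba)\bigr].
\]
This is valid once we know $\psi^*(\ba)=\Phi^*_{n,\mathrm{bridge}}(\ba)$. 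That identity follows from Lemma \ref{lem:bayes_bridge_simple_conf}. For each $i$, the lemma gives $p^*(\bz\mid\ba_i)\,w^*(\bz,\ba_i)=p^*(\bz)$. Hence every summand of $\Phi^*_{n,\mathrm{bridge}}$ equals $\int m^*(\bz,\ba)p^*(\bz)d\bz=\psi^*(\ba)$, and so does their average. This step uses only $p^*(\ba_i)>0$, which is guaranteed by Assumption \ref{assump:bounded_overlap_revised}-(4).

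\textbf{First term (Monte Carlo).} Condition on $(\ba_i)_{i=1}^n$ and on the fitted model $\hat\btheta$. The $n\mathcal M$ variables $\hat m(\bz_i^{(l)},\ba)$ are then independent. They are bounded in $[-C_m,C_m]$ by Assumption \ref{assump:bounded_overlap_revised}-(1). Their average has conditional mean exactly $\Phi_{n,\mathrm{cond}}(\hat\btheta;\ba)$; they are independent but not identically distributed across $i$, which is fine for Hoeffding. Hoeffding's inequality with ranges $2C_m$ gives
\[
P\bigl(|\hat\psi_{n,\mathcal M}(\ba)-\Phi_{n,\mathrm{cond}}(\hat\btheta;\ba)|>t\bigr)\le 2\exp\!\bigl(-n\mathcal M t^2/(2C_m^2)\bigr).
\]
Setting the right side to $\delta$ gives $t=C_m\sqrt{2\log(2/\delta)/(n\mathcal M)}$. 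The bound holds conditionally, so integrating over the conditioning keeps probability at least $1-\delta$ unconditionally.

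\textbf{Second term.} This is left as is; it is the estimation term in the statement.

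\textbf{Third term (bridge bias).} Write the difference as an average over $i$ of $\int m^*(\bz,\ba)p^*(\bz\mid\ba_i)(1-w^*(\bz,\ba_i))d\bz$. Move absolute values inside the average; this gives an inequality, and the stated first line should be read with the absolute values inside the average. Then use $|m^*|\le C_m$ and the uniform bound on $|w^*-1|$ from Lemma \ref{lem:bayes_bridge_simple_conf}. Since $p^*(\cdot\mid\ba_i)$ integrates to one, each summand is at most $C_m\max\{1-\kappa_A(\ba_i)/K(\ba_i),\,K(\ba_i)/\kappa(\ba_i)-1\}$. Taking the maximum over $i$ bounds the average and gives the claim.

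The only delicate point is the Monte Carlo step. Hoeffding must be applied conditionally, because $\hat\btheta$ is data-dependent and the draws are i.i.d. only given the fit. Everything else is an algebraic identity or a sup bound.
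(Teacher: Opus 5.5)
Your proposal is correct and follows the paper's proof essentially step for step: the same three-term triangle-inequality decomposition, Hoeffding applied conditionally on $(\ba_i)_{i=1}^n$ and the fitted model for the Monte Carlo term, and the uniform bound on $|w^*-1|$ from Lemma \ref{lem:bayes_bridge_simple_conf} for the Bayes-bridge term. Your explicit derivation of $\psi^*(\ba)=\Phi^*_{n,\mathrm{bridge}}(\ba)$, which the paper only asserts as a rewrite, is accurate, and so is your remark that the first line of the bridge display holds only as an inequality, which is exactly what the paper's Step 3 proves.
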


\begin{proof}
We prove the claimed decomposition and bounds.

\paragraph{Step 1 (error decomposition).}
Add and subtract $\Phi_{n,\mathrm{cond}}(\hat\btheta;\ba)$ and $\Phi_{n,\mathrm{cond}}^*(\ba)$:
\[
\begin{split}
& |\hat\psi_{n,\mathcal M}(\ba)-\psi^*(\ba)|
=
|\hat\psi_{n,\mathcal M}(\ba)-\Phi_{n,\mathrm{bridge}}^*(\ba)| \\
&\le
|\hat\psi_{n,\mathcal M}(\ba)-\Phi_{n,\mathrm{cond}}(\hat\btheta;\ba)|
+
|\Phi_{n,\mathrm{cond}}(\hat\btheta;\ba)-\Phi_{n,\mathrm{cond}}^*(\ba)|
+
|\Phi_{n,\mathrm{cond}}^*(\ba)-\Phi_{n,\mathrm{bridge}}^*(\ba)|.
\end{split}
\]
This gives the displayed three-term bound once each term is controlled.

\paragraph{Step 2 (Monte Carlo term).}
Conditional on $(\ba_i)_{i=1}^n$ and the fitted model, by assumption the draws
$\bz_i^{(l)}\stackrel{i.i.d.}{\sim}\hat p(\bz\mid \ba_i)$ are independent across $(i,l)$, and
$\hat m(\bz_i^{(l)},\ba)\in[-C_m,C_m]$.
Thus, Hoeffding's inequality implies for any $\epsilon>0$,
\[
\mathbb P\!\left(
\left|\hat\psi_{n,\mathcal M}(\ba)-\Phi_{n,\mathrm{cond}}(\hat\btheta;\ba)\right|>\epsilon \right)
\le
2\exp\!\left(-\frac{n\mathcal M\,\epsilon^2}{2C_m^2}\right).
\]
Setting the right-hand side to $\delta$ yields that with probability at least $1-\delta$,
\[
\left|\hat\psi_{n,\mathcal M}(\ba)-\Phi_{n,\mathrm{cond}}(\hat\btheta;\ba)\right|
\le
C_m\sqrt{\frac{2\log(2/\delta)}{n\mathcal M}}.
\]

\paragraph{Step 3 (Bayes-bridge bias bound).}
By definition,
\[
\Phi_{n,\mathrm{cond}}^*(\ba)-\Phi_{n,\mathrm{bridge}}^*(\ba)
=
\frac1n\sum_{i=1}^n\int m^*(\bz,\ba)\,p^*(\bz\mid \ba_i)\,\bigl(1-w^*(\bz,\ba_i)\bigr)\,d\bz.
\]
Taking absolute values and using $\|m^*(\cdot,\ba)\|_\infty\le C_m$,
\[
\Big|\Phi_{n,\mathrm{cond}}^*(\ba)-\Phi_{n,\mathrm{bridge}}^*(\ba)\Big|
\le
\frac{C_m}{n}\sum_{i=1}^n
\int p^*(\bz\mid \ba_i)\,|w^*(\bz,\ba_i)-1|\,d\bz
\le
C_m\cdot \sup_{i}\sup_{z}|w^*(\bz,\ba_i)-1|.
\]
Under Assumption \ref{assump:bounded_overlap_revised}, for any $a$ and all $z$,
$\kappa(\ba)\le p^*(\ba \mid \bz)\le K(\ba)$.
Moreover, $p^*(\ba)=\int p^*(\ba \mid \bz)p^*(\bz)d\bz\le K(\ba)$, and by Assumption \ref{assump:bounded_overlap_revised}-(4),
$p^*(\ba)\ge \kappa_A(\ba)$. Hence,
\[
\frac{\kappa_A(\ba)}{K(\ba)}\le \frac{p^*(\ba)}{p^*(\ba \mid \bz)}\le \frac{K(\ba)}{\kappa(\ba)},
\]
so for any $i$ (with $\ba_i$ in place of $\ba$),
\[
\sup_\bz|w^*(\bz,\ba_i)-1|
=
\sup_\bz\left|\frac{p^*(\ba_i)}{p^*(\ba_i\mid \bz)}-1\right|
\le
 \max \left\{1-\frac{\kappa_A(\ba_i)}{K(\ba_i)},\ \frac{K(\ba_i)}{\kappa(\ba_i)}-1\right\}.
\]

\paragraph{Step 4 (estimation-under-$p(\bz|\ba)$ term).}
The remaining term
$\big|\Phi_{n,\mathrm{cond}}(\hat\btheta;\ba)-\Phi_{n,\mathrm{cond}}^*(\ba)\big|$
is exactly the estimation error of the plug-in functional under the conditional latent laws:
\[
\Phi_{n,\mathrm{cond}}(\hat\btheta;\ba)-\Phi_{n,\mathrm{cond}}^*(\ba)
=
\frac1n\sum_{i=1}^n\int \bigl[\hat m(\bz,\ba)\hat p(\bz\mid \ba_i)-m^*(\bz,\ba)p^*(\bz\mid \ba_i)\bigr]d\bz.
\]
If desired, it can be bounded further by adding and subtracting $m^*(\bz,\ba)\hat p(\bz\mid \ba_i)$ and using
$\|m^*(\cdot,\ba)\|_\infty\le C_m$, $\|\hat m(\cdot,\ba)\|_\infty\le C_m$:
\[
\Big|\Phi_{n,\mathrm{cond}}(\hat\btheta;\ba)-\Phi_{n,\mathrm{cond}}^*(\ba)\Big|
\le
\epsilon_m(\ba)
+
\frac{2C_m}{n}\sum_{i=1}^n \|\hat p(\bz\mid \ba_i)-p^*(\bz\mid \ba_i)\|_{\mathrm{TV}},
\]
where $\epsilon_m(\ba):=\sup_\bz|\hat m(\bz,\ba)-m^*(\bz,\ba)|$.

Combining Steps 1--3 yields the stated three-term high-probability bound, and Step 3 provides the explicit
overlap-inflated Bayes-bridge bias bound. This completes the proof.
\end{proof}

\begin{assumption}[Proxy conditional independence]
\label{assump:Ax_given_Z}
In the basic proxy setting, assume
\[
\bA\perp \bX\mid \bZ,
\]
equivalently $p(\ba\mid \bz,\bx)=p(\ba \mid \bz)$ for all $(\ba,\bz,\bx)$.
\end{assumption}

In consequence, we have the proxy Bayesian identity: 
 \begin{equation} \label{lemmaA6}
p(\bz\mid \bx,\ba)=\frac{p(\ba \mid \bz)\,p(\bz\mid \bx)}{p(\ba\mid \bx)}.
\end{equation}

\begin{lemma}[TV bound for proxy conditional $p(\bz\mid \bx,\ba)$]
\label{lem:tv_proxy_ratio}
Assume Assumptions \ref{assump:bounded_overlap_revised} and \ref{assump:Ax_given_Z}. Fix $\bx$.
Let $f(\ba,\bz|\bx):=p(\ba \mid \bz)p(\bz\mid \bx)$ and $\hat f(\ba,\bz|\bx):=\hat p(\ba \mid \bz)\hat p(\bz\mid \bx)$, with normalizers
$g:=p(\ba\mid \bx)=\int f(\ba,\bz|\bx)\,d\bz$ and $\hat g:=\hat p(\ba\mid \bx)=\int \hat f(\ba,\bz|\bx)\,d\bz$.
Then
\[
\| \hat p(\bz\mid \bx,\ba)-p^*(\bz\mid \bx,\ba)\|_{\mathrm{TV}}
\le
\frac{1}{2\,p^*(\ba\mid \bx)}\|\hat f-f\|_1
+
\frac{\|\hat f\|_1}{2\,p^*(\ba\mid \bx)\,\hat p(\ba\mid \bx)}\,|\hat p(\ba\mid \bx)-p^*(\ba\mid \bx)|.
\]
Moreover, using Assumption \ref{assump:bounded_overlap_revised}, one has the crude bound
\[
\|\hat f-f\|_1
\le
K(\ba)\,\|\hat p(\bz\mid \bx)-p^*(\bz\mid \bx)\|_1
+
\mathbb E_{p^*(\bz\mid \bx)}\!\left[|\hat p(\ba \mid \bz)-p^*(\ba \mid \bz)|\right].
\]
\end{lemma}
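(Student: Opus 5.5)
The first inequality is the standard perturbation bound for a ratio of an unnormalized density to its normalizer. The second follows by splitting the product $\hat p(\ba\mid\bz)\hat p(\bz\mid\bx)-p^*(\ba\mid\bz)p^*(\bz\mid\bx)$ into two telescoping pieces. Throughout, $\bx$ and $\ba$ are fixed and all integrals are over $\bz$.

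\textbf{Step 1: express both conditionals as ratios.} By the proxy Bayesian identity (\ref{lemmaA6}), which rests on Assumption~\ref{assump:Ax_given_Z}, we have $p^*(\bz\mid\bx,\ba)=f/g$. I take the fitted model to satisfy the same factorization, so that $\hat p(\bz\mid\bx,\ba)=\hat f/\hat g$.

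\textbf{Step 2: decompose the difference.} Write
\[
\frac{\hat f}{\hat g}-\frac{f}{g}=\frac{\hat f-f}{g}+\hat f\Bigl(\frac{1}{\hat g}-\frac{1}{g}\Bigr)=\frac{\hat f-f}{g}+\frac{\hat f\,(g-\hat g)}{g\,\hat g}.
\]
Take $L^1$ norms in $\bz$, use the triangle inequality, and recall $\|\cdot\|_{\mathrm{TV}}=\tfrac12\|\cdot\|_1$. This gives
\[
\|\hat p(\bz\mid\bx,\ba)-p^*(\bz\mid\bx,\ba)\|_{\mathrm{TV}}\le\frac{\|\hat f-f\|_1}{2g}+\frac{\|\hat f\|_1\,|\hat g-g|}{2g\hat g}.
\]
Substituting $g=p^*(\ba\mid\bx)$ and $\hat g=\hat p(\ba\mid\bx)$ yields exactly the first display. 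These denominators are strictly positive by Assumption~\ref{assump:bounded_overlap_revised}-(3), since both are at least $\kappa_x(\ba)$. As a side remark, $\|\hat f\|_1=\hat g$ because $\hat f\ge0$, so the second term simplifies to $|\hat g-g|/(2g)$; I keep the stated form to match the statement.

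\textbf{Step 3: the crude bound on $\|\hat f-f\|_1$.} Add and subtract $\hat p(\ba\mid\bz)p^*(\bz\mid\bx)$:
\[
\hat f-f=\hat p(\ba\mid\bz)\bigl[\hat p(\bz\mid\bx)-p^*(\bz\mid\bx)\bigr]+p^*(\bz\mid\bx)\bigl[\hat p(\ba\mid\bz)-p^*(\ba\mid\bz)\bigr].
\]
For the first piece, $\hat p(\ba\mid\bz)\le K_z(\ba)=K(\ba)$ by Assumption~\ref{assump:bounded_overlap_revised}-(2), so integrating its absolute value gives at most $K(\ba)\|\hat p(\bz\mid\bx)-p^*(\bz\mid\bx)\|_1$. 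The absolute integral of the second piece is by definition $\mathbb E_{p^*(\bz\mid\bx)}|\hat p(\ba\mid\bz)-p^*(\ba\mid\bz)|$. The triangle inequality then finishes the second display.

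The main obstacle is not computational but the modelling point in Step 1. One must justify that the fitted conditional $\hat p(\bz\mid\bx,\ba)$ is genuinely the normalized product $\hat f/\hat g$. This holds under the CI-StoNet structure (\ref{eq:stonet-proxy}), since $\bA$ depends on $\bX$ only through $\bZ$ there, but it should be stated explicitly. Measurability and integrability of $\hat f$ are also needed; they follow from the boundedness in Assumption~\ref{assump:bounded_overlap_revised}.
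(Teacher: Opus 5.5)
Your proof is correct and takes essentially the same route as the paper. You use the identical split $\hat f/\hat g - f/g = (\hat f - f)/g + \hat f\,(1/\hat g - 1/g)$ with $|1/\hat g - 1/g| = |\hat g - g|/(g\hat g)$ and then halve the $L^1$ norm, and for the crude bound you use the identical add-and-subtract of $\hat p(\ba\mid \bz)\,p^*(\bz\mid \bx)$ with $\hat p(\ba\mid\bz)\le K(\ba)$; your extra remarks (positivity of $g,\hat g$ via $\kappa_x(\ba)$, $\|\hat f\|_1=\hat g$, and the need for the fitted model to share the factorization so that $\hat p(\bz\mid\bx,\ba)=\hat f/\hat g$) are valid and simply make explicit what the paper leaves implicit.
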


\begin{proof}
Write $\hat p(\bz\mid \bx,\ba)=\hat f(\ba,\bz|\bx)/\hat g$ and $p^*(\bz\mid\bx,\ba)=f(\ba,\bz|\bx)/g$ with $g=p^*(\ba\mid \bx)$ and $\hat g=\hat p(\ba\mid \bx)$.
Then
\[
\left\|\frac{\hat f}{\hat g}-\frac{f}{g}\right\|_1
\le
\left\|\frac{\hat f-f}{g}\right\|_1+\left\|\hat f\left(\frac{1}{\hat g}-\frac{1}{g}\right)\right\|_1
=
\frac{1}{g}\|\hat f-f\|_1+\|\hat f\|_1\left|\frac{1}{\hat g}-\frac{1}{g}\right|.
\]
Using $\left|\frac{1}{\hat g}-\frac{1}{g}\right|=\frac{|\,\hat g-g\,|}{g\hat g}$
and dividing by $2$ yields the stated TV bound.
For the crude bound on $\|\hat f-f\|_1$, add and subtract $\hat p(\ba \mid \bz)p^*(\bz\mid \bx)$:
\[
\|\hat f-f\|_1
\le
\int \hat p(\ba \mid \bz)\,|\hat p(\bz\mid \bx)-p^*(\bz\mid \bx)|\,d\bz
+
\int p^*(\bz\mid \bx)\,|\hat p(\ba \mid \bz)-p^*(\ba \mid \bz)|\,d\bz,
\]
and apply $\hat p(\ba \mid \bz)\le K(\ba)$.
\end{proof}

\begin{assumption}[Gaussian latent working model stability]
\label{assump:gaussian_lipschitz}
Suppose the CI-StoNet working model for the latent $Z$ is Gaussian with fixed covariance:
\[
p(\bz\mid u;\btheta_1)=\mathcal N(\mu_1(u;\btheta_1),\sigma_z^2 I),
\]
where
$u=\bx$ for proxy, i.e. $p(\bz\mid \bx;\btheta_1)$.
Assume $\mu_1$ is uniformly Lipschitz in $\btheta_1$:
\[
\|\mu_1(u;\hat\btheta_1)-\mu_1(u;\btheta_1^*)\|_2\le L_1\|\hat\btheta_1-\btheta_1^*\|,\qquad \forall u.
\]
\end{assumption}

\begin{lemma}[Pinsker--Gaussian TV bound]
\label{lem:tv_gaussian_revised}
Under Assumption \ref{assump:gaussian_lipschitz}, for any $u$,
\[
\bigl\|p(\cdot\mid u;\hat\btheta_1)-p(\cdot\mid u;\btheta_1^*)\bigr\|_{\mathrm{TV}}
\le \frac{L_1}{2\sigma_z}\,\|\hat\btheta_1-\btheta_1^*\|.
\]
\end{lemma}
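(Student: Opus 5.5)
The plan is to combine Pinsker's inequality with the closed-form KL divergence between two Gaussians that share a covariance. Fix $u$ and write $\hat\mu:=\mu_1(u;\hat\btheta_1)$ and $\mu^*:=\mu_1(u;\btheta_1^*)$. Under Assumption \ref{assump:gaussian_lipschitz}, both conditionals are $\mathcal N(\cdot,\sigma_z^2 I_{d_z})$ with the same covariance. The KL divergence therefore reduces to the scaled squared distance between the means:
\[
\mathrm{KL}\bigl(p(\cdot\mid u;\hat\btheta_1),\,p(\cdot\mid u;\btheta_1^*)\bigr)=\frac{\|\hat\mu-\mu^*\|_2^2}{2\sigma_z^2}.
\]
This follows by expanding the log-ratio of the two densities. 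The normalizing constants and the trace and log-determinant terms cancel because the covariances coincide, which leaves the quadratic term in the mean difference.

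Next I apply Pinsker's inequality in the form $\|P-Q\|_{\mathrm{TV}}\le\sqrt{\tfrac12\mathrm{KL}(P,Q)}$, with TV defined as half the $L^1$ distance, consistent with Lemma \ref{lem:tv_proxy_ratio}. This gives $\|\cdot\|_{\mathrm{TV}}\le \|\hat\mu-\mu^*\|_2/(2\sigma_z)$. Inserting the uniform Lipschitz bound $\|\hat\mu-\mu^*\|_2\le L_1\|\hat\btheta_1-\btheta_1^*\|$ then yields the stated bound. Since the Lipschitz constant $L_1$ does not depend on $u$, the bound holds uniformly in $u$.

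The only point needing care is the normalization convention, since the constant $1/2$ depends on it. With TV taken as half the $L^1$ norm, Pinsker's inequality gives exactly $\frac{1}{2\sigma_z}$. If TV were instead defined as the full $L^1$ norm, the constant would be $\frac{1}{\sigma_z}$, so I will state the convention explicitly.

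As a sanity check, the exact formula $\mathrm{TV}=2\Phi(\|\hat\mu-\mu^*\|/(2\sigma_z))-1\le \|\hat\mu-\mu^*\|/(\sqrt{2\pi}\sigma_z)$ gives an even slightly sharper constant, so the claimed inequality holds under either route. I do not expect any real obstacle; the argument is routine once the convention is fixed.
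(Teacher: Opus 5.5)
Your proposal is correct and follows the paper's proof essentially verbatim. Both combine three ingredients: Pinsker's inequality with TV taken as half the $L^1$ distance, the equal-covariance Gaussian KL $\|\hat\mu-\mu^*\|_2^2/(2\sigma_z^2)$, and the uniform Lipschitz bound on $\mu_1$ from Assumption \ref{assump:gaussian_lipschitz}; your remarks on the normalization convention and the sharper exact-TV constant are accurate but not needed for the stated bound.
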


\begin{proof}
By Pinsker's inequality,
$\|P-Q\|_{\mathrm{TV}}\le \sqrt{\frac12 D_{\mathrm{KL}}(P\|Q)}$.
For Gaussians with common covariance $\sigma_z^2 I$,
\[
D_{\mathrm{KL}}\!\left(\mathcal N(\mu^*,\sigma_z^2 I)\,\|\,\mathcal N(\hat\mu,\sigma_z^2 I)\right)
=\frac{\|\mu^*-\hat\mu\|_2^2}{2\sigma_z^2}.
\]
Hence
\[
\|P-Q\|_{\mathrm{TV}}
\le \sqrt{\frac12\cdot \frac{\|\mu^*-\hat\mu\|_2^2}{2\sigma_z^2}}
=\frac{\|\mu^*-\hat\mu\|_2}{2\sigma_z}
\le \frac{L_1}{2\sigma_z}\|\hat\btheta_1-\btheta_1^*\|,
\]
where the last step uses Assumption \ref{assump:gaussian_lipschitz}.
\end{proof}

\begin{theorem}[Finite-sample bound with explicit overlap inflation]
\label{thm:finite_sample_revised}
Fix $\ba\in\mathcal A$ and suppose  Assumptions \ref{assump:bounded_overlap_revised}--\ref{assump:gaussian_lipschitz} hold.
Let the sample-conditional causal target be
\[
\Phi_n^*(\ba):=\frac{1}{n}\sum_{i=1}^n \mathbb E\!\left[Y(\ba)\mid \bx_i\right]
=
\frac{1}{n}\sum_{i=1}^n \int m^*(\bz,\ba)\,p^*(\bz\mid \bx_i)\,d\bz.
\]
Define the plug-in functional
\[
\Phi_n(\hat\btheta;a):=\frac{1}{n}\sum_{i=1}^n \int \hat m(\bz,\ba)\,\hat p(\bz\mid \bx_i)\,d\bz,
\]
and the Monte Carlo approximation
\[
\hat\Phi_{n,M}(\ba):=\frac{1}{nM}\sum_{i=1}^n\sum_{\ell=1}^M \hat m(\bz_i^{(\ell)},\ba),
\qquad \bz_i^{(\ell)}\stackrel{\text{i.i.d.}}{\sim}\hat p(\bz\mid \bx_i).
\]
Then for any $\delta\in(0,1)$, with probability at least $1-\delta$,
\[
|\hat\Phi_{n,M}(\ba)-\Phi_n^*(\ba)|
\le
C_m\sqrt{\frac{2\log(2/\delta)}{nM}}
\;+\;
|\Phi_n(\hat\btheta;\ba)-\Phi_n^*(\ba)|.
\]
Moreover, define 
\[
\begin{split}
& w_i^*(\bz,\ba):=\frac{p^*(\ba\mid \bx_i)}{p^*(\ba \mid \bz)},\qquad 
\hat w_i(\bz,\ba):=\frac{\hat p(\ba\mid \bx_i)}{\hat p(\ba \mid \bz)}, \\
& \epsilon_{az}(\ba):=\sup_z|\hat p(\ba \mid \bz)-p^*(\ba \mid \bz)|,\quad
\epsilon_{ax}(\ba):=\frac1n\sum_{i=1}^n |\hat p(\ba\mid \bx_i)-p^*(\ba\mid \bx_i)|.
\end{split}
\]
Then
\[
\begin{split} 
|\Phi_n(\hat\btheta;\ba)-\Phi_n^*(\ba)|
& \le
\underbrace{\frac{2C_m C_{ax}(\ba)}{\kappa(\ba)}\cdot \frac1n\sum_{i=1}^n
\|\hat p(\bz\mid \bx_i,\ba)-p^*(\bz\mid \bx_i,\ba)\|_{\mathrm{TV}}}_{\text{distribution/reconstruction error (overlap inflated)}} \\
& \;+\;
\underbrace{\frac{C_{ax}(\ba)}{\kappa(\ba)}\,\epsilon_m(\ba)}_{\text{outcome nuisance (overlap inflated)}}
\;+\;
\underbrace{\frac{C_m}{\kappa(\ba)}\,\epsilon_{ax}(\ba)+\frac{C_m C_{ax}(\ba)}{\kappa(\ba)^2}\,\epsilon_{az}(\ba)}_{\text{weight error (overlap inflated)}}.
\end{split}
\]
\end{theorem}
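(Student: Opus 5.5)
The plan mirrors the proof of Theorem \ref{thm:finite_sample_simple_Yest1_bridge}, with the Bayes-bridge replaced by the proxy Bayesian identity (\ref{lemmaA6}).

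\textbf{Monte Carlo term.} Conditional on $(\bx_i)_{i=1}^n$ and the fitted model, the draws $\bz_i^{(\ell)}$ are independent across $(i,\ell)$. The summands $\hat m(\bz_i^{(\ell)},\ba)$ lie in $[-C_m,C_m]$ and have conditional mean $\int \hat m(\bz,\ba)\hat p(\bz\mid\bx_i)d\bz$, so their average has mean $\Phi_n(\hat\btheta;\ba)$. Hoeffding's inequality, exactly as in Step 2 of that proof, gives the $C_m\sqrt{2\log(2/\delta)/(nM)}$ term. The triangle inequality through $\Phi_n(\hat\btheta;\ba)$ then yields the first displayed bound.

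\textbf{Writing $p(\bz\mid\bx_i)$ through $p(\bz\mid\bx_i,\ba)$.} By (\ref{lemmaA6}), which uses Assumption \ref{assump:Ax_given_Z},
\[
p^*(\bz\mid\bx_i)=p^*(\bz\mid\bx_i,\ba)\,w_i^*(\bz,\ba),\qquad \hat p(\bz\mid\bx_i)=\hat p(\bz\mid\bx_i,\ba)\,\hat w_i(\bz,\ba).
\]
For the estimated model I take (\ref{lemmaA6}) to hold by construction, since the working model shares the factorization. Assumption \ref{assump:bounded_overlap_revised} then gives $0<w_i^*,\hat w_i\le C_{ax}(\ba)/\kappa(\ba)$. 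For each $i$ I decompose
\[
\hat m\hat p_{\cdot\mid\bx_i,\ba}\hat w_i-m^*p^*_{\cdot\mid\bx_i,\ba}w_i^*
=\hat m(\hat p_{\cdot\mid\bx_i,\ba}-p^*_{\cdot\mid\bx_i,\ba})\hat w_i+(\hat m-m^*)p^*_{\cdot\mid\bx_i,\ba}\hat w_i+m^*p^*_{\cdot\mid\bx_i,\ba}(\hat w_i-w_i^*),
\]
integrate in $\bz$, and average over $i$.

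\textbf{Bounding the three pieces.} The first piece is at most $C_m\cdot (C_{ax}/\kappa)\cdot \|\hat p-p^*\|_1=(2C_mC_{ax}/\kappa)\|\cdot\|_{\mathrm{TV}}$. The second is at most $\epsilon_m(\ba)\,C_{ax}/\kappa$, since $p^*(\cdot\mid\bx_i,\ba)$ integrates to one. For the third, I would write
\[
\hat w_i-w_i^*=\frac{\hat p(\ba\mid\bx_i)-p^*(\ba\mid\bx_i)}{\hat p(\ba\mid\bz)}+p^*(\ba\mid\bx_i)\,\frac{p^*(\ba\mid\bz)-\hat p(\ba\mid\bz)}{\hat p(\ba\mid\bz)\,p^*(\ba\mid\bz)}.
\]
This gives $|\hat w_i-w_i^*|\le |\hat p(\ba\mid\bx_i)-p^*(\ba\mid\bx_i)|/\kappa+C_{ax}\epsilon_{az}/\kappa^2$. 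Multiplying by $C_m$, integrating against $p^*(\cdot\mid\bx_i,\ba)$, and averaging over $i$ yields $(C_m/\kappa)\epsilon_{ax}+(C_mC_{ax}/\kappa^2)\epsilon_{az}$.

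\textbf{Main obstacle.} The delicate step is the weight bound. One must ensure the lower bounds $\hat p(\ba\mid\bz)\ge\kappa$ and $p^*(\ba\mid\bz)\ge\kappa$ hold uniformly on the relevant support, which Assumption \ref{assump:bounded_overlap_revised}(2) provides via $\kappa\le\kappa_z$. One must also justify that the working model obeys the same Bayes identity, so that $\hat p(\bz\mid\bx_i)=\hat p(\bz\mid\bx_i,\ba)\hat w_i$ holds exactly. If needed, I would simply define $\hat p(\bz\mid\bx_i,\ba)$ by that identity, which is consistent with Lemma \ref{lem:tv_proxy_ratio}. Constants other than these follow by direct substitution.
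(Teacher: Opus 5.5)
Your proof is correct and follows essentially the same route as the paper. The paper applies Hoeffding to the Monte Carlo term, uses the proxy Bayes identity to rewrite both $p^*(\bz\mid\bx_i)$ and $\hat p(\bz\mid\bx_i)$ as $p(\bz\mid\bx_i,\ba)$ times a weight, and then splits the difference into a distribution term, an outcome-nuisance term and a weight term, each bounded by sup norms.

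Your split $(\hat m-m^*)\hat w_i+m^*(\hat w_i-w_i^*)$ is an exact identity. The paper's written split $(\hat m-m^*)w_i^*+m^*(\hat w_i-w_i^*)$ silently drops the cross term $(\hat m-m^*)(\hat w_i-w_i^*)$. Because $\hat w_i$ and $w_i^*$ share the bound $C_{ax}(\ba)/\kappa(\ba)$, the final constants are the same either way.
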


\begin{proof}
We analyze the error $|\hat{\Phi}_{n, M}(\ba) - \Phi^*_n(\ba)|$,
\begin{equation*}
    |\hat{\Phi}_{n, M}(\ba) - \Phi^*_n(\ba)| \leq \underbrace{|\hat{\Phi}_{n, M}(\ba) - \Phi_n(\hat{\btheta};\ba)|}_{A_1} + \underbrace{|\Phi_n(\hat{\btheta};\ba) - \Phi^*_n(\ba)|}_{A_2}
\end{equation*}
where $A_1$ represent the the error attributed to Monte-Carlo approximation, and $A_2$ represents the error attributed to estimation of nuisance function.
 
By Hoeffding's inequality, we have,
\begin{equation*}
    P( A_1 > \epsilon) \leq 2\exp{-\frac{ n M\epsilon^2}{2C_m^2}}
\end{equation*}
Equivalently, with probability at least $(1-\delta)$,
\begin{equation*}
    A_1 \leq C_m \sqrt{\frac{2\log(2/\delta)}{nM}}
\end{equation*}
By (\ref{lemmaA6}), we have 
\[
p^*(\bz\mid \bx_i)=p^*(\bz\mid \bx_i,\ba)\,\frac{p^*(\ba\mid \bx_i)}{p^*(\ba \mid \bz)}
=p^*(\bz\mid \bx_i,\ba)\,w_i^*(\bz,\ba),
\]
and similarly $\hat p(\bz\mid \bx_i)=\hat p(\bz\mid \bx_i,\ba)\hat w_i(\bz,\ba)$.
Therefore
\[
\begin{split}
\Phi_n^*(\ba) & =\frac1n\sum_{i=1}^n \int m^*(\bz,\ba)\,p^*(\bz\mid \bx_i,\ba)\,w_i^*(\bz,\ba)\,d\bz, \\ 
\Phi_n(\hat\btheta;\ba)& =\frac1n\sum_{i=1}^n \int \hat m(\bz,\ba)\,\hat p(\bz\mid \bx_i,\ba)\,\hat w_i(\bz,\ba)\,d\bz.
\end{split}
\]
For the term $A_2$, we have 
\[
A_2\le \frac1n\sum_{i=1}^n \left|\int \Big(\hat m\,\hat p_{ia}\hat w_i-m^*\,p^*_{ia} w_i^*\Big)\,d\bz\right|,
\]
where $\hat p_{ia}=\hat p(\bz\mid \bx_i,\ba)$ and $p^*_{ia}=p^*(\bz\mid \bx_i,\ba)$.
Add and subtract $\hat m\,p^*_{ia}\hat w_i$ to get $A_2\le B_1+B_2$ with
\[
B_1:=\frac1n\sum_{i=1}^n\left|\int \hat m(\bz,\ba)\hat w_i(\bz,\ba)\bigl(\hat p_{ia}(\bz)-p^*_{ia}(\bz)\bigr)\,d\bz\right|,
\]
\[
B_2:=\frac1n\sum_{i=1}^n\left|\int \bigl(\hat m(\bz,\ba)\hat w_i(\bz,\ba)-m^*(\bz,\ba)w_i^*(\bz,\ba)\bigr)p^*_{ia}(\bz)\,d\bz\right|.
\]

\emph{Bound $B_1$.}
By Assumption \ref{assump:bounded_overlap_revised},
\[
\|\hat w_i(\cdot,\ba)\|_\infty
\le \frac{\hat p(\ba\mid \bx_i)}{\inf_z \hat p(\ba \mid \bz)}
\le \frac{C_{ax}(\ba)}{\kappa(\ba)}.
\]
Thus
\[
B_1
\le \frac1n\sum_{i=1}^n \|\hat m(\cdot,\ba)\|_\infty \|\hat w_i(\cdot,\ba)\|_\infty \|\hat p_{ia}-p^*_{ia}\|_1
\le \frac{2C_m C_{ax}(\ba)}{\kappa(\ba)}\cdot \frac1n\sum_{i=1}^n \|\hat p_{ia}-p^*_{ia}\|_{\mathrm{TV}}.
\]

\emph{Bound $B_2$.}
Decompose $\hat m\hat w_i-m^*w_i^*=(\hat m-m^*)w_i^*+m^*(\hat w_i-w_i^*)$.
For the first part,
\[
\frac1n\sum_{i=1}^n\left|\int (\hat m-m^*)w_i^*\,p^*_{ia}\,d\bz\right|
\le \frac1n\sum_{i=1}^n \|w_i^*\|_\infty\,\mathbb E_{p^*_{ia}}[|\hat m-m^*|]
\le \frac{C_{ax}(\ba)}{\kappa(\ba)}\,\epsilon_m(\ba).
\]
For the second part, using $|\frac{1}{u}-\frac{1}{v}|\le \frac{|u-v|}{\inf(u,v)^2}$ and Assumption \ref{assump:bounded_overlap_revised},
\[
\begin{split} 
|\hat w_i(\bz,\ba)-w_i^*(\bz,\ba)|
& =\left|\frac{\hat p(\ba\mid \bx_i)}{\hat p(\ba \mid \bz)}-\frac{p^*(\ba\mid \bx_i)}{p^*(\ba \mid \bz)}\right| \\ 
& \le
\frac{|\hat p(\ba\mid \bx_i)-p^*(\ba\mid \bx_i)|}{\kappa(\ba)}
+\frac{C_{ax}(\ba)}{\kappa(\ba)^2}\,|\hat p(\ba \mid \bz)-p^*(\ba \mid \bz)|. 
\end{split}
\]
Therefore,
\[
\frac1n\sum_{i=1}^n\left|\int m^*(\bz,\ba)\bigl(\hat w_i(\bz,\ba)-w_i^*(\bz,\ba)\bigr)p^*_{ia}(\bz)\,d\bz\right|
\le
\frac{C_m}{\kappa(\ba)}\,\epsilon_{ax}(\ba)
+\frac{C_m C_{ax}(\ba)}{\kappa(\ba)^2}\,\epsilon_{az}(\ba).
\]
Combining the bounds for $B_1$ and $B_2$ gives the stated proxy inequality.
\end{proof}

\begin{remark}[Overlap inflation]
\label{rem:overlap_inflation}
In Theorem \ref{thm:finite_sample_revised}, the finite-sample bound contains explicit inverse-overlap factors
$1/\kappa(\ba)$ and $1/\kappa(\ba)^2$, so for fixed nuisance estimation errors
$(\epsilon_m,\epsilon_{ax},\epsilon_{az})$ the bound worsens as $\kappa(\ba)\downarrow 0$.
In addition, Lemma \ref{lem:tv_proxy_ratio} shows that controlling
$\|\hat p(\bz\mid \bx,\ba)-p^*(\bz\mid \bx,\ba)\|_{\mathrm{TV}}$ can introduce multiplicative dependence on
$K(\ba)=\sup_z p(\ba \mid \bz)$ (and its estimated counterpart), which is particularly relevant when $A$ is continuous.
\end{remark}

\section{Experimental Settings}\label{sect:settings}
\subsection{Simulated Examples}
\subsubsection{Missing confounders}
For case with missing confounders, the hidden layers of the network consists of two modules. The first module takes the treatment variables as input and imputes the latent confounder, and the second module takes the concatenated vector of the imputed confounder and the treatment as input to model the outcome. For separable confounding and non-separable scenario, the first module contains two layers with size 32 and 6, and the second layer contains two layers with size 8 and 4. The variance of the noise term $\be_z$ and $\be_y$ in \eqref{eq:stonet-multiple-treatment} are set as $10^{-5}$ and $10^{-3}$, respectively. The training consists of three stages - pre-training, training, and finetuning after pruning, with epochs being 100, 500, and 100, respectively. The network is trained like a plain vanilla DNN for pre-training and training, but the decay of imputation learning rate $\epsilon_k$ and network parameter learning rate $\gamma_{k}$ only starts at training. After training, the network is pruned and refined during the fine-tuning stage with smaller learning rate. Finetuning stage is usually optional and doesn't have dramatic improvement to the overall performance. 

The initial imputation learning rate $\epsilon$ is set at $5 \times 10^{-4}$ for non-separable confounding and $10^{-3}$ for separable confounding, and decays with $\epsilon_k = \frac{\epsilon_k}{ 1 + \epsilon_k \times k ^ {0.95}}$. The initial parameter learning rate $\gamma$ is set as $5\times 10^{-7}$ and $5 \times 10^{-6}$, for the first module and the second module, respectively, and decays with $\gamma_k = \frac{\gamma_k}{ 1 + \gamma_k \times k ^ {0.7}}$. For the mixture Gaussian prior \ref{eq:mixtureprior}, $\lambda_n = 10^{-6}$, $\sigma_0^2 = 10^{-4}$, and $\sigma_1^2 = 10^{-1}$. 

\subsubsection{Proxy Variable}
For case with proxy variable, the hidden layers of the network consists of three modules. The first module takes the proxy variables as input and imputes the latent confounder, the second module takes the concatenated vector of the imputed confounder as input to model the treatment variable, and the third module takes the treatment variable as input and model the outcome. The first module contains two layers with size 64 and 32, the second layer contains one layer with size 16, and the third layer contains one layer with size 8. 

The variance of the noise term ${\be}_z$, $\be_a$, and $\be_y$ in \eqref{eq:stonet-proxy} are set as $10^{-5}$, $10^{-4}$, and $10^{-3}$, respectively. The training consists of three stages - pre-training, training, and finetuning after pruning, with epochs being 50, 100, and 50, respectively. 

The initial imputation learning rate are $\epsilon_1 = 10^{-3}$ and $\epsilon_2 = 10^{-4}$, and decays with $\epsilon_k = \frac{\epsilon_k}{ 1 + \epsilon_k \times k ^ {0.8}}$. The initial parameter learning rates are set as $\gamma_1 = 5 \times 10^{-6}$, $\gamma_2 = 5 \times 10^{-5}$, and $\gamma_3 = 5 \times 10^{-7}$, for three modules, respectively, and decays with $\gamma_k = \frac{\gamma_k}{ 1 + \gamma_k \times k ^ {0.6}}$. For the mixture Gaussian prior \eqref{eq:mixtureprior}, $\lambda_n = 10^{-6}$, $\sigma_0^2 = 10^{-4}$, and $\sigma_1^2 = 10^{-2}$. 

\subsection{Benchmark Dataset}
The network structures for benchmark dataset is similar to proxy variable.
\subsubsection{ACIC}
The first module contains two layers with size 64 and 32, the second layer contains one layer with size 16, and the third layer contains one layer with size 8.

The variance of the noise term ${\be}_z$, $\be_a$, and $\be_y$ in \eqref{eq:stonet-proxy} are set as $10^{-5}$, $10^{-4}$, and $10^{-3}$, respectively. The training consists of three stages - pre-training, training, and finetuning after pruning, with epochs being 50, 100, and 50, respectively. 

The initial imputation learning rate are $\epsilon_1 = 5 \times 10^{-3}$ and $\epsilon_2 = 5 \times 10^{-4}$, and decays with $\epsilon_k = \frac{\epsilon_k}{ 1 + \epsilon_k \times k ^ {0.8}}$. The initial parameter learning rates are set as $\gamma_1 = 10^{-6}$, $\gamma_2 = 10^{-5}$, and $\gamma_3 = 10^{-7}$, for three modules, respectively, and decays with $\gamma_k = \frac{\gamma_k}{ 1 + \gamma_k \times k ^ {0.6}}$. For the mixture Gaussian prior \eqref{eq:mixtureprior}, $\lambda_n = 10^{-6}$, $\sigma_0^2 = 2 \times 10^{-4}$, and $\sigma_1^2 = 10^{-2}$. 

\subsubsection{Twins}
The first module contains two layers with size 64 and 32, the second layer contains one layer with size 16, and the third layer contains one layer with size 8.

The variance of the noise term ${\be}_z$, $\be_a$, and $\be_y$ in \eqref{eq:stonet-proxy} are set as $10^{-3}$, $10^{-5}$, and $10^{-7}$, respectively. The training consists of three stages - pre-training, training, and finetuning after pruning, with epochs being 100, 1000, and 200, respectively. 

The initial imputation learning rate are $\epsilon_1 = 3 \times 10^{-3}$ and $\epsilon_2 = 5 \times 10^{-5}$, and decays with $\epsilon_k = \frac{\epsilon_k}{ 1 + \epsilon_k \times k ^ {0.8}}$. The initial parameter learning rates are set as $\gamma_1 = 10^{-3}$, $\gamma_2 = 10^{-5}$, and $\gamma_3 = 10^{-10}$, for three modules, respectively, and decays with $\gamma_k = \frac{\gamma_k}{ 1 + \gamma_k \times k ^ {0.95}}$. For the mixture Gaussian prior \eqref{eq:mixtureprior}, $\lambda_n = 10^{-6}$, $\sigma_0^2 = 2 \times 10^{-5}$, and $\sigma_1^2 = 10^{-2}$.

\end{document}